\documentclass[a4paper,11pt]{article}
\pdfoutput=1 

\usepackage{jheppub} 

\usepackage[T1]{fontenc} 

\usepackage{float, array, xspace, amscd, amsthm}
\usepackage{fancyhdr}
\usepackage{longtable}
\newcommand{\nn}{\nonumber}
\newcommand{\dd}{{\rm d}}
\newcommand{\w}{\wedge}

\newtheorem{lemma}{Lemma}
\newtheorem{proposition}{Proposition}



\newcommand{\IR}{\mathbb{R}}


\newcommand{\be}{\begin{equation}}
\newcommand{\ee}{\end{equation}}
\def\bea#1\eea{\begin{align}#1\end{align}}

\usepackage{color}

\setcounter{tocdepth}{2}


\preprint{UUITP-11/14}

\title{
Exploring $SU(3)$ Structure Moduli Spaces with Integrable $G_2$ Structures}

\author[a]{Xenia de la Ossa,}
\author[b]{Magdalena Larfors,}
\author[a,c,d,e]{Eirik E.~Svanes}

\affiliation[a]{Mathematical Institute, Oxford University\\Andrew Wiles Building, Woodstock Road\\Oxford OX2 6GG, UK }
\affiliation[b]{Department of Physics and Astronomy,Uppsala University\\ SE-751 20 Uppsala, Sweden}
\affiliation[c]{Sorbonne Universit\'es, UPMC Univ. Paris 06, UMR 7589, LPTHE, F-75005, Paris, France}
\affiliation[d]{CNRS, UMR 7589, LPTHE, F-75005, Paris, France}
\affiliation[e]{Sorbonne Universit\'es, Institut Lagrange de Paris, 98 bis Bd Arago, 75014 Paris, France\\}

\emailAdd{delaossa@maths.ox.ac.uk, magdalena.larfors@physics.uu.se, esvanes@lpthe.jussieu.fr
}

\abstract{We study the moduli space of $SU(3)$ structure manifolds $X$ that form the internal compact spaces in four-dimensional $N=\textstyle{\frac{1}{2}}$ domain wall solutions of heterotic supergravity with flux. Together with the direction perpendicular to the four-dimensional domain wall, $X$ forms a non-compact 7-manifold $Y$ with torsionful $G_2$ structure. We use this $G_2$ embedding to explore how $X(t)$ varies along paths $C(t)$ in the $SU(3)$ structure moduli space. Our analysis includes the Bianchi identities which strongly constrain the flow. We show that requiring that the $SU(3)$ structure torsion is preserved along the path leads to constraints on the $G_2$ torsion and the embedding of $X$ in $Y$. Furthermore, we study flows along which the torsion classes of $X$ go from zero to non-zero values. In particular, we present evidence that the flow of half-flat $SU(3)$ structures may contain Calabi--Yau loci, in the presence of non-vanishing $H$-flux. }
\begin{document}

\maketitle
\flushbottom

\newpage
\section{Introduction}

Compactifications of string theory provide one of the most fruitful grounds for the study of the theory's formal and phenomenological aspects. String compactifications can also be used as a tool to study properties of compact manifolds, and has been instrumental in the study of Calabi--Yau manifolds. In particular, the demand that supersymmetry is preserved leads to severe constraints on the metric of the compactified space. In heterotic string theory, the conditions for $N=1$ supersymmetric, maximally symmetric, four-dimensional vacua have long been known: with vanishing flux $H$, the internal 6-manifold must be Calabi--Yau \cite{Candelas:1985en}, whereas a non-zero $H$-flux requires the internal geometry to be complex non-K\"ahler \cite{Strominger:1986uh,Hull:1986kz} (see \cite{Lust:1986ix,Dasgupta:1999ss,Ivanov:2000fg,Becker:2002sx,Becker:2003sh,Becker:2003yv,Gauntlett:2002sc,Cardoso:2002hd,Gauntlett:2003cy,Gran:2005wf,Becker:2006xp,Gran:2007kh,Ivanov:2009rh} for further discussions). These two types of 6-manifolds both allow a globally defined, nowhere vanishing spinor $\eta$, that can be used to decompose the ten-dimensional supercharge $\epsilon$ of heterotic supergravity into internal and external components, $
\epsilon = \eta \otimes \rho$. If the external spinor component $\rho$ is covariantly constant, it provides a four-dimensional supercharge and guarantees that the four-dimensional vacuum has $N=1$ supersymmetry.
 
All six-dimensional manifolds with one nowhere vanishing spinor have $SU(3)$ structure, and so their geometry is completely specified by a real two-form $\omega$ and a complex three-form $\Psi$ that need not be closed \cite{Hitchin:2000jd,chiossi}. The non-closure of  $\omega$ and $\Psi$ determines the intrinsic torsion of the geometry. In this language, Calabi--Yau manifolds correspond to torsion-free $SU(3)$ structures, and the spaces that are solutions to the Strominger system \cite{Strominger:1986uh,Hull:1986kz} have torsion components transforming in a particular irreducible $SU(3)$ representation \cite{Cardoso:2002hd,Gauntlett:2003cy}.\footnote{$SU(3)$ structure manifolds are also relevant for (supersymmetric) type II compactifications with flux, and reviews of this topic can be found in \cite{Grana:2005jc,Blumenhagen:2006ci,Koerber:2010bx}. A summary of the torsion constraints that also includes non-supersymmetric vacua can be found in section 2 of \cite{Larfors:2013zva}.}

If the restrictions on the torsion of the internal $SU(3)$ structure manifold are relaxed, the resulting vacuum will break supersymmetry. More general $SU(3)$ structure manifolds can thus provide interesting non-supersymmetric vacua of heterotic string theory, with the benefit that the $SU(3)$ structure guarantees an $N=1$ four-dimensional effective field theory description of the low-energy dynamics. A simple class of such vacua, that will be studied in this paper, are half-BPS domain wall solutions in four dimensions, that preserve $N=\textstyle{\frac{1}{2}}$ supersymmetry. As has been shown recently \cite{Gray:2012md}, the $N=\textstyle{\frac{1}{2}}$ supersymmetry constraints put very mild restrictions on the intrinsic torsion; for the most general $H$-flux preserving the symmetry of the ansatz, almost all torsion components can be balanced by the appropriate flux (for studies with restricted flux, see \cite{Lukas:2010mf,Klaput:2011mz,Klaput:2012vv,Klaput:2013nla} for a ten-dimensional perspective, and  \cite{Gurrieri:2004dt,Micu:2004tz,deCarlos:2005kh,Gurrieri:2007jg,Micu:2009ci} for four-dimensional studies). This heterotic set-up can thus be used to study the properties of many different types of $SU(3)$ structure manifolds.

\begin{figure}[htb]
\centering
\includegraphics[width=0.8\textwidth]{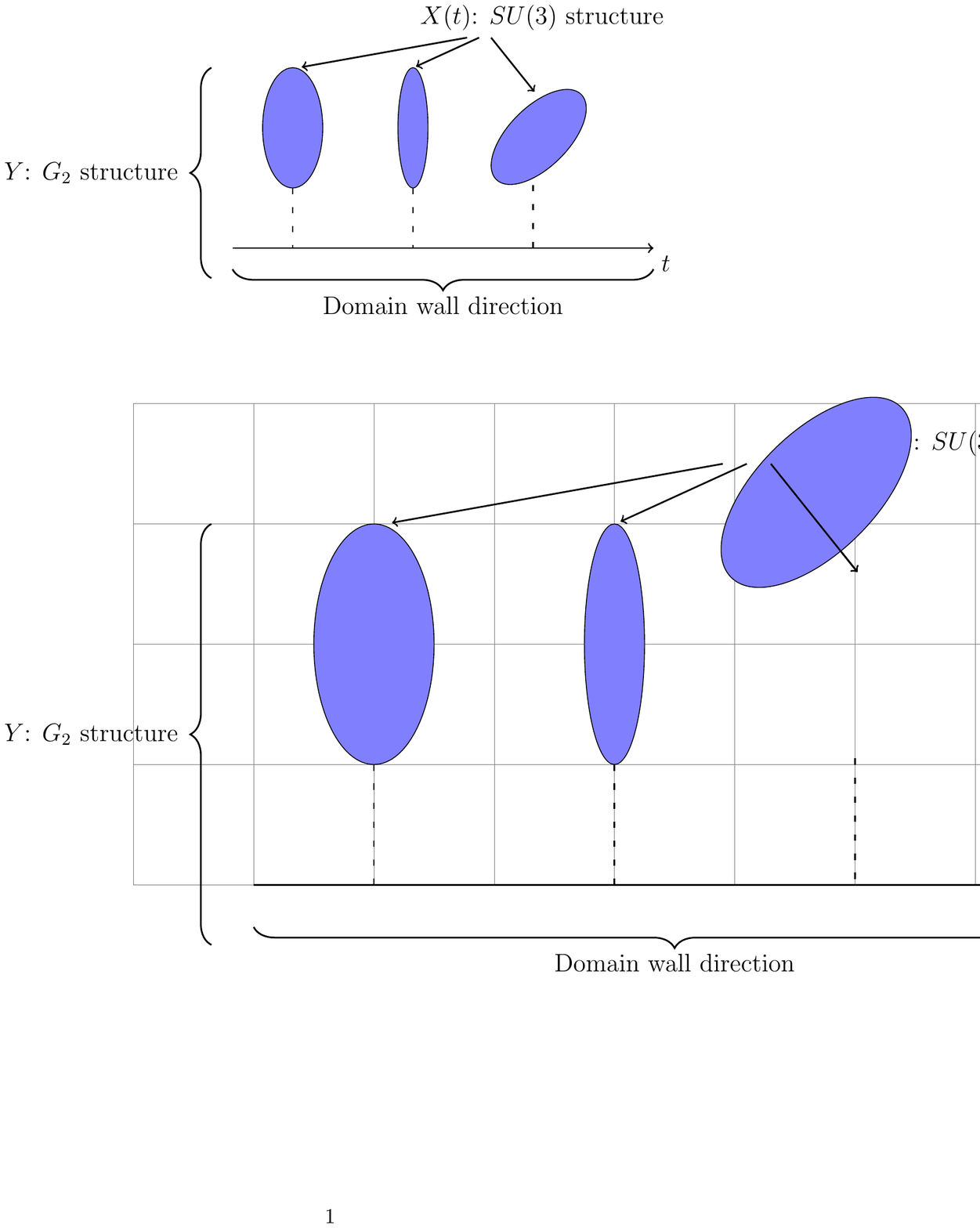}
\caption{The heterotic $N=\textstyle{\frac{1}{2}}$ domain wall solutions can be viewed as warped product of a compact $SU(3)$ structure 6-fold $X(t)$ with a four-dimensional domain wall spacetime, or as a the product of a non-compact $G_2$ manifold $Y$ with a maximally symmetric three-dimensional spacetime. The latter is suppressed in this picture.}
\label{fig:XY}
\end{figure}

The scope of this paper is to use heterotic $N=\textstyle{\frac{1}{2}}$ domain wall solutions to explore the moduli space of different $SU(3)$ structure manifolds. We restrict our study to the zeroth order $\alpha'$ approximation of heterotic string theory,\footnote{The heterotic gauge fields appear at first order in $\alpha'$, and are neglected in our study, as are the $\mathcal{O}(\alpha')$ corrections to the Bianchi identity of $H$. Note however that domain wall solutions avoid the usual no-go theorems for $H$-flux that appear at zeroth order in $N = 1$ solutions \cite{Ivanov:2000fg,Gauntlett:2003cy}, since the compactification is on a seven-dimensional {\it non-compact} manifold.} so that the bosonic part of the ten-dimensional supergravity action reduces to
\be
S = \frac{1}{2 \alpha'} \int \dd^{10} \, x \, e^{-2\phi} \sqrt{|G|} \left(R + 4(\partial \phi)^2 - \frac{1}{12} H^2  \right) \; ,
\ee
where $G$ is the ten-dimensional metric, $R$ the corresponding Ricci scalar, $\phi$ the dilaton, and $H = \dd B$ the flux of the Kalb--Ramond field $B$, satisfying the Bianchi identity $\dd H = 0$. We look for solutions where spacetime decomposes into a warped product of a compact $SU(3)$ structure manifold $X$ and a four-dimensional non-compact spacetime. The latter decomposes, for $N=\textstyle{\frac{1}{2}}$ domain wall solutions, into a maximally symmetric three-dimensional space along the domain wall world-volume and a direction perpendicular to the wall. Alternatively, as depicted in figure  \ref{fig:XY}, we may combine the direction perpendicular to the domain wall with the $SU(3)$ structure manifold to a non-compact seven-dimensional manifold $Y$ with $G_2$ structure \cite{Lukas:2010mf,Gray:2012md}. The spacetime metric thus has the form
\be
\dd s^2_{10} = \underbrace{\dd s^2_{3}}_\text{max. sym.} + \underbrace{N_t^2 (t, x) \dd^2 t + \underbrace{g_{mn}(t,x)\dd x^m \dd x^n}_\text{$SU(3)$ structure}}_\text{$G_2$ structure} \; ,
\ee
where the function $N_t$ and $SU(3)$ structure metric depend both on $t$ and the coordinates on $X$. For the $H$-flux, we allow all components that preserve the symmetries of the metric; $f \epsilon_{\alpha \beta \gamma}$ along the maximally symmetric three-dimensional spacetime, and $\widehat{H}$ along the seven-dimensional $G_2$ manifold.

The $N=\textstyle{\frac{1}{2}}$ supersymmetry constraints and the Bianchi identity for $H$ can, as we show in section \ref{sec:domainwall}, be reformulated as constraints on the intrinsic torsion of the $G_2$ structure. For the most general flux compatible with the spacetime symmetries, these conditions are met by a certain type of integrable $G_2$ structures. Consequently, the physical problem of finding $N=\textstyle{\frac{1}{2}}$ supersymmetric domain wall solutions in heterotic string theory translates into the mathematical problem of determining how the $SU(3)$ structure, dilaton and flux flow to form an integrable $G_2$ structure. The solutions are thus generalisations of Hitchin flow \cite{Hitchin:2000jd}, where a half-flat $SU(3)$ structure manifold varies over a perpendicular direction to form a seven-manifold with $G_2$ holonomy.  Indeed, the Hitchin flow is reproduced by the domain wall solutions when the flux is set to zero, and the dilaton is taken to be constant. Other types of $G_2$ flows have been discussed by Chiossi and Salamon \cite{chiossi}.
 
\begin{figure}[tb]
\centering
\includegraphics[width=0.6\textwidth]{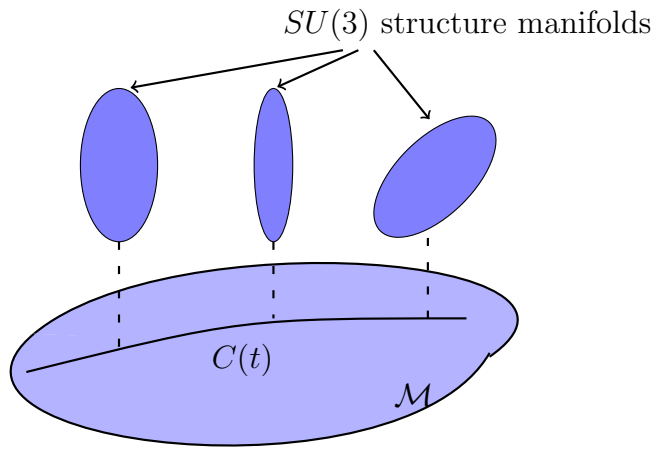}
\caption{The embedding of an $SU(3)$ structure  into an integrable $G_2$ structure induces a flow along a curve $C$ in the moduli space $\mathcal{M}$ of $SU(3)$ structure metrics, that is parameterised by the coordinate $t$.}
\label{fig:dw}
\end{figure}

As the $SU(3)$ structure flow along the domain wall direction, it will trace out a curve in the moduli space of $SU(3)$ structure manifolds, see figure \ref{fig:dw}. Consequently, through these constructions we uncover information about the parameter space of generic $SU(3)$ structure manifolds, which is highly non-trivial to study. For Calabi--Yau metrics, the parameters decompose into moduli corresponding to variations of the closed forms $\omega$ and $\Psi$: the former describe variations of the K\"ahler structure, and the latter variations of the complex structure, and the dimension of the two spaces is determined by the Hodge numbers $h^{(1,1)}$ and $h^{(2,1)}$ of the Calabi--Yau three-fold, respectively \cite{Candelas:1990pi}. On generic $SU(3)$ structure manifolds, neither $\omega$ nor $\Psi$ are closed, and the relevant parameters are only partially known (see \cite{Tseng:2009gr,Tseng:2010kt,Tseng:2011gv} for some recent progress). Indeed, care is needed when studying the moduli, since even very basic questions, such as the dimensionality of the parameter space, become subtle. As an example, the moduli space of deformations of $\omega$ in the Strominger system appears infinite-dimensional \cite{Becker:2006xp}, unless deformations of the $\mathcal{O}(\alpha')$-corrected Bianchi identity for $H$ are simultaneously taken into account \cite{Anderson:2014xha,delaOssa:2014cia}. 

In the present paper, the $G_2$ embedding constrains the allowed deformations, and thus simplifies the variational analysis. While this restriction means that questions regarding the number of parameters cannot be addressed, the setting is rich enough to tackle non-trivial questions such as the connections between the moduli spaces of Calabi--Yau and non-Calabi--Yau manifolds. With the most general $H$-flux, the heterotic constraints are solved by a wide range of $SU(3)$ structure manifolds, and different domain wall solutions may connect $SU(3)$ structures of different type. Through these constructions, we can thus study whether certain properties, such as an integrable complex structure, are preserved or violated by the flow, and whether torsion classes or flux components can be switched on and off by the flow. As a result, we will gain insight about the interconnections between the parameter spaces of different $SU(3)$ structure manifolds.

The rest of this paper is organised as follows. We start, in section \ref{sec:domainwall}, by presenting the integrable $G_2$ structures that are relevant for $N=\textstyle{\frac{1}{2}}$ domain wall solutions. We translate the supersymmetry conditions and Bianchi identities, which imply the equations of motion for the supergravity fields, into constraints on the $G_2$ torsion classes. In section \ref{sec:embedding}, we analyse the same equations from the perspective of the $SU(3)$ structure manifolds. We derive the constraints on the variations of the $SU(3)$ structure forms, and find that the flow of $\omega$ is completely determined by the torsion and flux, whereas some freedom remain in the variation of $\Psi$. In the following sections, we give examples of different types of $SU(3)$ structure flows that illustrate the intricacy of the moduli space of such manifolds. In section \ref{sec:CY}, we study the flow of Calabi--Yau manifolds with flux: we derive the constraints on the flow to preserve the Calabi--Yau properties, and perform a first-order analysis of the flow without these constraints that show that non-zero torsion is induced. In sections \ref{sec:NK} and \ref{sec:HF}, we study the flow of nearly K\"ahler and half-flat $SU(3)$ structures. We determine the constraints required to preserve the torsion along the flow, and study whether loci with vanishing torsion are possible. In section \ref{sec:conclusion} we summarise our results and discuss possible extensions of our analysis. Our conventions are summarised in appendix \ref{ap:conv}. In a separate paper \cite{OKS}, a complementary study of the embedding of the Strominger system into $G_2$ and Spin(7) manifolds will be presented.

\section{\texorpdfstring{$N=\textstyle{\frac{1}{2}}$}\ \  domain wall solution and \texorpdfstring{$G_2$}\ \ structures}
\label{sec:domainwall}

In this paper, we are interested in four-dimensional domain wall solutions of heterotic string theory that preserve $N=\textstyle{\frac{1}{2}}$ supersymmetry. Such configurations arise from heterotic compactifications on six-dimensional manifolds $X$ with $SU(3)$ structure. Alternatively, as shown in Figure \ref{fig:XY}, they can be viewed as three-dimensional maximally symmetric heterotic solutions that result from ``compactification'' on a non-compact seven-dimensional $G_2$ structure manifold $Y$, that is foliated by the $SU(3)$ six-manifolds. In this section, we describe how $N=\textstyle{\frac{1}{2}}$ supersymmetry determines the $G_2$ structure of $Y$. 

Any heterotic vacuum solution must satisfy the equations of motion and Bianchi identities of the low-energy supergravity description of the  theory. For supersymmetric solutions, the vanishing of the supersymmetry variations of fermionic fields lead to additional constraints. As described in detail in \cite{Lukas:2010mf,Gray:2012md}, to lowest order in the $\alpha'$ expansion, these constraints require the existence of a three-form $\varphi$ on $Y$, that must satisfy the  following constraints
\begin{align}
\dd_7\varphi &= 2\, \dd_7\phi\wedge\varphi -  *_7 \widehat H - f\, \psi~,\label{eq:susy1}\\
\dd_7\psi &= 2\, \dd_7\phi\wedge\psi~,\label{eq:susy2}\\
 *_7\dd_7\phi &= - \textstyle{\frac{1}{2}}\, \widehat H\wedge \varphi~,\label{eq:susy3}\\[1pt]
\textstyle{\frac{1}{2}}\,  *_7 f &= \widehat H\wedge\psi~,\label{eq:susy4}
\end{align}
where the three-form $\widehat H$ and the function $f$ are the components of the ten-dimensional flux $H = \dd B$, which lie along $Y$ and the three-dimensional, maximally symmetric domain wall world-volume, respectively.\footnote{The flux component $f$ determines the cosmological constant of the three-dimensional spacetime; a non-zero $f$ gives a anti de Sitter spacetime, while a vanishing $f$ gives Minkowsi spacetime.} $\phi$ is the ten-dimensional dilaton, and $\psi$ is the seven-dimensional Hodge dual of $\varphi$. $\dd_7$ and $*_7$ denote the exterior derivative and Hodge dual on $Y$, respectively (see appendix \ref{ap:conv} for further conventions).

 This system is equivalent to a $G_2$ structure \cite{FerGray82,chiossi} determined by $\varphi$
\begin{align}
\dd_7\varphi &= \tau_0\, \psi + 3\, \tau_1\wedge\varphi 
+  *_7 \tau_3~,\label{eq:one}\\
\dd_7\psi &= 4\, \tau_1\wedge\psi +  *_7\tau_2~.\label{eq:two}
\end{align}
with intrinsic torsion specified by  
\begin{align}
\tau_0 &= - \textstyle{\frac{15}{14}}\, f~, \label{eq:tildetau0}\\
\tau_1 &= \textstyle{\frac{1}{2}}\,\dd_7\phi~,\label{eq:tildetau1}\\
\tau_2 &= 0~, \label{eq:tilde2}\\
\tau_3 & = - H_\perp - \textstyle{\frac{1}{2}}\,\dd_7\phi\lrcorner\psi~,\quad
*_7\tau_3 = -  *_7 H_\perp 
+ \textstyle{\frac{1}{2}}\,\dd_7\phi\wedge\varphi~,\label{eq:tildetau3}
\end{align}
The $G_2$ torsion classes $\tau_p$, $p=0,...3$ are $p$-forms that transform in irreducible representations of $G_2$. The torsion class $\tau_3$ must satisfy the primitivity constraints
 \[ \varphi\wedge *_7\tau_3 = 0~,\qquad
 \tau_3\wedge\varphi = 0~, \]
and the $G_2$ stucture is called integrable when $\tau_2=0$, as is the case here. In the above equations, we have decomposed  $\widehat H$ into components that are parallel and orthogonal to $\varphi$: 
\begin{equation*}
 \widehat H = \textstyle{\frac{1}{14}}\, f\, \varphi + H_\perp~,
\qquad H_\perp\wedge\psi = 0~, \quad  *_7 H_\perp\wedge\varphi = 0~.
\end{equation*}
By further restricting the flux, the torsion class $\tau_0$ can be set to zero so that the $G_2$ structure is integrable conformally balanced \cite{Firedrich:2003}; this case is studied in \cite{Gauntlett:2002sc,Gran:2005wf,Lukas:2010mf}.

It is straight-forward to prove the equivalence between  \eqref{eq:susy1}-\eqref{eq:susy4} and \eqref{eq:one}-\eqref{eq:tildetau3}. The equations for $\tau_2$ and $\tau_1$ are obvious, whereas the remaining equations require some work. Comparing equations \eqref{eq:one} and \eqref{eq:susy1} we obtain
\begin{equation}
\tau_0\, \psi +  *_7\tau_3 = - f\, \psi -  *_7 \widehat H 
 + \textstyle{\frac{1}{2}}\,\dd_7\phi\wedge\varphi~.\label{eq:three}
 \end{equation}
Using the first primitivity constraint on $\tau_3$ in \eqref{eq:three} gives the relation
\[ (\tau_0 + f) \psi\wedge\varphi = -  *_7 \widehat H\wedge\varphi ~.\]
Decomposing $\widehat H$ as $\widehat H = h_{\varphi}\, \varphi + H_\perp$, with $H_\perp$ as above, we have $ *_7 \widehat H = h_{\varphi}\, \psi +  *_7 H_\perp$, and thus
\[ \tau_0 = - f - h_{\varphi}~.\]
Using this in equation \eqref{eq:three},  and taking the Hodge-dual, we find
 \[ \tau_3  
  = - H_\perp -  \textstyle{\frac{1}{2}}\, \dd_7\phi \lrcorner\psi~.\]
  The second primitivity constraint on $\tau_3$ now gives
 \[ 0 = - H_\perp\wedge\varphi 
 - \textstyle{\frac{1}{2}}\, (\dd_7\phi\lrcorner\psi)\wedge \varphi~.\]
It turns out that this equation is equivalent to \eqref{eq:susy3}.  In fact,
\[ (\dd_7\phi \lrcorner\psi)\wedge \varphi 
= -  *_7 (\dd_7\phi\wedge  *_7 \psi)\wedge \varphi 
= -  *_7 (\dd_7\phi\wedge \varphi)\wedge \varphi
= -  *_7 (\varphi\lrcorner(\dd_7\phi\wedge\varphi))
= 4\,  *_7 \dd_7\phi~.\]
The last equation \eqref{eq:susy4} gives
\[  \textstyle{\frac{1}{2}}\, f =  *_7 (\widehat H\wedge\psi) = 
h_{\varphi}\,  *_7 (\varphi\wedge\psi) =  7 h_{\varphi}~,\]
and therefore
\[ h_{\varphi} = \textstyle{\frac{1}{14}}\, f~, \quad
\mbox{and} \quad
\tau_0 = - \textstyle{\frac{15}{14}}\, f~.\]
This concludes the proof of the equivalence between \eqref{eq:susy1}-\eqref{eq:susy4} and \eqref{eq:one}-\eqref{eq:tildetau3}.
For future reference, we record the inverse relations for $\phi$, $f$ and $\widehat H$ in terms of the torsion classes
\be
\begin{split}
\label{eq:torinverse}
f &= - \textstyle{\frac{14}{15}}\, \tau_0~,\\
\dd_7\phi &= 2\,\tau_1~,\\
H_\perp &= - \tau_3 - \tau_1\lrcorner\psi~.
\end{split} 
\ee

In addition to the supersymmetry equations, heterotic vacuum solutions  have to satisfy the equations of motion and the Bianchi identities. At lowest order in $\alpha'$, the latter is 
\[ \dd_7\widehat H = 0~, \qquad f = {\rm constant}~,\]
Inserting \eqref{eq:torinverse} in these relations, we find further constraints on the $G_2$ torsion:
\be
\begin{split}
\label{eq:torbi}
\tau_0 &= {\rm constant}~,\\
0 &= \dd_7\left(\tau_3 + \tau_1\lrcorner\psi 
+ \textstyle{\frac{1}{15}}\, \tau_0\,\varphi\right)~.
\end{split} 
\ee

Finally, we turn to the equations of motion. At lowest order in $\alpha'$, the former comprise Einstein's equations, the equation of motion for the dilaton $\phi$ and the equations of motion for the flux $\widehat H$. To first order in the $\alpha'$ expansion, it has been shown \cite{Gauntlett:2002sc,Ivanov:2009rh} that both Einstein's equations and the equation of motion for the dilaton are implied by the supersymmetry equations, Bianchi identity and flux equations of motion, and so provide no further constraint on the $G_2$ structure. An extension of this result, that includes the flux equation of motion in the equations implied by supersymmetry and the Bianchi identities, can be found in \cite{Martelli:2010jx}. Let us provide an alternative proof for this last point. The seven-dimensional flux equation of motion are
\bea \label{eq:fluxeom7}
0 &= \dd_7 \left( *_7 e^{-2 \phi} \hat{H} \right) \Leftrightarrow \\ \nn
0 &= -\frac{1}{8} \tau_0  \left( -2\dd_7 \phi \w \psi + \dd_7 \psi \right)
-\frac{1}{8} \dd_7 \tau_0  \w  \psi  +
\left( -2\dd_7 \phi \w *_7 H_{\perp} + \dd_7 *_7 H_{\perp} \right)
 \; .
\eea
This equation is also implied by the Killing spinor equations and Bianchi identities, as we now show. Clearly, the supersymmetry constraints and the Bianchi identities set the first three terms on the right hand side of the second equation in \eqref{eq:fluxeom7} to zero. What remains is
\be
0 = \left( -2\dd_7 \phi \w *_7 H_{\perp} + \dd_7 *_7 H_{\perp} \right)= \dd_7 *_7 \tau_3 + \tau_0 \tau_1 \w \psi - 3 \tau_1 \w *_7 \tau_3 \; .
\ee 
This should be compared to
\be
0 = \dd_7^2 \varphi =\left( \dd_7 *_7 \tau_3  + \tau_0 \tau_1 \w \psi - 3 \tau_1 \w *_7 \tau_3 \right) + \left( d_7 \tau_0 \w \psi + 3 \dd_7 \tau_1 \w \varphi \right) \; ,
\ee
where the last bracket vanish as a consequence of the supersymmetry conditions and Bianchi identities. Thus, to zeroth order in $\alpha'$, all equations of motion follow from the Killing spinor equations and Bianchi identities, and the $G_2$ structure is completely specified by \eqref{eq:one}-\eqref{eq:tildetau3} in conjunction with the differential constraints \eqref{eq:torbi}.\\

\section{\texorpdfstring{$SU(3)$}\ \ structures embedded into integrable \texorpdfstring{$G_2$}\ \  structures}
\label{sec:embedding}

In this section, we derive the constraints that the Killing spinor equations and Bianchi identities of $N=\textstyle{\frac{1}{2}}$ domain wall solutions put on the $SU(3)$ structure of $X(t)$. Mathematically, these constraints determine how the $SU(3)$ structure embeds into integrable $G_2$ structures. As in the previous section, we start with the supersymmetry constraints, and then proceed to the Bianchi identities.

\subsection{The embedding}

Consider a manifold $X$ with an $SU(3)$ structure \cite{Gray:1980fk,chiossi,Cardoso:2002hd,Gauntlett:2003cy} determined by the complex $(3,0)$-form $\Psi$ and the real $(1,1)$ form $\omega$ which satisfy the compatibility equations
\begin{equation}
\omega\wedge\Psi = 0~,\qquad\qquad
{\rm dvol}_6 = \frac{1}{6}\, \omega\wedge\omega\wedge\omega =
\frac{i\, \Psi\wedge\bar\Psi}{||\Psi||^2} ~,\label{eq:su3compat}
\end{equation}
and the torsion structure equations 
\begin{align}
\label{eq:su3struct1}
\dd \omega &= - \frac{12}{||\Psi||^2}\ {\rm Im}(W_0\, \overline\Psi) + W_1^\omega\wedge\omega + W_3~,\\
\label{eq:su3struct2}
\dd \Psi &= W_0\ \omega\wedge\omega + W_2\wedge\omega + \overline W_1^\Psi\wedge\Psi~,
\end{align}
where $W_0$ is a complex function, $W_2$ is a primitive $(1,1)$-form, $W_3$ is a real primitive 3-form of type $(1,2)+(2,1)$, and $W_1^\omega$ and $W_1^\Psi$ are 1-forms.  The torsion components $W_1^\omega$ and $W_1^\Psi$ are the Lee-forms of $\omega$ and $\Psi$ respectively. 

The $SU(3)$ structure forms determine both the almost complex structure $J$ (determined by the real part of $\Psi$), and the metric on the manifold \cite{Hitchin01stableforms}, see appendix \ref{ap:conv}. The torsion classes $W_0, W_2$ are related to the Nijenhuis tensor of the almost complex structure, and vanish if and only if the latter is integrable.

We now embed the $SU(3)$ structure $(X,\omega,\Psi)$ into an integrable $G_2$ structure $(Y,\varphi)$, by choosing a 1-form $N= N_t\, \dd t$ and a complex valued function $\alpha$ such that
\begin{equation}
\varphi = N\wedge\omega + {\rm Re} (\alpha\Psi)~,\label{eq:su3intog21}
\end{equation}
The Hodge dual $\psi$ of $\varphi$ is
\begin{equation}
\psi = *_7\varphi =  - N\wedge {\rm Im}(\alpha\Psi ) + \frac{1}{2}\,\omega\wedge\omega~.
\label{eq:su3intog22}
\end{equation}
where we have used
\[*_6\,\omega = \frac{1}{2}\, \omega\wedge\omega~,\qquad *_6\,\Psi  = - i\, \Psi ~,\]
and the metric $g_{\varphi}$ on $Y$ induced by the $G_2$ structure $\varphi$ which is
\begin{equation}
\dd^2 s_{\varphi} = N_t^2\, \dd^2 t\, + \dd^2 s_{X}~,\qquad N_t\,N^t = 1~,\qquad
|\alpha|^2\, ||\Psi ||^2 = 8~.\label{eq:metric}
\end{equation}
Note that we are not assuming that the metric on $X$ is independent of $t$. 
The $SU(3)$ structure $(X,\omega, \Psi)$ varies with $t$ and so does the metric on $X$.
As $\Psi$ and the metric on $X$ are covariantly constant (with respect to the connection with torsion),  $||\Psi||^2$ is a constant on $X$, $\dd ||\Psi||^2 =0$, and therefore so is $|\alpha|$ 
\[ \dd|\alpha| = 0~.\]
It is important however to keep in mind that all these quantities may depend on $t$.  Note also that different choices of $\alpha$ correspond to same almost complex structure $J$ on $X$, however they give different embeddings of the $SU(3)$ structure into the $G_2$ structure.

Recall that an integrable $G_2$ structure satisfies $\tau_2=0$.  Thus the torsion structure equations for $(Y,\varphi)$ are
\begin{align}
\label{eq:intg2struct1}
\dd_7\varphi &= \tau_0\, \psi + 3\tau_1\wedge\varphi + *_7\tau_3~,\\
\label{eq:intg2struct2}
\dd_7\psi &= 4 \tau_1\wedge\psi.
\end{align}
The torsion class $\tau_3$ belongs to $\Lambda_{27}^3$, that is 
\begin{equation}
\varphi\wedge*_7\tau_3= 0~,\qquad\qquad\varphi\lrcorner(*_7\tau_3)= 0~.
\label{eq:consttau3}
\end{equation}
The torsion class $\tau_0$ is therefore
\begin{equation}
7\, \tau_0 = *_7(\varphi\wedge\dd_7\varphi)~.
\label{eq:tau0}
\end{equation}
 For these $G_2$ structures the 1-form $\tau_1$ is not in general closed.  Since we are interested in the $N=1/2$ domain wall solutions of Section \ref{sec:domainwall},
we restrict the $G_2$ structure so that $\tau_1$ is exact
\[ \tau_1 = \frac{1}{2}\, \dd_7 \phi~.\] 

To embed the $SU(3)$ structure into an integrable $G_2$ structure,
we need to decompose the $G_2$ torsion classes as follows
\begin{align}
\tau_1 &= \tau_{1\, t}\,\dd t + \tau_1^X~,
\label{eq:embedtau1}\\
\tau_3 &= \dd t\wedge\tau_{3\, t} + \tau_3^X~,\qquad
*_7\tau_3 = - N\wedge*\tau_3^X + N_t^{-1}\, *\tau_{3\, t}~,\label{eq:embedtau3}
\end{align}
where
\[ \tau_{1\, t} = \frac{1}{2}\, \partial_t\phi~,\qquad\qquad \tau^X = \frac{1}{2}\, \dd\phi~.\]

The constraints on $\tau_3$, equations \eqref{eq:consttau3}, can be decomposed using equation \eqref{eq:embedtau3}.
The first constraint gives 
\begin{equation}
\omega\wedge*\tau_{3\, t} = - N_t\, {\rm Re}(\alpha\Psi)\wedge *\tau_3^X~,\label{eq:consttau31}
\end{equation}
and the second 
\begin{equation}
{\rm Re}(\alpha\Psi)\lrcorner*\tau_3^X = 0~,\qquad 
N_t\, \omega\lrcorner*\tau_3^X = {\rm Re}(\alpha\Psi)\lrcorner*\tau_{3\,t}~.
\label{eq:consttau32}
\end{equation}

The following identities will be useful in our computations.
\begin{lemma}\label{idsforembed}
Let 
\[ \rho = *\omega = \frac{1}{2}\, \omega\wedge\omega~.\]
Then
\begin{align*}
\dd{\rm Re}(\alpha\Psi) &= 
2 {\rm Re}(\alpha\, W_0)\, \rho + {\rm Re}(\alpha\, W_2)\wedge\omega
+ {\rm Re}((\overline W_1^\Psi + \dd\log\alpha)\wedge\alpha\Psi)\\
&=
2 {\rm Re}(\alpha\, W_0)\, \rho + {\rm Re}(\alpha\, W_2)\wedge\omega
+ {\rm Re}W_1^{\,\Psi}\wedge {\rm Re}(\alpha\Psi)
- ({\rm Im}\overline W_1^{\,\Psi} + \dd a)\wedge {\rm Im}(\alpha\Psi)~,\\
\dd{\rm Im}(\alpha\Psi) &= 
 2 {\rm Im}(\alpha\, W_0)\, \rho + {\rm Im}(\alpha\, W_2)\wedge\omega
+ {\rm Im}((\overline W_1^\Psi + \dd\log\alpha)\wedge\alpha\Psi)\\
& = 2 {\rm Im}(\alpha\, W_0)\, \rho + {\rm Im}(\alpha\, W_2)\wedge\omega
+ {\rm Re}W_1^{\,\Psi} \wedge {\rm Im}(\alpha\Psi)
+ ({\rm Im}\overline W_1^{\,\Psi} + \dd a)\wedge {\rm Re}(\alpha\Psi)~,\\[3pt]
\dd(N_t\, \omega) &= N_t \left(
(\dd\log N_t + W_1^\omega)\wedge\omega 
- \textstyle{\frac{3}{2}}\, {\rm Im}((\alpha W_0)\, (\bar\alpha\overline\Psi))
+ W_3\right)~.
\end{align*}
where $a$ is the argument of $\alpha$. 
\end{lemma}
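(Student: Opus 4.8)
The plan is to obtain all three identities by direct differentiation using the $SU(3)$ torsion structure equations \eqref{eq:su3struct1}--\eqref{eq:su3struct2}, the Leibniz rule, and the two normalisations $\dd|\alpha|=0$ and $|\alpha|^2\,||\Psi||^2=8$ from \eqref{eq:metric}; no auxiliary result is required. For the first two identities I would treat $\alpha\Psi$ as a single complex three-form and compute $\dd(\alpha\Psi)=\alpha\,(\dd\log\alpha\wedge\Psi+\dd\Psi)$. Substituting \eqref{eq:su3struct2} for $\dd\Psi$, using $\omega\wedge\omega=2\rho$, and factoring $\alpha$ back through the wedge products gives the compact form
\[
\dd(\alpha\Psi)=2\,\alpha W_0\,\rho+\alpha W_2\wedge\omega+(\overline W_1^\Psi+\dd\log\alpha)\wedge(\alpha\Psi)~,
\]
and taking real and imaginary parts immediately yields the first line of each of the first two formulas, since $\rho$ and $\omega$ are real.

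To pass to the second (expanded) line I would use that $\dd\log\alpha$ is purely imaginary: because $\dd|\alpha|=0$ one has $\dd\log\alpha=i\,\dd a$, where $a=\arg\alpha$. Writing the complex one-form $\beta:=\overline W_1^\Psi+\dd\log\alpha$ in terms of its real and imaginary parts, ${\rm Re}\,\beta={\rm Re}\,W_1^\Psi$ and ${\rm Im}\,\beta={\rm Im}\,\overline W_1^\Psi+\dd a$, and expanding $\beta\wedge(\alpha\Psi)$ with $\alpha\Psi={\rm Re}(\alpha\Psi)+i\,{\rm Im}(\alpha\Psi)$, the four cross terms reproduce precisely the wedge products in the stated expansions. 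The only place where care is required is the bookkeeping of signs in these cross terms, in particular the identity ${\rm Im}\,\overline W_1^\Psi=-{\rm Im}\,W_1^\Psi$; this is the single step most prone to error, though it is purely mechanical.

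For the third identity I would write $\dd(N_t\,\omega)=N_t\,(\dd\log N_t\wedge\omega+\dd\omega)$ and substitute \eqref{eq:su3struct1}. Collecting the two terms proportional to $\omega$ into $(\dd\log N_t+W_1^\omega)\wedge\omega$ leaves the $W_3$ term untouched and produces the coefficient $-\frac{12}{||\Psi||^2}\,{\rm Im}(W_0\overline\Psi)$. The final step is to rewrite this using $|\alpha|^2\,||\Psi||^2=8$: since $(\alpha W_0)(\bar\alpha\overline\Psi)=|\alpha|^2\,W_0\overline\Psi$ and $\frac{12}{||\Psi||^2}=\frac{3}{2}\,|\alpha|^2$, the coefficient becomes $-\frac{3}{2}\,{\rm Im}\big((\alpha W_0)(\bar\alpha\overline\Psi)\big)$, matching the claimed form. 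Overall I expect no genuine obstacle: the content is entirely contained in the structure equations together with the two normalisations, and the main risk is clerical, namely keeping the real/imaginary decompositions and the wedge-product signs consistent throughout.
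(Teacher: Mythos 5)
Your proposal is correct and follows exactly the route the paper intends: the paper's proof simply states that the identities are obtained by a direct calculation from the structure equations \eqref{eq:su3struct1}--\eqref{eq:su3struct2}, using the normalisation \eqref{eq:metric} in the third identity, which is precisely what you carry out. Your sign bookkeeping in the real/imaginary decomposition (in particular $\mathrm{Re}\,\overline W_1^{\,\Psi}=\mathrm{Re}\,W_1^{\,\Psi}$ and $\dd\log\alpha=i\,\dd a$) matches the stated expansions, so the argument is complete.
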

\begin{proof}
These identities are easily obtained by a calculation using equations \eqref{eq:su3struct1} and \eqref{eq:su3struct2}.  In the third equation we have also used identity \eqref{eq:metric}. 
\end{proof}

To embed the $SU(3)$ structure into an integrable $G_2$ structure, we use equations \eqref{eq:su3intog21} and \eqref{eq:su3intog22}, into the equations \eqref{eq:intg2struct1} and  \eqref{eq:intg2struct2}.  

\begin{proposition}\label{EMBED}
The embedding given by equations \eqref{eq:su3intog21} and \eqref{eq:su3intog22} of the $SU(3)$ structure $(X, \omega,\Psi)$ with torsion classes as in \eqref{eq:su3struct1} and \eqref{eq:su3struct2} into an integrable $G_2$ structure $(Y,\varphi)$ with torsion classes \eqref{eq:intg2struct1} and\eqref{eq:intg2struct2} gives the the following relations between the torsion classes and the flow of $\omega$ and $\Psi$ with respect to the coordinate $t$:
\begin{align}
W_1^\omega &= 2 \tau_1^X\label{eq:W1tau1}~,\\[5pt]
\partial_t\rho &= 4\tau_{1\,t}\, \rho
+ 2 N_t\, W_1^\omega\wedge{\rm Im}(\alpha\Psi)
 - \dd (N_t\,{\rm Im}(\alpha\Psi))
 \label{eq:flowforrho}\\
&= 2(2\, \tau_{1\,t} - N_t\, {\rm Im}(\alpha W_0))\, \rho 
- N_t\, {\rm Im}(\alpha W_2)\wedge\omega 
\nn\\&\qquad\qquad\qquad\qquad\qquad 
- N_t\, {\rm Im}\left( (\dd\log(\alpha\, N_t) - 2 W_1^\omega+ \overline W_1^\Psi )
\wedge\alpha\Psi\right)~,
\nn\\[3pt]
\partial_t\omega &=
(2\, \tau_{1\, t} - 3\, N_t\, {\rm Im}(\alpha W_0))\, \omega
+ 2\, N_t\, W_1^\omega\lrcorner{\rm Re}(\alpha\Psi)
- {\dd^c}^\dagger(N_t\, {\rm Im}(\alpha\Psi))\label{eq:flowforomega}\\
&= (2\, \tau_{1\, t} + \, N_t\, {\rm Im}(\alpha W_0))\, \omega
- 2\,N_t\, {\rm Im}(\alpha W_2)
+ 2\, N_t\, W_1^\omega\lrcorner{\rm Re}(\alpha\Psi)
+ \dd^\dagger(N_t\, {\rm Re}(\alpha\Psi))
\nn\\
&= (2\, \tau_{1\,t} - N_t\, {\rm Im}(\alpha W_0))\, \omega 
- N_t\, {\rm Im}(\alpha W_2) \nn\\
&\qquad\qquad\qquad\qquad\qquad 
- N_t\, {\rm Re}\left((\dd\log(\alpha\, N_t) - 2 W_1^\omega + \overline W_1^\Psi)
\lrcorner(\alpha\Psi)\right)~,\nn\\[3pt]
7\, N_t\, \tau_0 &= 12\, N_t\, {\rm Re}(\alpha\, W_0)
- {\rm Im}(\alpha\Psi)\lrcorner\partial_t{\rm Re}(\alpha\Psi)~,\label{eq:flowforPsi1}\\[5pt]
N_t^{-1}\, \tau_{3\, t} &= \dd^\dagger{\rm Im}(\alpha\Psi) - \tau_0\, \omega 
+ \textstyle{\frac{3}{2}}\, W_1^\omega\lrcorner{\rm Im}(\alpha\Psi)\label{eq:tau3t}\\
&= (2\, {\rm Re}(\alpha W_0) - \tau_0)\, \omega - {\rm Re}(\alpha W_2)
-{\rm Im}\left((\dd\log\alpha-\textstyle{\frac{3}{2}}\, W_1^\omega + \overline W_1^\Psi)
\lrcorner(\alpha\Psi)\right) ~,
\nn\\[5pt]
\partial_t{\rm Re}(\alpha\Psi) &= \dd (N_t\,\omega) 
-  \textstyle{\frac{3}{2}}\, N_t\, W_1^\omega\wedge\omega
+  3\, \tau_{1\,t}\, {\rm Re}(\alpha\Psi) - N_t\, \tau_0\, {\rm Im}(\alpha\Psi) - N_t *\tau_3^X
\label{eq:flowforPsi2}\\[3pt]
&= N_t\, \left(\dd\log N_t - \textstyle{\frac{1}{2}}\, W_1^\omega\right)\wedge\omega 
+\textstyle{\frac{3}{2}}\, ( 2\, \tau_{1\, t}
 - N_t\, {\rm Im}(\alpha W_0))\, {\rm Re}(\alpha\Psi)\nn\\[1pt]
 &\quad
 - N_t\, \left(\tau_0 - \textstyle{\frac{3}{2}}\, {\rm Re}(\alpha W_0)\right)\, {\rm Im}(\alpha\Psi)
 - N_t\, *\tau_3^X + N_t\, W_3~.\nn
\end{align}
\end{proposition}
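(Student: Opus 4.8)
The backbone of the argument is the splitting of the seven-dimensional exterior derivative, $\dd_7 = \dd + \dd t\wedge\partial_t$, where $\dd$ differentiates along $X$ at fixed $t$ and $\partial_t$ acts on the $t$-dependent $SU(3)$ data; for a form $\eta_0 + \dd t\wedge\eta_1$ with $\eta_0,\eta_1$ on $X$ one has $\dd_7(\eta_0 + \dd t\wedge\eta_1) = \dd\eta_0 + \dd t\wedge(\partial_t\eta_0 - \dd\eta_1)$. I would substitute the embedding $\varphi = N_t\,\dd t\wedge\omega + {\rm Re}(\alpha\Psi)$ and $\psi = \rho - N_t\,\dd t\wedge{\rm Im}(\alpha\Psi)$ from \eqref{eq:su3intog21}--\eqref{eq:su3intog22}, together with the torsion decompositions \eqref{eq:embedtau1}--\eqref{eq:embedtau3}, into the integrable structure equations \eqref{eq:intg2struct1}--\eqref{eq:intg2struct2}. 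Each $G_2$ equation then splits into a ``horizontal'' part (free of $\dd t$) and a ``vertical'' part (the coefficient of $\dd t$), which must hold separately. The whole proof is the systematic identification of these pieces with the $SU(3)$ torsion data of \eqref{eq:su3struct1}--\eqref{eq:su3struct2}, evaluating all spatial derivatives by Lemma \ref{idsforembed}.

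I would begin with the cleaner equation $\dd_7\psi = 4\tau_1\wedge\psi$. Its horizontal part reads $\dd\rho = 4\tau_1^X\wedge\rho$; since $\dd\rho = \omega\wedge\dd\omega = 2\,W_1^\omega\wedge\rho$ (the $W_0$ and $W_3$ terms in \eqref{eq:su3struct1} drop out because $\omega\wedge\Psi=0$ and $W_3$ is primitive), the Lefschetz isomorphism $\cdot\wedge\rho\colon\Lambda^1\to\Lambda^5$ yields \eqref{eq:W1tau1}. The vertical part is $\partial_t\rho = 4\tau_{1\,t}\,\rho + 4N_t\tau_1^X\wedge{\rm Im}(\alpha\Psi) - \dd(N_t\,{\rm Im}(\alpha\Psi))$, which after using \eqref{eq:W1tau1} is precisely the first line of \eqref{eq:flowforrho}; expanding $\dd(N_t\,{\rm Im}(\alpha\Psi))$ by Lemma \ref{idsforembed} gives the second line. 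To pass from $\rho$ to $\omega$, I would use $\partial_t\rho = \omega\wedge\partial_t\omega$ and invert the Lefschetz operator $\omega\wedge\cdot\colon\Lambda^2\to\Lambda^4$: applying $*$ and the $SU(3)$ identities $*(\omega\wedge\beta_0^{(1,1)}) = -\beta_0^{(1,1)}$ and $*(\omega\wedge\beta^{(2,0)+(0,2)}) = \beta^{(2,0)+(0,2)}$ separates $\partial_t\omega$ into its $\omega$-trace and primitive parts, and rewriting $*\dd(N_t\,{\rm Im}(\alpha\Psi))$ as codifferentials produces the $\dd^c{}^\dagger$ and $\dd^\dagger$ forms of \eqref{eq:flowforomega}. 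The halving of the trace coefficient in the inversion (from $4\tau_{1\,t}$ in $\partial_t\rho$ to $2\tau_{1\,t}$ in $\partial_t\omega$) is exactly what fixes the $\omega$-coefficient there.

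Turning to $\dd_7\varphi = \tau_0\,\psi + 3\tau_1\wedge\varphi + *_7\tau_3$, I would read off the vertical part $\partial_t{\rm Re}(\alpha\Psi) - \dd(N_t\omega) = 3\tau_{1\,t}\,{\rm Re}(\alpha\Psi) - 3N_t\tau_1^X\wedge\omega - \tau_0 N_t\,{\rm Im}(\alpha\Psi) - N_t*\tau_3^X$, which on inserting \eqref{eq:W1tau1} is the first line of \eqref{eq:flowforPsi2}, the second following from Lemma \ref{idsforembed}. The horizontal part is $\dd{\rm Re}(\alpha\Psi) = \tau_0\,\rho + \tfrac{3}{2}W_1^\omega\wedge{\rm Re}(\alpha\Psi) + N_t^{-1}*\tau_{3\,t}$; solving for $\tau_{3\,t}$ and applying $*$, using $*\rho = \omega$, $*\,{\rm Im}(\alpha\Psi) = -{\rm Re}(\alpha\Psi)$ and $\dd^\dagger = -*\dd*$ (so that $*\dd{\rm Re}(\alpha\Psi) = \dd^\dagger{\rm Im}(\alpha\Psi)$), delivers \eqref{eq:tau3t}. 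The scalar relation \eqref{eq:flowforPsi1}, by contrast, does \emph{not} follow from projecting the vertical equation, because the constraints \eqref{eq:consttau32} annihilate only the ${\rm Re}(\alpha\Psi)$-component of $*\tau_3^X$ while its ${\rm Im}(\alpha\Psi)$-component stays free; instead I would obtain it directly from \eqref{eq:tau0}, computing $\varphi\wedge\dd_7\varphi = \dd t\wedge[N_t\,\omega\wedge\dd{\rm Re}(\alpha\Psi) - {\rm Re}(\alpha\Psi)\wedge\partial_t{\rm Re}(\alpha\Psi) + {\rm Re}(\alpha\Psi)\wedge\dd(N_t\omega)]$, so that $7N_t\tau_0 = N_t*(\omega\wedge\dd{\rm Re}(\alpha\Psi)) + *({\rm Re}(\alpha\Psi)\wedge\dd(N_t\omega)) - {\rm Im}(\alpha\Psi)\lrcorner\partial_t{\rm Re}(\alpha\Psi)$; the first two terms each evaluate to $6N_t\,{\rm Re}(\alpha W_0)$ by Lemma \ref{idsforembed} and $|\alpha|^2\,||\Psi||^2 = 8$, giving the coefficient $12$.

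The main obstacle is the representation-theoretic bookkeeping rather than any single identity. Each horizontal or vertical equation is an identity of middle- or top-degree forms that must be decomposed into its $\omega$-trace (scalar, controlling $\tau_0$ and the $\tau_{1\,t}$ terms), primitive $(1,1)$ ($W_2$), $(3,0)+(0,3)$ ($W_0$, $W_1^\Psi$) and primitive $(2,1)+(1,2)$ ($W_3$, $\tau_3$) components; threading the several Hodge-duality conventions through these projections, and tracking the trace correctly under the Lefschetz inversion that produces \eqref{eq:flowforomega}, is where the care is needed. The one genuinely non-mechanical point is recognising that \eqref{eq:flowforPsi1} must be extracted separately via \eqref{eq:tau0}, precisely because $*\tau_3^X$ retains an ${\rm Im}(\alpha\Psi)$-component; everything else is the repeated, routine application of Lemma \ref{idsforembed}.
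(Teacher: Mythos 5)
Your proposal is correct and follows essentially the same route as the paper's proof: you substitute the embedding into the two integrable $G_2$ structure equations, split each into its $\dd t$-free and $\dd t$-components, evaluate everything with Lemma \ref{idsforembed}, invert the Lefschetz map $\omega\wedge\cdot$ to pass from $\partial_t\rho$ to $\partial_t\omega$ (the paper phrases this as $\partial_t\omega = \omega\lrcorner\partial_t\rho - \tfrac{1}{2}(\rho\lrcorner\partial_t\rho)\,\omega$, which is the same operation as your Hodge-star inversion), and read off \eqref{eq:tau3t} and \eqref{eq:flowforPsi2} from the horizontal and vertical parts of the $\dd_7\varphi$ equation. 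Your observation that \eqref{eq:flowforPsi1} must be extracted separately via \eqref{eq:tau0} matches the paper exactly, and your finer evaluation (each of the two wedge terms contributing $6\,N_t\,{\rm Re}(\alpha W_0)\,\dd{\rm vol}_X$) is a correct refinement of the paper's combined coefficient $12$.
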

\begin{proof}
We begin our analysis of the embedding with equation \eqref{eq:intg2struct2} (which enforces the condition that $\tau_2=0$). Using equations \eqref{eq:su3intog21} and \eqref{eq:su3intog22} and Lemma \ref{idsforembed}, we obtain
\begin{align}
\dd_7\psi  - 4 \tau_1\wedge\psi &= 
 \dd t\wedge\left(\, \partial_t\omega\wedge\omega - 2\tau_{1\,t}\, \omega\wedge\omega
+ \dd (N_t\,{\rm Im}(\alpha\Psi)) - 4 N_t\, \tau_1^X\wedge{\rm Im}(\alpha\Psi)\,\right)\nn \\
& \quad + \dd\omega\wedge\omega - 2 \tau_1^X\wedge\omega\wedge\omega\nn~,
\end{align}
which gives equations \eqref{eq:W1tau1} and \eqref{eq:flowforrho}
\begin{align*}
\dd\omega\wedge\omega &= W_1^\omega\wedge\omega\wedge\omega= 2\, \tau_1^X\wedge\omega\wedge\omega\qquad\iff\qquad W_1^\omega = 2 \tau_1^X~,\\
\partial_t\omega\wedge\omega &= 2\tau_{1\,t}\, \omega\wedge\omega
 - \dd (N_t\,{\rm Im}(\alpha\Psi)) + 4 N_t\, \tau_1^X\wedge{\rm Im}(\alpha\Psi)~.
\end{align*}
The last of these equations gives the first identity in \eqref{eq:flowforrho} and we obtain the second after using Lemma \ref{idsforembed}. Equations \eqref{eq:flowforomega} can be obtained from equation \eqref{eq:flowforrho} using the formula
\[ \partial_t\omega = \omega\lrcorner\partial_t\rho 
- \frac{1}{2}\, (\rho\lrcorner\partial_t\rho)\, \omega~,\]
and recalling that for any $k$-form $\beta$
\[\dd^c\beta = J^{-1}\dd (J\beta)~,\qquad
{\dd^c}^\dagger\beta =  - *\dd^c(*\beta)~.\]

Now we turn to equation \eqref{eq:intg2struct1}. Equation \eqref{eq:flowforPsi1} can be easily obtained by computing $\tau_0$ using equations \eqref{eq:tau0} and \eqref{eq:intg2struct1}:
\begin{align*}
7\, \tau_0\, \dd{\rm vol}_\varphi &= \varphi\wedge\dd_7\varphi = (N\wedge\omega + {\rm Re}(\alpha\Psi))\wedge
(\dd t\wedge (\partial_t{\rm Re}(\alpha\Psi) - \dd(N_t\omega)) + \dd{\rm Re}(\alpha\Psi))\\
&= \dd t\wedge \left( N_t\, \omega\wedge \dd{\rm Re}(\alpha\Psi) - 
{\rm Re}(\alpha\Psi)\wedge (\partial_t{\rm Re}(\alpha\Psi) - \dd(N_t\omega))\right)\\
&= \dd t\wedge ( 12\, N_t \, {\rm Re}(\alpha W_0)\, \dd {\rm vol}_X 
- {\rm Re}(\alpha\Psi)\wedge \partial_t{\rm Re}(\alpha\Psi)  )~.
\end{align*}
Taking the Hodge dual, we obtain equation~\eqref{eq:flowforPsi1} where we have used
\[ *({\rm Re}(\alpha\Psi)\wedge \partial_t{\rm Re}(\alpha\Psi))
= {\rm Im}(\alpha\Psi)\lrcorner\partial_t{\rm Re}(\alpha\Psi)~.\]

Using equations \eqref{eq:su3intog21} and \eqref{eq:su3intog22} into equation \eqref{eq:intg2struct1}
we obtain
\begin{align*}
\dd_7\varphi  - \left(\tau_0\, \psi + 3\tau_1\wedge\varphi + *\tau_3\right)   &=   
 \dd t\wedge
 ( -\dd (N_t\,\omega) + \partial_t{\rm Re}(\alpha\Psi)
+ N_t\, \tau_0\, {\rm Im}(\alpha\Psi) \\
 & \qquad\qquad - 3\, \tau_{1\,t}\, {\rm Re}(\alpha\Psi) 
 +3\, N_t\, \tau_1^X\wedge\omega + N_t *\tau_3^X)) \\
 &\quad  + \dd{\rm Re}(\alpha\Psi) - \tau_0 \, \rho
- 3\, \tau_1^X\wedge{\rm Re}(\alpha\Psi) 
- N_t^{-1}\, *\tau_{3\, t}
~,\end{align*}
from which we find two relations
\begin{align*}
\dd{\rm Re}(\alpha\Psi) &= \tau_0 \, \rho + 3\, \tau_1^X\wedge{\rm Re}(\alpha\Psi)  + 
N_t^{-1}\,*\tau_{3\, t}~,\\
\partial_t{\rm Re}(\alpha\Psi) &= \dd (N_t\,\omega) - 3\, N_t\, \tau_1^X\wedge\omega
+  3\, \tau_{1\,t}\, {\rm Re}(\alpha\Psi) - N_t\, \tau_0\, {\rm Im}(\alpha\Psi) - N_t *\tau_3^X~.
\end{align*} 
The first one gives a relation between the torsion classes of the $G_2$ structure and the $SU(3)$ structure
which we can write, for example, as an equation for $*\tau_{3t}$.  Using Lemma \ref{idsforembed} we find
\begin{align}
\label{eq:dualtau3t}
N_t^{-1}\,*\tau_{3\, t} &= \dd{\rm Re}(\alpha\Psi) 
- \tau_0 \, \rho - 3\, \tau_1^X\wedge{\rm Re}(\alpha\Psi)\\ \nn
&= (2{\rm Re}(\alpha W_0)  - \tau_0)\, \rho
+ {\rm Re}(\alpha W_2)\wedge\omega
+{\rm Re}\left((\dd\log\alpha-\textstyle{\frac{3}{2}}\, W_1^\omega + \overline W_1^\Psi)
\wedge\alpha\Psi\right)
~.
\end{align}
Taking the Hodge dual, we find equation \eqref{eq:tau3t}. This relation is a flow equation for ${\rm Re}(\alpha\Psi)$ which, using Lemma \ref{idsforembed}, gives equation \eqref{eq:flowforPsi2}.  
\end{proof}

It will be useful for later to have an expression for $\tau_3^X$ which satisfies the contraints in equations
\eqref{eq:consttau31} and \eqref{eq:consttau32}.
\begin{proposition}
\begin{equation}
*\tau_3^X = 
\frac{1}{2}\, \left(2\,J\dd a + 3\, W_1^\omega - 4\, {\rm Re} W_1^\Psi\right)\wedge\omega
+ \frac{3}{4}\, (\tau_0 - 2\, {\rm Re}(\alpha W_0))\, {\rm Im}(\alpha\Psi) + \gamma~,
\label{eq:relfordualtau3X}\end{equation}
where $\gamma$ is a primitive 3-form of type $(2,1)+(1,2)$.  
\end{proposition}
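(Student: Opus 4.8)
The plan is to write $*\tau_3^X$ as the most general real three-form on $X$ compatible with the $SU(3)$ structure and then pin down its pieces using the primitivity conditions \eqref{eq:consttau31}--\eqref{eq:consttau32}, feeding in the expression for $*\tau_{3\,t}$ already established in \eqref{eq:dualtau3t}. Under $SU(3)$ a real three-form decomposes into the lines spanned by ${\rm Re}(\alpha\Psi)$ and ${\rm Im}(\alpha\Psi)$, the six-dimensional space $\Lambda^1\wedge\omega$, and the twelve-dimensional primitive space of type $(2,1)+(1,2)$. I would thus set
\[
*\tau_3^X = a\,{\rm Re}(\alpha\Psi) + b\,{\rm Im}(\alpha\Psi) + \beta\wedge\omega + \gamma~,
\]
with $a,b$ functions, $\beta$ a real one-form and $\gamma$ primitive of type $(2,1)+(1,2)$, and then determine $a$, $b$ and $\beta$ while showing that $\gamma$ is left free.

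The two scalars come out first. Because the four summands live in mutually orthogonal $SU(3)$ representations, the first condition in \eqref{eq:consttau32}, ${\rm Re}(\alpha\Psi)\lrcorner*\tau_3^X=0$, only detects the ${\rm Re}(\alpha\Psi)$ term and forces $a=0$. For $b$ I would use \eqref{eq:consttau31}: in ${\rm Re}(\alpha\Psi)\wedge*\tau_3^X$ the $\beta\wedge\omega$ and $\gamma$ terms drop out, since $\omega\wedge{\rm Re}(\alpha\Psi)=0$ and ${\rm Re}(\alpha\Psi)\wedge\gamma$ vanishes on type grounds, leaving $b\,{\rm Re}(\alpha\Psi)\wedge{\rm Im}(\alpha\Psi)=4b\,{\rm dvol}$ after using $|\alpha|^2||\Psi||^2=8$ from \eqref{eq:metric}. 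On the other side $\omega\wedge*\tau_{3\,t}$ receives a contribution only from the $\rho$-term of \eqref{eq:dualtau3t} (the $W_2$-term is primitive and the $\alpha\Psi$-term dies against $\omega\wedge\Psi=0$), giving $3(2{\rm Re}(\alpha W_0)-\tau_0)\,{\rm dvol}$; matching the two produces $b=\textstyle{\frac{3}{4}}(\tau_0-2{\rm Re}(\alpha W_0))$.

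The one-form $\beta$ is the crux and the step I expect to be hardest. The second condition in \eqref{eq:consttau32} reads $N_t\,\omega\lrcorner*\tau_3^X={\rm Re}(\alpha\Psi)\lrcorner*\tau_{3\,t}$. On the left $\omega\lrcorner{\rm Im}(\alpha\Psi)=0$ and $\omega\lrcorner\gamma=0$ by primitivity, while the Lefschetz identity gives $\omega\lrcorner(\beta\wedge\omega)=2\beta$, so the left-hand side is $2N_t\beta$. On the right, inserting \eqref{eq:dualtau3t}, the $\rho$- and $W_2\wedge\omega$-terms are of type $(2,2)$ and are annihilated by contraction with the $(3,0)+(0,3)$ form ${\rm Re}(\alpha\Psi)$, so only ${\rm Re}(\alpha\Psi)\lrcorner{\rm Re}(\mu\wedge\alpha\Psi)$ survives, where $\mu=\dd\log\alpha-\textstyle{\frac{3}{2}}W_1^\omega+\overline{W_1^\Psi}$. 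This contraction is the delicate point: it depends only on $\mu^{(0,1)}$ (the $(1,0)$ part wedges to zero against $\alpha\Psi$), by $SU(3)$-equivariance it must be a real multiple of ${\rm Re}(\mu^{(0,1)})$, and evaluating it in a flat model normalised by $|\alpha|^2||\Psi||^2=8$ fixes the multiple, yielding ${\rm Re}(\alpha\Psi)\lrcorner{\rm Re}(\mu\wedge\alpha\Psi)=-4\,{\rm Re}(\mu^{(0,1)})=-2\mu_R+2J\mu_I$, with $\mu=\mu_R+i\mu_I$. To reach the stated form I would use $\dd\log\alpha=i\,\dd a$ (since $\dd|\alpha|=0$) together with the fact that $W_1^\Psi$ is of type $(1,0)$, so that ${\rm Im}W_1^\Psi=-J\,{\rm Re}W_1^\Psi$; substituting these collapses the expression to $2J\dd a+3W_1^\omega-4{\rm Re}W_1^\Psi$, giving $\beta=\textstyle{\frac{1}{2}}(2J\dd a+3W_1^\omega-4{\rm Re}W_1^\Psi)$.

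Finally, the primitive component $\gamma$ lies in the kernel of every operation occurring in \eqref{eq:consttau31}--\eqref{eq:consttau32}: it is killed by $\omega\lrcorner$ and by ${\rm Re}(\alpha\Psi)\lrcorner$, and both ${\rm Re}(\alpha\Psi)\wedge\gamma$ and $\omega\wedge\gamma$ vanish, so none of the constraints touches it. This is exactly why it survives as an arbitrary primitive $(2,1)+(1,2)$ form in the statement. The entire difficulty is therefore concentrated in the single contraction ${\rm Re}(\alpha\Psi)\lrcorner{\rm Re}(\mu\wedge\alpha\Psi)$, where correctly tracking the complex structure $J$ and the normalisation $|\alpha|^2||\Psi||^2=8$ is what fixes the precise numerical coefficients.
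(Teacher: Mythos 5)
Your proof is correct and follows essentially the same route as the paper: the same $SU(3)$ decomposition $*\tau_3^X = \beta\wedge\omega + \kappa_1\,{\rm Re}(\alpha\Psi) + \kappa_2\,{\rm Im}(\alpha\Psi) + \gamma$, with the constraints \eqref{eq:consttau31}--\eqref{eq:consttau32} combined with \eqref{eq:dualtau3t} fixing $\kappa_1=0$, $\kappa_2 = \frac{3}{4}(\tau_0 - 2{\rm Re}(\alpha W_0))$ and $\beta$ exactly as in the text. The only cosmetic differences are that you evaluate \eqref{eq:consttau31} via wedge products against the volume form where the paper takes its Hodge dual and uses contractions, and that you spell out the contraction identity $\overline\Psi\lrcorner(\mu\wedge\Psi) = -||\Psi||^2\,\mu^{(0,1)}$ together with ${\rm Im}\,W_1^\Psi = -J\,{\rm Re}\,W_1^\Psi$, steps the paper leaves implicit; both computations are right.
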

\begin{proof}
We begin by writing the Lefshetz decomposition of $*\tau_3^X$
\[ *\tau_3^X = \beta\wedge\omega + \tilde\gamma~, \qquad \omega\lrcorner\tilde\gamma = 0~.\]
Also, the Hodge decomposition of $\tilde\gamma$ can be written as
\[ \tilde\gamma = \kappa_1\, {\rm Re}(\alpha\Psi) + \kappa_2\, {\rm Im}(\alpha\Psi) + \gamma~,\]
where $\gamma$  is of type $(2,1) + (1,2)$.
Using equation \eqref{eq:dualtau3t} into the second equation in \eqref{eq:consttau32} we obtain
\[ \omega\lrcorner *\tau_3^X = 2\, \beta = 
3\, W_1^\omega - 4\, {\rm Re} W_1^\Psi + 2\,J\dd a~,
\]
which gives the first term in equation \eqref{eq:relfordualtau3X}. The first relation in equation 
\eqref{eq:consttau32} gives
\[ 0 = {\rm Re}(\alpha\Psi)\lrcorner *\tau_3^X = 4\, \kappa_1~,\]
whereas the Hodge dual of equation \eqref{eq:consttau31} gives
\[ {\rm Im}(\alpha\Psi)\lrcorner *\tau_3^X = 4\, \kappa_2 = - N_t^{-1}\, \omega\lrcorner *\tau_{3\, t} 
= 3(\tau_0 - 2\, {\rm Re}(\alpha W_0))~.\]
Putting all this together we obtain equation \eqref{eq:relfordualtau3X}.  By computing the Hodge dual of \eqref{eq:relfordualtau3X}, we obtain an equation for $\tau_3^X$.
\end{proof}

Using the expression for $\tau_3$ in this proposition, we can rewrite equation \eqref{eq:flowforPsi2} for the flow of $\Psi$. Substituting equation \eqref{eq:relfordualtau3X} into equation \eqref{eq:flowforPsi2} we find
\begin{align}
\partial_t{\rm Re}(\alpha\Psi) &= \dd(N_t\,\omega)
+ N_t\, (- 3 W_1^\omega + 2 {\rm Re}W_1^\Psi - J\dd a)\wedge\omega\nn\\
 & \qquad +  3\, \tau_{1\, t}\, {\rm Re}(\alpha\Psi)
 - \textstyle{\frac{1}{4}}\, N_t\, (7\tau_0 - 6\, {\rm Re}(\alpha W_0))\, {\rm Im}(\alpha\Psi)
  - N_t \, \gamma\nn\\
&= N_t\, (\dd\log N_t - 2 W_1^\omega + 2 {\rm Re}W_1^\Psi - J\dd a)\wedge\omega\nn\\
 & \qquad 
 + \textstyle{\frac{3}{2}}\, (2\, \tau_{1\, t} - N_t\, {\rm Im}(\alpha W_0))\, {\rm Re}(\alpha\Psi)
 - \textstyle{\frac{1}{4}}\, N_t\, (7\tau_0 - 12\, {\rm Re}(\alpha W_0))\, {\rm Im}(\alpha\Psi)\nn\\
 &\qquad + N_t (W_3 - \gamma)~.\label{eq:flowforPsi3}
\end{align}
Note the consistency between this equation and equation \eqref{eq:flowforPsi1}.  

\subsection{Flow equations and moduli}

The manifold $X$ is an almost-hermitian manifold with an $SU(3)$ structure.  Its almost-complex structure $J$  is completely determined by $\Psi$, and therefore, the flow of $\Psi$ corresponds to variations of the almost complex structure of $X$ as $t$ varies.  On the other hand, the flow of $\omega$ has simultaneous variations of both the hermitian structure and those of the almost complex structure.

The variations of $\omega$ can be written as
\begin{equation}\partial_t\omega = \lambda_t\, \omega + h_t~,
\label{eq:variomega}\end{equation}
where $\lambda_t$ is a function on $Y$ and $h_t$ is a primitive real 2-form on $X$. Comparing the flow equation for $\omega$ \eqref{eq:flowforomega} with \eqref{eq:variomega} we have
\begin{align*}
\lambda_t &=2\, \tau_{1\, t} - N_t\, {\rm Im}(\alpha W_0)
= \partial_t\phi - N_t\, {\rm Im}(\alpha W_0)~,\\
h_t^{(1,1)} &= - N_t\, {\rm Im}({\alpha W_2})~,\\
h_t^{(0,2)} &= - \textstyle{\frac{1}{2}}\, N_t(\dd\log N_t + \bar\eta)\lrcorner (\bar\alpha\overline\Psi)~,
\end{align*}
where we have defined the $(0,1)$ form
\begin{equation}
\eta = \overline W_1^{\, \Psi} - 2\, (W_1^\omega)^{(0,1)} + \bar\partial\log\alpha 
= \overline W_1^{\, \Psi} - 2\, (W_1^\omega)^{(0,1)} + i\, \bar\partial a~,
\label{eq:eta}
\end{equation}
for later convenience.
It is very interesting to note that $h_t^{(1,1)}= 0$ vanishes for all $SU(3)$ structures for which ${\rm Im}(\alpha W_2)$ vanishes.  In these cases  the $SU(3)$ structure deforms with $t$ such that the hermitian structure is fixed. 

The variations of $\Psi$  with respect to $t$ are given by \cite{MR0112154, 0128.16902}
\begin{equation}\partial_t\,\Psi = K_t\, \Psi + \chi_t~,
\label{eq:variPsi}\end{equation}
where $K_t$ is a complex valued function on $Y$ and $\chi_t$ is a $(2,1)$-form on $X$.  As $t\in\IR$, we also have
\[\partial_t\,\overline\Psi = \overline K_t\, \overline\Psi +\overline\chi_t~.\]
The variation of the volume form compatibility condition \eqref{eq:su3compat} of $SU(3)$ structure gives
\begin{equation*}
\partial_t \log||\Psi||^2 = 2\, {\rm Re} K_t  - 3 \lambda_t~.
\end{equation*}
${\rm Re} K_t$ is constant on $X$  up to diffeomorphisms, and therefore $\lambda_t$ is constant on $X$. Both functions can vary with $t$
\[ \dd{\rm Re} K_t = 0~,\qquad\dd\lambda_t=0~.\]
Therefore
\[ \partial_t {\rm Re}(\alpha\Psi) = {\rm Re}(\alpha\chi_t)
+ \textstyle{\frac{3}{2}}\, \lambda_t\, {\rm Re}(\alpha\Psi)
- (\partial_t a + {\rm Im} K_t )\, {\rm Im}(\alpha\Psi)~,\]
where we have used equation \eqref{eq:metric}.
Comparing with the flow equation \eqref{eq:flowforPsi3}  
we find
\begin{align}
{\rm Re}(\alpha\chi_t) 
&= N_t\,\left(
(\dd\log N_t  + 2\, {\rm Re}\,\eta)\wedge\omega 
+ W_3 - \gamma\right)~,\\ \label{eq:KtI}
N_t\, (7\,\tau_0 - 12\, {\rm Re}(\alpha W_0)) &= 4\, (\partial_t a + {\rm Im} K_t )~.
\end{align}
We can obtain $\chi_t$ by taking the $(2,1)$ component of the equation for ${\rm Re}(\alpha\chi_t)$:
\begin{equation}
\alpha\,\chi_t = 2\, N_t\, \left((P\dd\log N_t + \bar\eta)\wedge\omega 
+ (W_3 - \gamma)^{(2,1)}\right)~.\label{eq:chit}
\end{equation}

As $W_1^\omega = \dd\phi$, the $SU(3)$ structure equation for $\rho$
\[ \dd\rho = 2\, W_1^\omega\wedge\rho~,\]
becomes
\[ \dd(e^{-2\phi} \, \rho) = 0~.\]
This needs to be compatible with the  flow equation \eqref{eq:flowforrho} for $\partial_t\rho$ which can now be written as
\[ \partial_t (e^{-2\phi}\, \rho) = - \dd (e^{-2\phi}\, N_t\, {\rm Im}(\alpha\Psi))~.\]
Therefore, the 4-form $e^{-2\phi} \, \rho \in H^4(X)$, flows into another $\dd$-closed 4-form in the {\it same} cohomology class as $e^{-2\phi}\,\rho$.
Consider now the equations for $\omega$.  The flow equation \eqref{eq:flowforomega} can be written as
\[\partial_t \omega= \lambda_t\, \omega - N_t\, {\rm Im}(\alpha W_2) 
- N_t\, {\rm Re}\left((\bar\partial\log N_t + \eta)\lrcorner(\alpha\Psi)\right)~.\]
This expression must be a solution to the variation of the $SU(3)$ integrability equation \eqref{eq:su3struct1}
\begin{equation}
\dd\partial_t\omega = - \textstyle{\frac{3}{2}}\, |\alpha|^2\, 
{\rm Im}\left((\partial_t W_0)\overline\Psi + W_0\, \partial_t\overline\Psi\right)
+ \partial_t \dd\phi\wedge\omega + \dd\phi\wedge\partial_t\omega
+ \partial_t W_3~.\label{eq:eqvaromegat}
\end{equation}
Plugging into this equation the expression for $\partial_t\omega$ above, gives a non-trial equation for the variation of the torsion class $W_3$
\begin{align}
\partial_t W_3 - \lambda_t W_3 \, &= 
 \textstyle{\frac{3}{2}}\, |\alpha|^2\,
{\rm Im}\left((\partial_t W_0 + (\overline K_t - \lambda_t)\, W_0)\overline\Psi + W_0\, \overline\chi_t\right)
\nn\\[5pt]
&\qquad + e^\phi\, \dd\big(e^{-\phi}\, h_t\big) - \,  \partial_t\dd\phi\wedge\omega ~.\label{eq:flowW3}
\end{align}

Consider now the integrability equation for $\Psi$, equation \eqref{eq:su3struct2}.  Varying this equation with respect to $t$, we find a differential equation for the variation $\chi_t$ of $\Psi$:
\begin{align}
\dd\chi_t - \overline W_1^\Psi\wedge\chi_t &= 
(\partial_t\overline W_1^\Psi - \dd K_t)\wedge\Psi
+ 2 \, (\partial_t W_0 - K_t\, W_0)\, \rho + (\partial_t W_2 - K_t\, W_2)\wedge\omega\nn\\
&\qquad + 2\, W_0\, \partial_t\rho + W_2\wedge\partial_t\omega~.\label{eq:eqchit}
\end{align}
This equation together with the exterior derivative of equation \eqref{eq:chit} 
give a differential equation for the primitive part of $\chi_t$.

\subsection{Bianchi identities}

The supersymmetric solutions we have discussed need to satisfy the Bianchi identities.  We consider these identities as a further constraint on the integrable $G_2$ structure $(Y,\varphi)$.  Recall
\[\widehat H = - \left ( \tau_3 + \tau_1\lrcorner\psi + \textstyle{\frac{1}{15}}\, \tau_0\, \varphi\right)~.\]
The Bianchi identities state that $\widehat H$ is closed
\[ \dd_7 \widehat H = 0~,\]
and that $\tau_0$ is a constant on $Y$. 
We express these constraints as constraints on the $SU(3)$ structure $(X,\Psi,\omega)$ embedded into the $G_2$ structure.

Let
\[ \widehat H = \dd t \wedge S_t + S^X~.\]
Then
\begin{align*}
 S^X &= (2 {\rm Im}\eta + JW_1^\omega)\wedge\omega 
- \textstyle{\frac{1}{2}}\, \left(\textstyle{\frac{49}{30}} \, \tau_0
- 3 {\rm Re}(\alpha\, W_0)\right) {\rm Re}(\alpha\Psi)
+ N_t^{-1}\, \tau_{1\, t}\, {\rm Im}(\alpha\Psi) + J\gamma~,\\[5pt]
N_t^{-1}\, S_t &= 2\left(\textstyle{\frac{7}{15}}\, \tau_0 - {\rm Re}(\alpha W_0)\right)\omega
+ {\rm Re} (\alpha W_2)
+ {\rm Im} (\eta\lrcorner(\alpha\Psi))~,
\end{align*}
where
\[ \eta = \overline W_1^{\, \Psi} - 2 (W_1^\omega)^{(0,1)} + \bar\partial\log\alpha
= \overline W_1^{\, \Psi} - 2 (W_1^\omega)^{(0,1)} + i\bar\partial a~,\]
and we have used
\[ \tau_1\lrcorner\psi = \frac{1}{2}\, N\wedge(W_1^\omega\lrcorner{\rm Im}(\alpha\Psi))
- N_t^{-1}\, \tau_{1\, t}\, {\rm Im}(\alpha\Psi) - \frac{1}{2}\, JW_1^\omega\wedge\omega~.\]

Now the Bianchi identity gives
\begin{equation*}
0=\dd_7\widehat H = \dd t\wedge (\partial_t S^X - \dd S_t) + \dd S^X ~,
\end{equation*}
from which we obtain two equations
\begin{align} \label{eq:bi}
\dd S^X &= 0\\ \nn
\partial_t S^X&= \dd S_t~.
\end{align}
This means that variations $\partial_t S^X$ of a solution $S^X$ in a cohomology class in $H^3(X)$ remain in the same cohomology class.  

\subsection{Summary of constraints}

Since this is a long section, let us briefly recall the constraints that supersymmetry and the Bianchi identities put on the the $SU(3)$ structure.
\begin{itemize}
\item The Lie form $W_1^{\omega}$ is exact: $W_1^{\omega}=\dd \phi$.
\item The embedding of the $SU(3)$ structure into an integrable $G_2$ structure is specified through \eqref{eq:su3intog21} by a real function $N_t$ and a complex function $\alpha$. 
\item The remaining torsion classes $W_0$, $W_2$ and $W_3$ can take on different values, and determine, together with $N_t, \alpha$ and $\phi$, the NS flux components $\widehat H$ and $f$, as well as the flow of $\omega$ and $\Psi$.
\item The Bianchi identities \eqref{eq:bi} further constrain the $G_2$ structure.
\end{itemize}
In addition, the flow of the $SU(3)$ structure is determined by
\begin{equation}
\label{eq:generalflow}
\begin{split}
\partial_t \Psi &= K_t\, \Psi + \chi_t~,\\
{\rm Im} K_t &= \textstyle{\frac{N_t}{4}}\, (7\, \tau_0 - 12\, {\rm Re}W_0)~, \qquad \dd{\rm Re}K_t = 0~,\\
\chi_t &= 2\, N_t\, \left((\dd\log N_t + 2 {\rm Re}\eta)\wedge\omega + W_3 - \gamma\right)^{(2,1)}~,
\quad \omega\lrcorner\gamma = 0~,\quad \Psi\wedge\gamma = 0~,\\
\eta &= \overline W_1{}^{\Psi} - 2 \, (W_1^\omega)^{(0,1)}~,\\[10pt]
\partial_t \omega &= \lambda_t\,\omega + h_t~,\\
\lambda_t &= 2\,\tau_{1\, t} - N_t\, {\rm Im} W_0~, 
\quad \tau_{1\, t} = \textstyle{\frac{1}{2}}\,\partial_t\phi~,\quad \dd\lambda_t = 0~,\\ 
h_t &= - N_t\, {\rm Im}W_2 - N_t\, {\rm Re}\left((\partial\log N_t + \bar\eta)\lrcorner\bar\Psi\right)~.
\end{split}
\end{equation}

\vspace{0.5cm}


\section{Example: Flow of Calabi--Yau manifold with flux}
\label{sec:CY}

In this, and the two following sections, we will determine the flow of $SU(3)$ structure manifolds with restricted torsion. The aim is to study how the constraints from the embedding $G_2$ structure determines a path in the moduli space of the $SU(3)$ structure manifolds. We will show that flows can be $SU(3)$ structure-preserving, so that a fixed set of torsion classes are non-zero along the flow. Flows where the $SU(3)$ torsion classes change are also allowed, and will be analysed from different perspectives.

To simplify our calculations,  we henceforth work in the gauge $\alpha=1$. This means in particular that we have absorbed the phase $a$ in $\Psi$. As a consequence, the phases of $W_0, W_2$ change, and $W_1^{\Psi}$ is modified by $i \partial a$. 
Moreover, in the gauge $\alpha = 1$ we have $||\Psi||^2 = 8$ for all values of $t$.  In this case we have the relation
\begin{equation}
{\rm Re}K_t = \frac{3}{2}\, \lambda_t~.\label{eq:ReK}
\end{equation}

The simplest $SU(3)$ structure is when all torsion classes are zero so that the manifold is Calabi--Yau. We now determine the necessary and sufficient constraints on the flow so that the three-fold stays Calabi--Yau to all orders in $t$. In the next subsection, we will compute the torsion classes generated at linear order in $t$, once these constraints fail to hold. A summary and discussion of the result from this perturbative analysis is found in the last subsection.

\subsection{Calabi--Yau to all orders}
\begin{proposition}\label{prop:cyflow}
The $N=1/2$ domain wall flow preserves the Calabi--Yau conditions if and only if $\dd N_t = 0$ and $\gamma$ is harmonic.
\end{proposition}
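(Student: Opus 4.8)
The plan is to use the characterisation that a compact $SU(3)$ structure manifold $X$ is Calabi--Yau precisely when $\dd\omega=0$ and $\dd\Psi=0$. Reading off the torsion classes from \eqref{eq:su3struct1}--\eqref{eq:su3struct2}, the vanishing of $\dd\Psi$ is equivalent to $W_0=W_2=W_1^\Psi=0$, while the vanishing of $\dd\omega$ adds $W_1^\omega=W_3=0$, so closure of both forms is equivalent to the vanishing of all torsion classes. Hence the Calabi--Yau configurations form a locus in the space of $SU(3)$ structures, and the flow preserves it to all orders in $t$ if and only if the flow is tangent to it, i.e. $\dd\,\partial_t\omega=0$ and $\dd\,\partial_t\Psi=0$ along the locus. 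I would therefore specialise the general flow equations \eqref{eq:generalflow} to a Calabi--Yau point in the gauge $\alpha=1$, where all $W_i=0$, $\dd\phi=0$, so $\eta=0$; then $\lambda_t=\partial_t\phi$ and ${\rm Re}\,K_t$ are constant on $X$, ${\rm Im}\,K_t=\frac{7}{4}\,N_t\tau_0$, and
\[
h_t=-N_t\,{\rm Re}\big((\partial\log N_t)\lrcorner\overline\Psi\big),\qquad
\chi_t=2\,N_t\big((\partial\log N_t)\w\omega-\gamma^{(2,1)}\big).
\]

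Next I would analyse the $\omega$-equation. Since $\lambda_t$ is constant on $X$ and $\dd\omega=0$, one has $\dd\,\partial_t\omega=\dd h_t$, so tangency reduces to $\dd h_t=0$. The $(0,3)$-part of this is $\bar\partial h_t^{(0,2)}$, where $h_t^{(0,2)}=-\frac12(\partial N_t)\lrcorner\overline\Psi$. Using that $\overline\Psi$ is parallel and closed on a Calabi--Yau, a short computation in a unitary frame shows that contracting the mixed Hessian of $N_t$ into $\overline\Psi$ and antisymmetrising produces its trace, so that $\bar\partial h_t^{(0,2)}$ is proportional to $(\Delta N_t)\,\overline\Psi$, with $\Delta$ the Laplacian on functions. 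On the compact $X$ this vanishes if and only if $N_t$ is constant, i.e. $\dd N_t=0$; conversely $\dd N_t=0$ gives $h_t=0$ and hence $\dd h_t=0$ trivially. This establishes the first required condition.

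Finally, imposing $\dd N_t=0$, the factor ${\rm Im}\,K_t=\frac{7}{4}\,N_t\tau_0$ is constant on $X$ (using that $\tau_0$ is constant by the Bianchi identity \eqref{eq:torbi}), so $K_t$ is constant on $X$ and $\chi_t=-2\,N_t\,\gamma^{(2,1)}$. Then $\dd\,\partial_t\Psi=\dd(K_t\Psi)+\dd\chi_t=-2\,N_t\,\dd\gamma^{(2,1)}$, which vanishes iff $\gamma^{(2,1)}$ is closed. Since $\gamma$ is a primitive real form of type $(2,1)+(1,2)$, Weil's formula gives $*\gamma^{(2,1)}=i\,\gamma^{(2,1)}$ (consistent with $*_6\Psi=-i\Psi$), so $\dd^\dagger\gamma^{(2,1)}=-i\,{*}\,\dd\gamma^{(2,1)}$; on compact $X$ a closed primitive $(2,1)$-form is therefore automatically co-closed, and $\dd\gamma^{(2,1)}=0$ is equivalent to $\gamma$ being harmonic. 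Combining the two steps yields both implications of the proposition.

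The step I expect to be the main obstacle is showing that $\dd h_t=0$ forces $\dd N_t=0$: the trivial direction is immediate, but the converse requires the explicit identification of the $(0,3)$-component of $\dd h_t$ with $(\Delta N_t)\,\overline\Psi$, which relies on the parallelism of $\overline\Psi$ and the epsilon/trace identity for the mixed Hessian, followed by the maximum principle (a harmonic function on a compact manifold is constant). The remaining verifications---that closure of the primitive $(2,1)$-part is equivalent to harmonicity, and that the constancy of $K_t$ follows from $\dd N_t=0$ together with the Bianchi identity---are comparatively routine.
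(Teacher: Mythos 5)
Your proposal is correct, and it reaches the proposition by a route that overlaps with, but is genuinely tighter than, the paper's proof in one place. For the condition $\dd N_t=0$ the two arguments are different packagings of the same compactness input: the paper writes $h_t=\dd^\dagger(N_t\,\Psi)$ (cf.\ \eqref{eq:htCY}), turns $\dd h_t=0$ into $\dd\dd^\dagger(N_t\,\Psi)=0$, and integrates by parts on compact $X$ to force $\dd^\dagger(N_t\,\Psi)=-\dd N_t\lrcorner\Psi=0$, hence $\dd N_t=0$; you instead isolate the $(0,3)$-component of $\dd h_t$, identify it with $(\Delta N_t)\,\overline\Psi$ via the parallelism of $\overline\Psi$, and invoke the maximum principle. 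Both are valid and of comparable depth (the same trace computation appears in the paper's first-order analysis, around \eqref{eq:DelW0} and \eqref{eq:W0bis}). The substantive difference is in how harmonicity of $\gamma$ arises. The paper extracts only $\dd\gamma=0$ from $\dd\chi_t=0$ and then imports co-closedness from the first Bianchi identity, which at the Calabi--Yau locus reduces to $\dd(J\gamma)=0$, equivalent to $\dd^\dagger\gamma=0$ because $*\gamma=J\gamma$ for a primitive $(2,1)+(1,2)$-form. You observe instead that $\dd\chi_t=0$ is literally $\dd\gamma^{(2,1)}=0$ with $\chi_t=-2N_t\,\gamma^{(2,1)}$ (cf.\ \eqref{eq:chiCY}), and since $\gamma^{(2,1)}=\frac{1}{2}\left(\gamma-i\,J\gamma\right)$ this single complex equation already encodes $\dd\gamma=0$ \emph{and} $\dd(J\gamma)=0$ simultaneously; equivalently, your pointwise identity $*\gamma^{(2,1)}=i\,\gamma^{(2,1)}$ makes a closed primitive $(2,1)$-form automatically co-closed. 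Your route therefore shows that the first Bianchi identity imposes no independent constraint in this proposition --- it is automatically satisfied once the flow preserves $\dd\Psi=0$ --- whereas the paper presents it as the source of $\dd^\dagger\gamma=0$; the conclusions agree, but your organisation makes the redundancy explicit. Two minor remarks: your converse direction tacitly uses that harmonicity of the real form $\gamma$ gives $\dd\gamma^{(2,1)}=0$, which again follows from $\gamma^{(2,1)}=\frac{1}{2}\left(\gamma-i\,J\gamma\right)$ together with $\dd^\dagger\gamma=0\iff\dd(J\gamma)=0$ (so no appeal to the K\"ahler package is needed), and your ordering of the $K_t$ step --- deriving $\dd N_t=0$ first and then concluding $\dd K_t=0$ from the constancy of $\tau_0$ in \eqref{eq:torbi} --- harmlessly inverts the paper's, which takes $\dd K_t=0$ as input and deduces $\tau_0\,\dd N_t=0$ as a preliminary.
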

\begin{proof} 
Suppose that all $SU(3)$ torsion classes vanish. Then the general analysis gives
\begin{align}
\partial_t \Psi &= K_t\, \Psi + \chi_t~,\nn\\
{\rm Im} K_t &= \textstyle{\frac{N_t}{4}}\, 7\, \tau_0 ~, \qquad \dd K_t = 0~,\label{eq:K&N}\\
\chi_t &= 2\, \left(\dd N_t\wedge\omega  - N_t\, \gamma\right)^{(2,1)}~,
\quad \omega\lrcorner\gamma = 0~,\quad \Psi\wedge\gamma = 0~,\label{eq:chiCY}\\[10pt]
\partial_t \omega &= \lambda_t\,\omega + h_t~,\nn\\
\lambda_t &= 2\,\tau_{1\, t} ~, 
\quad \tau_{1\, t} = \textstyle{\frac{1}{2}}\,\partial_t\phi~,\quad \dd\lambda_t = 0~,\nn\\ 
h_t &= - \dd N_t\lrcorner{\rm Re}\Psi = \dd^\dagger(N_t\, \Psi)~.\label{eq:htCY}
\end{align}
Note that  $K_t$  is $\dd$-constant whenever the form $\Psi$ is holomorphic. Then, taking the exterior derivative of the first equation in \eqref{eq:K&N}
we find that 
\[ \tau_0 \,\dd\, N_t = 0~.\]
Therefore, either $\tau_0=0$ or $N_t$ is a constant.

Consider the flow equation for $\omega$. To preserve the K\"ahler  condition, we have that
\[ \dd\partial_t\omega = 0\quad\iff\quad \dd h_t= 0~.\]
Using the expression for $h_t$ in equation \eqref{eq:htCY}, we find
\[ \dd h_t = 0\quad\iff\quad \dd\dd^\dagger(N_t\, \Psi)= 0\quad\iff\quad h_t = \dd^\dagger(N_t\, \Psi) = 0 ~.\]
Therefore
\[ \dd N_t = 0~,\]
and $N_t$ should be  $\dd$-constant, which proves the first condition in Proposition \ref{prop:cyflow}.  We can absorb $N_t$, which is a function of $t$ only, into the definition of $t$, and therefore we can set
\[ \partial_t N_t = 0~.\]
 Consider now the flow for $\Psi$. As $N_t$ is constant, equation \eqref{eq:chiCY} gives
\[ {\rm Re}\chi_t = - \gamma~,\qquad {\rm Im}\chi_t = J\gamma~,\]
and the flow equation implies 
\[ \dd\partial_t\Psi = 0 \quad\iff\quad \dd\chi_t = 0 \quad\iff \dd\gamma = 0~.\]

The first Bianchi identity
\[ \dd S^X = 0~,\]
where
\[ S^X = 
-  \textstyle{\frac{49}{60}} \, \tau_0\, {\rm Re}\Psi
+ \textstyle{\frac{1}{2}}\,N_t^{-1}\, \lambda_t\, {\rm Im}\Psi + J\gamma~,\]
is satisfied only if 
\[ \dd (J\gamma) = 0~,\]
where have used the fact that $\lambda_t$ is a constant.  This constraint on the primitive form
$\gamma$ is equivalent to $\gamma$ being co-closed
\[ \dd^\dagger \gamma = 0~.\]
\end{proof} 

We obtain further constraints from the second Bianchi identity, which in our case is
\[ \partial_t S^X = \dd S_t~,\qquad S_t =  N_t\, \textstyle{\frac{14}{15}}\, \tau_0 \,\omega~.\]
Noting that $S_t$ is closed, and using the expression for $S^X$ above this gives an expression for the variation of
$J\gamma$
\begin{align*}
 \partial_t(J\gamma) &= - \textstyle{\frac{49}{60}}\, \tau_0 \,\gamma
- \textstyle{\frac{1}{2}}\, \lambda_t N_t^{-1}\,J\gamma\\[3pt]
&\quad + \textstyle{\frac{7}{20}}\, \tau_0 \,\lambda_t\, {\rm Re}\Psi
- \textstyle{\frac{1}{2}}\, \left(\partial_t(\lambda_t N_t^{-1}) + 
\textstyle{\frac{7^3}{120}}\,\tau_0^2\, N_t + \textstyle{\frac{3}{2}}\,\lambda_t^2\, N_t^{-1}
\right)\, {\rm Im}\Psi \\[5pt]
&= - \textstyle{\frac{49}{60}}\, \tau_0 \,\gamma
- \textstyle{\frac{1}{2}}\, \lambda_t N_t^{-1}\,J\gamma
+ \textstyle{\frac{7}{20}}\, \tau_0 \,\lambda_t\, {\rm Re}\Psi
- \textstyle{\frac{1}{2}}\, N_t^{-1}\,\left(\partial_t\lambda_t  + 
\textstyle{\frac{7^3}{120}}\,\tau_0^2\, N_t^2 + \textstyle{\frac{3}{2}}\,\lambda_t^2
\right)\, {\rm Im}\Psi~. 
\end{align*}
We remark that this variation is harmonic.  

In conclusion, we see that the Calabi--Yau flow requires that $N_t$ is a constant and that the primitive form $\gamma$ is harmonic.
Moreover
\begin{align*}
\partial_t\omega &= \lambda_t\, \omega~,\qquad \dd\lambda_t = 0~,\\[3pt]
\partial_t\Psi &= K_t\, \Psi + \chi_t, \qquad \dd K_t = 0~,\\
&{\rm Re}\chi_t = - \gamma~,\qquad {\rm Im}\chi_t = J\gamma~,\qquad\omega\lrcorner\gamma = 0~.
\end{align*}

\subsection{Flow from Calabi--Yau:  first order analysis}
As shown in the last section, a non-constant $N_t$ and/or a non-harmonic $\gamma$, implies that the $G_2$ embedding will not preserve the Calabi--Yau conditions, and $SU(3)$ torsion will be generated by the flow. Here, we determine the torsion classes to first order in $t$.

From the general analysis, the flow equations are given by \eqref{eq:generalflow}. Recall that in the gauge $\alpha = 1$ we have $||\Psi||^2 = 8$ for all values of $t$, so ${\rm Re}K_t = \frac{3}{2}\, \lambda_t$.  The variations of the torsion classes must be such that they preserve their original properties:
\begin{align*}
(\partial_t W_2)^{(0,2)} &= \Delta_t{}^m \wedge W_2{}_{mn}\, \dd x^n~,&\quad\hbox{to preserve type $(1,1)$}\\
\omega\lrcorner(\partial_t W_2)^{(1,1)} &= - N_t\, W_2\lrcorner{\rm Im}W_2~,
&\quad\hbox{to preserve primitivity}\\[15pt]
(\partial_t W_3)^{(0,3)} &= \textstyle{\frac{1}{2}}\, \Delta_t^m\wedge (W_3^{(1,2)})_{mnp}\, \dd x^n\wedge\dd x^p~,
&\quad\hbox{to preserve type $(2,1)+(1,2)$}\\
\omega\lrcorner(\partial_t W_3) &= J(h_t) \lrcorner W_3~,&\quad\hbox{to preserve primitivity}\; 
\end{align*} 
where $\Delta_t{}^m$  is given in \eqref{eq:DeltafromPsi} and corresponds to first order variations of the almost complex structure. Note that any primitive three form, such as $\gamma$, satisfies equations identical to those for $W_3$.  

Let $\beta$ be any form in our equations above.  We will consider a Taylor series expansion 
\[\beta(t) = \sum_{i = 0}^{\infty}\, \frac{1}{i!}\, \delta_i\beta\, t^i~,\]
where we have set
\[ \beta_0 = \delta_0\beta~.\]
\vskip10pt

At $t=0$ the equations above give
\begin{align}
\delta_1\Psi &= K_0\, \Psi_0 + \chi_0~,\qquad
{\rm Im} K_0 = \textstyle{\frac{7}{4}}\,N_0\, \tau_0 ~,\\
\chi_0 &= 2\, \left(\dd N_0 \wedge\omega_0 - N_0\,\gamma_0\right)^{(2,1)}~,
\quad \omega_0\lrcorner\gamma_0 = 0~,\quad \Psi_0\wedge\gamma_0 = 0~,\label{eq:chi0}\\
\Delta_0^m &=  \textstyle\frac{1}{8}\, \overline\Psi_0^{mpq}\, \big(2\, (\partial_p N_0)\, \omega_{0\, qn}
- N_0\, \gamma_{0\, pqn}^{(2,1)}\big)\, \dd x^n~,\label{eq:Delta0}\\
\eta_0 &= 0~,\\[10pt]
\delta_1 \omega &= \lambda_0\,\omega + h_0~,\qquad
\lambda_0 = \tau_{1\, t}|_0 = \delta_1\phi~,\\ 
h_0 &=  - {\rm Re}\left(\partial N_0\lrcorner\bar\Psi_0\right)~.\label{eq:h0}
\end{align}
Note also that  $\delta_1 W_2$, $\delta_1 W_3$ are primitive with respect to $\omega_0$,  that $\delta_1 W_2$ is type $(1,1)$  and $\delta_1 W_3$ is type $(2,1) + (1,2)$ with respect to $\Psi_0$. We have included $\Delta_0$ which may be needed later.

Varying the integrability equations for $\Psi$ (see equation \eqref{eq:eqchit}) and evaluating at  $t=0$ we find a differential equation for $\chi_0$
\begin{equation}
\dd\chi_0= (\delta_1 \overline W_1{}^{\Psi} - \dd K_0)\wedge\Psi_0 + \delta_1 W_0\, \omega_0\wedge\omega_0 + \delta_1 W_2\wedge\omega_0~,
\label{eq:dchi0}
\end{equation}
The expression for $\chi_0$ in \eqref{eq:chi0} must satisfy equation \eqref{eq:dchi0}.  We find
\begin{equation}
\begin{split}
-2 \, \dd(N_0\, \gamma_0)^{(2,1)} &= \delta_1 W_0\, \omega_0\wedge\omega_0 
+ (\delta_1 W_2 - 2 \bar\partial\partial N_0)\wedge\omega_0 
+ (\delta_1 \overline W_1^{\Psi} - \dd K_0)\wedge\Psi_0\\
&= (\delta_1 W_0\, \omega_0 - 2 \bar\partial\partial N_0 + \delta_1 W_2)\wedge\omega_0 
+ \left(\delta_1 \overline W_1^{\Psi} - \textstyle{\frac{7}{4}}\, i\, \tau_0\,\dd N_0\right)\wedge\Psi_0
~.\label{eq:intgammaone}
\end{split}
\end{equation}
Taking the wedge product of this equation with $\omega_0$ and recalling that $\gamma$ and $\delta_1 W_2$ are primitive, we obtain, after taking the Hodge dual (with respect to $\omega_0$), an equation for $\delta_1 W_0$:
\begin{equation*}
\frac{3}{2}\, \delta_1 W_0 = \omega_0\lrcorner(\bar\partial\partial N_0) 
= - \frac{i}{2}\, \omega_0\lrcorner\dd(J(\dd N_0))~.
\end{equation*}
Using the identity \eqref{idone} for $\alpha = J(\dd N_0)$ we have
\begin{equation}
\frac{3}{2}\, \delta_1 W_0 = - \frac{i}{2}\, \dd^{\dagger_0}\dd N_0 ~.\label{eq:DelW0}
\end{equation}
Putting this expression back into equation \eqref{eq:intgammaone}, we find
\begin{equation}
\begin{split}
-2 \, \dd(N_0\, \gamma_0)^{(2,1)} &= 
\left( i\, \left(- \textstyle{\frac{1}{3}}\,  (\dd^{\dagger_0}\dd\, N_0)\, \omega_0 +  \dd(J(\dd N_0))\right) 
+ \delta_1 W_2\right)\wedge\omega_0\\ 
&\qquad\qquad + \left(\delta_1 \overline W_1^{\Psi} - \textstyle{\frac{7}{4}}\, i\, \tau_0\,\dd N_0\right)\wedge\Psi_0
~.\label{eq:intgammatwo}
\end{split}
\end{equation}
This constraint  can be separated by type giving
\begin{align}
\partial(N_0\, \gamma_0)^{(2,1)} &= \textstyle{\frac{1}{2}}\left(\textstyle{\frac{7}{4}}\, i\, \tau_0\,\dd N_0
- \delta_1 \overline W_1^{\Psi} \right)\wedge\Psi_0~,\label{eq:intgamma31}\\[5pt]
\bar\partial(N_0\, \gamma_0)^{(2,1)} &= 
\textstyle{\frac{1}{2}}
\left( i\, \left(\textstyle{\frac{1}{3}}\,  (\dd^{\dagger_0}\dd\, N_0)\, \omega_0 -  \dd(J(\dd N_0))\right) - \delta_1 W_2\right)\wedge\omega_0
~,\label{eq:intgamma22}
\end{align}
which can be used to write  expressions for $\delta_1 \overline W_1^\Psi$ and $\delta_1 W_2$
\begin{align}
(\delta_1 \overline W_1^{\Psi})^{(0,1)} &=
\textstyle{\frac{1}{4}}\,\overline\Psi_0\lrcorner\partial(N_0\, \gamma_0)^{(2,1)}
+ \textstyle{\frac{7}{4}}\, i\, \tau_0\,\bar\partial N_0~,\label{eq:DelW1Psi}\\
\delta_1 W_2 &= -2\, \omega_0\lrcorner \bar\partial(N_0\, \gamma_0)^{(2,1)} 
+  i\, \left(\textstyle{\frac{1}{3}}\,  (\dd^{\dagger_0}\dd\, N_0)\, \omega_0 -  \dd(J(\dd N_0))\right)~.\label{eq:DelW2} 
\end{align}
Note that $W_1^\Psi$ at any $t$ is a $(1,0)$-form with respect to $\Psi(t)$.  Hence 
\[\partial_t W_1^\Psi = \Delta_t^m \, W_1^\Psi{}_m~\]
is a $(0,1)$ form.  To first order in $t$, this means that 
\[(\delta_1 \overline W_1^\Psi)^{(1,0)} = 0~.\]

Consider now the integrability equation for $\omega$ (see equation \eqref{eq:eqvaromegat}).  Evaluating at $t=0$ we find a differential equation for $h_0$
\begin{equation*}
\dd\delta_1\omega = \dd h_0 = - \frac{3}{2}\, {\rm Im}(\delta_1 W_0\, \overline\Psi_0) + \delta_1 W_1^{\omega}\wedge\omega_0 
+ \delta_1 W_3~.
\end{equation*}
The expression for $h_0$ in \eqref{eq:h0} must satisfy this equation.  Hence
\begin{equation}
- \dd{\rm Re}(\bar\partial N_0 \lrcorner \Psi_0) = 
- \frac{3}{2}\, {\rm Im}(\delta_1 W_0\, \overline\Psi_0) + \delta_1 W_1^{\omega}\wedge\omega_0 
+ \delta_1 W_3~.\label{eq:DelW3pre}
\end{equation}
Taking the wedge product of this equation with $\omega_0$ and using the fact that $\delta_1 W_3$ is primitive we find
\begin{equation}
\delta_1 W_1^\omega = 0~.\label{eq:DelW1o}
\end{equation}
Equation \eqref{eq:DelW3pre} then becomes
\begin{equation*}
 \dd\dd^{\dagger_0}\,{\rm Re}(N_0 \Psi_0) = 
- \frac{3}{2}\, {\rm Im}(\delta_1 W_0\, \overline\Psi_0)  + \delta_1 W_3~.
\end{equation*}
where in the left hand side we have used equation \eqref{eq:idtwo}. The $(2,1) + (1,2)$ part of this equation gives an expression for $\delta_1 W_3$
\begin{equation}
2\, \delta_1 W_3 = \bar\partial\partial^{\dagger_0}(N_0\, \Psi) + \partial\bar\partial^{\dagger_0}(N_0\, \overline\Psi)
~,\label{eq:DelW3}
\end{equation}
and the $(0,3)$ part gives another expression for $\delta_1 W_0$
\begin{equation}
\frac{3}{2}\, \delta_1 W_0 = -  \frac{i}{8}\, \Psi_0\lrcorner(\dd\dd^{\dagger_0} (N_0\overline\Psi_0 ))~.\label{eq:W0bis}
\end{equation}

To show that equation \eqref{eq:W0bis} is equivalent to \eqref{eq:DelW0} we use the identity \eqref{eq:idthree} with
\[\beta = \dd^{\dagger_0}(N_0\overline\Psi_0) = - \dd N_0\lrcorner\overline\Psi_0~,\] 
we find that \eqref{eq:W0bis} becomes
\begin{equation*}
\frac{3}{2}\, \delta_1 W_0 =-  \frac{i}{8}\, \dd^{\dagger_0}( (\dd N_0\lrcorner\overline\Psi)\lrcorner\Psi)
= -i\, \dd^{\dagger_0}(\partial N_0)~,
\end{equation*}
which is equivalent to equation \eqref{eq:DelW0}.

In summary we have the following equations for the torsion classes
\begin{align*}
\delta_1 W_0 &=  - \textstyle{\frac{i}{3}}\, \dd^{\dagger_0}\dd\, N_0~,\\
\delta_1 W_1^\omega &= 0~,\\
\delta_1 \overline W_1^{\Psi}&=
\textstyle{\frac{1}{4}}\,\overline\Psi_0\lrcorner\partial(N_0\, \gamma_0^{(2,1)})
+ \textstyle{\frac{7}{4}}\, i\, \tau_0\,\bar\partial N_0~,\\
\delta_1 W_2 &= -2\, \omega_0\lrcorner \bar\partial(N_0\, \gamma_0^{(2,1)}) 
+  i\, \left(\textstyle{\frac{1}{3}}\,  (\dd^{\dagger_0}\dd\, N_0)\, \omega_0 -  \dd(J(\dd N_0))\right)~,\\
\delta_1 W_3 &= \textstyle{\frac{1}{2}}\,\left(\bar\partial\partial^{\dagger_0}(N_0\, \Psi) 
+ \partial\bar\partial^{\dagger_0}(N_0\, \overline\Psi)\right)~.
\end{align*}

The Bianchi identities at $t=0$ are
\begin{align}
\dd S_0^X &= 0~,\label{eq:BIone}\\
 \delta_1 S^X &= \dd S_0~,\label{eq:BItwo}
\end{align}
where
\begin{align*}
S^X_0 &= - \frac{49}{60}\, \tau_0\, {\rm Re}\Psi_0 + \frac{1}{2}\, \lambda_0\, N_0^{-1}\, {\rm Im}\Psi_0 + J(\gamma_0)~,\\[5pt]
S_0 &= \frac{14}{15}\, \tau_0\, N_0\, \omega_0~,\\[5pt]
\delta_1 S^X &= 2\ {\rm Im}\delta_1 \overline W_1^{\Psi}\wedge\omega_0 + \delta_1(N_t^{-1}\, \tau_{1\, t})\, {\rm Im}\Psi_0
- \frac{49}{60}\, \tau_0\, {\rm Re}\delta_1\Psi + \frac{1}{2}\, \lambda_0\, N_0^{-1}\, \delta_1{\rm Im}\Psi + \delta_1(J\gamma)~.
\end{align*}

The first Bianchi identity \eqref{eq:BIone} gives
\[ \dd(J\gamma_0) =  \frac{1}{2}\, \lambda_0\, N_0^{-2}\, \dd N_0\wedge{\rm Im}\Psi_0~.\]
Separating this equation by type, we find
\begin{align}
&(3,1){\rm -part:}&\qquad\partial\gamma_0^{(2,1)} &= \bar\partial \left(\textstyle{\frac{1}{4}}\, \lambda_0\, N_0^{-1}\, \Psi_0\right)
= - \textstyle{\frac{1}{4}}\, \lambda_0\, N_0^{-2}\, \bar\partial N_0\wedge\Psi_0~,\label{eq:pargamma21}\\[5pt]
&(2,2){\rm -part:}&\qquad\bar\partial\gamma_0^{(2,1)} &- \partial \gamma_0^{(1,2)} = 0~.
\end{align}

The second Bianchi identity \eqref{eq:BItwo} is more involved and requires some preliminary computations.  
This constraint is an equation for $\delta_1(J\gamma)$, which, by primitivity, satisfies the constraints for $\delta_1 W_3$ presented at the first page of this section. Applying these to $J\gamma$ we find:
 \begin{align}
(\delta_1(J\gamma))^{(0,3)} &= - \textstyle{\frac{i}{2}}\, \Delta_0^m\wedge (\gamma_0^{(1,2)})_{mnp}\, \dd x^n\wedge\dd x^p\\
&\qquad\qquad\Longrightarrow~ 
\Psi_0\lrcorner\delta_1(J\gamma) = - \textstyle{\frac{i}{2}}\, \Delta_{0\, m}{}^q\, \Psi_0^{mnp}\, (\gamma_0^{(1,2)})_{npq}
~,\label{eq:DeltaJgamma03}\\[10pt]
\omega_0\lrcorner\delta_1(J\gamma) &= - h_0 \lrcorner (J\gamma)_0 
~\Longrightarrow~
\omega_0\lrcorner\delta_1(J\gamma) = 4\, N_0^{-1}\, {\rm Im}(\Delta_0^m\, \partial_m N_0)~,\label{eq:DeltaJgammaO}
\end{align}

The Bianchi identity \eqref{eq:BItwo} can be written as an equation for the change in $\gamma$. Using previous results we have
\begin{align*}
\delta_1(J\gamma) &= 
\big(\textstyle{\frac{7}{4}}\, \tau_0\, \dd N_0  +  \textstyle{\frac{1}{2}}\, \lambda_0\, N_0^{-1}\,J(\dd N_0)
- 2\ {\rm Im}(\delta_1 \overline W_1^{\Psi})\big)\wedge \omega_0
 - \textstyle{\frac{49}{60}}\, \tau_0\, N_0\, \gamma_0 - \textstyle{\frac{1}{2}}\, \lambda_0\, (J\gamma)_0\\
&\quad + \textstyle{\frac{7}{20}}\, \tau_0\,\lambda_0 \, {\rm Re}\Psi_0
- \big( \textstyle{\frac{7^3}{240}}\, \tau_0^2\, N_0 
+ \textstyle{\frac{3}{4}}\, \lambda_0^2\, N_0^{-1} + \delta_1(N_t^{-1}\, \tau_{1\, t})\big)\, {\rm Im}\Psi_0~.
\end{align*}
Finally we compute ${\rm Im}(\delta_1\overline W_1^\Psi)$.  Using equations \eqref{eq:Delta0} and \eqref{eq:pargamma21} in the first term in $\delta_1 \overline W_1^\Psi$ we have
\begin{align*} 
\overline\Psi_0\lrcorner\partial(N_0\, \gamma_0)^{(2,1)} &=
\overline\Psi_0\lrcorner(\partial N_0\wedge\gamma_0^{(2,1)}
  + N_0\, \partial\gamma_0^{(2,1)})~, \\
\overline\Psi_0\lrcorner(\partial N_0\wedge\gamma_0^{(2,1)})&= 
   4\, (\partial_m N_0) \, \overline\Psi_0^{mnp}\, \gamma_{0\,npq}\, \dd x^q
   = - 4\, N_0^{-1}\, (\partial_m N_0) \, \Delta_0^m \, , \\
   \overline\Psi_0\lrcorner( N_0\, \partial\gamma_0^{(2,1)}) &=
   2\, \lambda_0\, N_0^{-1}\, \bar\partial N_0~.
  \end{align*}
Therefore
\begin{equation} 
2\, \delta_1 \overline W_1^{\Psi}=
 - 2\, N_0^{-1}\, (\partial_m N_0) \, \Delta_0^m
+  \left( \lambda_0\, N_0^{-1} +\frac{7}{2}\, i\, \tau_0\right)\,\bar\partial N_0~,
\label{eq:newDeltaW1Psi}
\end{equation}
and
\begin{equation} 
2\, {\rm Im} \delta_1 \overline W_1^{\Psi}=
 - 2\, N_0^{-1}\, {\rm Im}\big((\partial_m N_0) \, \Delta_0^m)
+ \frac{1}{2}\,\lambda_0\, N_0^{-1} J(\dd N_0) + \frac{7}{4}\, \tau_0\, \dd N_0~.
\label{eq:ImDeltaW1Psi}
\end{equation}
Inserting equation \eqref{eq:ImDeltaW1Psi} into our expression above for the variation of $J\gamma$ we find
\begin{equation}
\begin{split}
\delta_1(J\gamma) &= 
2\, N_0^{-1}\, {\rm Im}(\Delta_0^m\, \partial_m N_0)\wedge \omega_0
 - \textstyle{\frac{49}{60}}\, \tau_0\, N_0\, \gamma_0 - \textstyle{\frac{1}{2}}\, \lambda_0\, (J\gamma)_0\\
&\quad + \textstyle{\frac{7}{20}}\, \tau_0\,\lambda_0\, {\rm Re}\Psi_0
- \big( \textstyle{\frac{7^3}{240}}\, \tau_0^2\, N_0 
+ \textstyle{\frac{3}{4}}\, \lambda_0^2\, N_0^{-1} + \delta_1(N_t^{-1}\, \tau_{1\, t})\big)\, {\rm Im}\Psi_0
~.\label{eq:DeltaJgamma}
\end{split}
\end{equation}

This result must be consistent with equations \eqref{eq:DeltaJgamma03} and \eqref{eq:DeltaJgammaO}.  Consistency with \eqref{eq:DeltaJgammaO} is obvious.  The consistency with \eqref{eq:DeltaJgamma03} is however nontrivial and rather nice as it gives an expression for trace of the first order metric $\Delta_{0\, m}{}^n\, \bar\Delta_{0\, n}{}^m$  on the moduli space.  Contracting \eqref{eq:DeltaJgamma} with $\Psi_0$ and comparing with \eqref{eq:DeltaJgamma03} we find
\begin{align*}
\Psi_0\lrcorner\delta_1(J\gamma) &= 
- \textstyle{\frac{i}{2}}\, \Delta_{0\, m}{}^q\, \Psi_0^{mnp}\, (\gamma_0^{(1,2)})_{npq}\\
=& \textstyle{\frac{7}{5}}\, \tau_0\,\lambda_0
- i\, \big( \textstyle{\frac{7^3}{60}}\, \tau_0^2\, N_0 
+ 3\, \lambda_0^2\, N_0^{-1} + 4\,\delta_1(N_t^{-1}\, \tau_{1\, t})\big)~.
\end{align*}
Using equation \eqref{eq:Delta0} we have
\begin{align*}
- N_0\, \delta_1(N_t^{-1}\, \tau_{1\, t})  &= \Delta_{0\, m}{}^n\, \bar\Delta_{0\, n}{}^m 
- \textstyle{\frac{1}{4}}\, \Delta_{0\, m}{}^n\, \Psi_0^{mpq}\, (\partial_p N_0)\,\omega_{0\, qn}\\
&\qquad + \textstyle{\frac{7}{20}}\, i \tau_0\, N_0\, \lambda_0
+ \textstyle{\frac{7^3}{240}}\, \tau_0^2\, N_0^2
+ \textstyle{\frac{3}{4}}\, \lambda_0^2~.
\end{align*}

\subsection{$SU(3)$ structure at first order}
\label{sec:o1sum}

Our analysis shows that, when embedded in an integrable $G_2$ structure, a Calabi--Yau threefold $X_{0}$ may flow to an $SU(3)$ structure manifold $X_{\delta t}$ with non-vanishing torsion, where the latter is determined by the $G_2$ torsion classes. In summary, we find
\begin{align*}
\delta_1 W_0 &=  - \textstyle{\frac{i}{3}}\, \dd^{\dagger_0}\dd\, N_0~,\\
\delta_1 W_1^\omega &= 0~,\\
\delta_1 \overline W_1^{\Psi}&=
 -  N_0^{-1}\, (\partial_m N_0) \, \Delta_0^m
+ \textstyle{\frac{1}{2}}\, \left( \lambda_0\, N_0^{-1} +\textstyle{\frac{7}{2}}\, i\, \tau_0\right)\,\bar\partial N_0~,\\
\delta_1 W_2 &= -2\, \omega_0\lrcorner \bar\partial(N_0\, \gamma_0)^{(2,1)} 
+  i\, \left(\textstyle{\frac{1}{3}}\,  (\dd^{\dagger_0}\dd\, N_0)\, \omega_0 -  \dd(J(\dd N_0))\right)~,\\
\delta_1 W_3 &= \textstyle{\frac{1}{2}}\,\left(\bar\partial\partial^{\dagger_0}(N_0\, \Psi) 
+ \partial\bar\partial^{\dagger_0}(N_0\, \overline\Psi)\right)~,\\\\
\partial\gamma_0^{(2,1)} &= \bar\partial \left(\textstyle{\frac{1}{4}}\, \lambda_0\, N_0^{-1}\, \Psi_0\right)
= - \textstyle{\frac{1}{4}}\, \lambda_0\, N_0^{-2}\, \bar\partial N_0\wedge\Psi_0~,\label{eq:pargamma21}\\[5pt]
\bar\partial\gamma_0^{(2,1)} &= \partial \gamma_0^{(1,2)}~,\\\\
\delta_1(J\gamma) &= 
2\, N_0^{-1}\, {\rm Im}(\Delta_0^m\, \partial_m N_0)\wedge \omega_0
 - \textstyle{\frac{49}{60}}\, \tau_0\, N_0\, \gamma_0 - \textstyle{\frac{1}{2}}\, \lambda_0\, (J\gamma)_0
 + \textstyle{\frac{7}{20}}\, \tau_0\,\lambda_0\, \Psi_0\\
&\quad + N_0^{-1}\, \big( \Delta_{0\, m}{}^n\, \bar\Delta_{0\, n}{}^m 
- \textstyle{\frac{1}{4}}\, \Delta_{0\, m}{}^n\, \Psi_0^{mpq}\, (\partial_p N_0)\,\omega_{0\, qn}
\big)\, {\rm Im}\Psi_0~,\\\\
\delta_1(N_t^{-1}\, \tau_{1\, t}) &=  N_0^{-1}\, \left(- \Delta_{0\, m}{}^n\, \bar\Delta_{0\, n}{}^m 
+ \textstyle{\frac{1}{4}}\, \Delta_{0\, m}{}^n\, \Psi_0^{mpq}\, (\partial_p N_0)\,\omega_{0\, qn}\right)\\
&\qquad - \textstyle{\frac{7}{20}}\, i \tau_0\,  \lambda_0
- \textstyle{\frac{7^3}{240}}\, \tau_0^2\, N_0
- \textstyle{\frac{3}{4}}\, \lambda_0^2\, N_0^{-1}~.
\end{align*}

There are several interesting observations to make:
\begin{itemize}
\item $\delta_1 W_1^{\omega}=0 $; this is the only torsion class that cannot, at linear order in $t$, be generated by the flow.
\item If $X_{\delta t}$ is complex, then it is Calabi--Yau. This follows since $\delta_1 W_0 $ and $\delta_1 W_2$ vanish if and only if $N_0$ is constant and $\gamma$ is harmonic. In this case all other torsion classes vanish as well. It follows that the flow cannot connect Calabi--Yau solutions with the conformally balanced non-K\"ahler manifolds of the Strominger system.
\item Suppose that at $t=0$ we set
\be
\tau_0 = 0 \quad , \quad \lambda_0 = 0 \quad , \quad \gamma_0 = 0\; ,
\ee
but keep $N_0$ non-constant. Then $X_{\delta t}$ has a half-flat $SU(3)$ structure: the two Lie forms vanish, and $\delta_1 W_0 $ and $\delta_1 W_2$ are imaginary
\be
\label{eq:CYtoHF}
\begin{split}
\delta_1 W_0 &=  - \textstyle{\frac{i}{3}}\, \dd^{\dagger_0}\dd\, N_0~,\\
\delta_1 W_1^\omega &= 0~,\\
\delta_1 \overline W_1^{\Psi}&=0~,\\
\delta_1 W_2 &=  i\, \left(\textstyle{\frac{1}{3}}\,  (\dd^{\dagger_0}\dd\, N_0)\, \omega_0 -  \dd(J(\dd N_0))\right)~,\\
\delta_1 W_3 &= \textstyle{\frac{1}{2}}\,\left(\bar\partial\partial^\dagger(N_0\, \Psi) 
+ \partial\bar\partial^\dagger(N_0\, \overline\Psi)\right)~.
\end{split}
\ee
To first order in $t$, there is no flux $S_0^X = S_0 =0$, and the dilaton is constant. At linear order, we thus have a $G_2$ holonomy manifold.

\item Suppose that at $t=0$ $\gamma_0$ is non-zero, but $N_0$ is $\dd$-constant. Then 
\be
\begin{split}
\delta_1 W_i &= 0 \; , \; \forall i \neq 2 \; , \\
\delta_1 W_2 &= - 2 N_0 \omega_0 \lrcorner \bar{\partial} \gamma_0^{(2,1)} \; , \\
\partial\gamma_0^{(2,1)} &= 0 \; , \; 
\bar\partial\gamma_0^{(2,1)} = \partial \gamma_0^{(1,2)} \; .
\end{split}
\label{eq:CYtosympHF}
\ee
Note that $\delta_1 W_2$ is real and primitive by construction, but non-zero for generic $\gamma_0$. Thus, $X_{\delta t}$ is a symplectic half-flat $SU(3)$ structure manifold. The flux is non-zero at linear order, and the dilaton is non-constant.  
\end{itemize}

There are two options for the study of the integrability of the infinitesimal flow away from Calabi--Yau derived in the last subsection. First, we could continue the perturbative analysis to higher orders, and complement it with an inductive proof of integrability similar to that of Tian for the integrability of Calabi--Yau preserving deformations \cite{tian86}. Second, we can provide arguments for the integrability of the flow by studying whether the flow of half-flat $SU(3)$ structures allow Calabi--Yau loci, and could thus connect to the flow we have found. After a detour over nearly K\"ahler flows, we will proceed along the second route.
\vspace{0.5cm}


\section{Nearly K\"ahler manifolds}
\label{sec:NK}

Consider the flow of manifolds which are nearly K\"ahler, that is, we set $W_i = 0$ for all $i\ne0$.  As before, we set $\alpha = 1$.
As we will see below, we are able to completely solve for this case.  We show that we necessarily have that $N_t$ is constant, $\tau_0 = 0$, and that the forms $h_t$ and $\gamma$ (and therefore $\chi_t$) vanish.  Moreover, we will prove that the $G_2$-flow of nearly K\"ahler manifolds with ${\rm Im} W_0 = 0$ is not allowed (otherwise we fall back to a Calabi--Yau flow), and that a consistent flow has either ${\rm Re} W_0 = 0$, or the complex phase of $W_0$ needs to vary with the flow parameter $t$.\footnote{The flow of nearly K\"ahler manifolds when $W_0$ has constant phase has recently been discussed in \cite{Klaput:2012vv,Haupt:2014ufa}.}  The two cases do not intersect each other, however the case where the phase of $W_0$ varies with $t$ approaches asymptotically the case where ${\rm Re} W_0 = 0$. While they both flow into a Calabi--Yau manifold at infinity, there are no Calabi--Yau loci at finite $t$ along a nearly K\"ahler flow.

The equations for the $SU(3)$ structure are:
\begin{align}
\dd\omega &=- \textstyle{\frac{3}{2}}\, {\rm Im}(W_0\, \overline\Psi)\label{eq:domegaNK}~,\\
\dd\Psi &= W_0\, \omega\wedge\omega\label{eq:dPsiNK}~.
\end{align}
The only non-zero torsion class $W_0$ is $\dd$-constant
\[\dd W_0 = 0~,\]
as can be seen by taking the exterior derivative of equation \eqref{eq:dPsiNK}.

The variations of the hermitian structure are:
\begin{align*}
\partial_t\omega &= \lambda_t\,\omega + h_t~, \qquad\dd\lambda_t = 0~,\\
\lambda_t &= 2\, \tau_{1\, t} - N_t\, {\rm Im}W_0~,\qquad 2\, \tau_{1\, t} = \partial_t\phi~,\qquad \dd\tau_{1\, t} = 0~,\\
h_t &= -\dd N_t\lrcorner {\rm Re\Psi}~.
\end{align*}
Because $\lambda_t$, $\tau_{1\, t}$ and $W_0$ are $\dd$-constant, the second equation implies that 
\begin{equation}
 {\rm Im}W_0\, \dd N_t= 0~,\label{eq:NKcondition}
 \end{equation}
and therefore ${\rm Im}W_0= 0$ or $N_t$ is a $\dd$-constant. 

Condition \eqref{eq:NKcondition} implies that we do not expect that a Calabi--Yau manifold can flow into a nearly K\"ahler manifold, except perhaps at infinite distances.  To see this, recall that in the first order analysis for the flow from a Calabi--Yau manifold at $t=0$, we found
\[ \delta_1 W_0 = - \frac{i}{3}\, \dd^{\dagger_0}\dd N_0~,\]
which is imaginary.  For ${\rm Im} W_0$ to be non-zero to first order, and hence have a flow into a nearly K\"ahler manifold, it must be the  case that $N_0$ is not constant.  However, when ${\rm Im} W_0\ne0$, condition \eqref{eq:NKcondition} requires that $\dd N_t = 0.$
 It would be interesting to understand whether one can flow from a nearly 
K\"ahler manifold into a half-flat manifold. 

In the Appendix \ref{app:NKimW0}, we prove in fact that ${\rm Im}W_0 \ne 0$, otherwise we fall back into the flow of a Calabi--Yau manifold. Therefore, with $\alpha=1$, the {\it $G_2$ flow of nearly K\"ahler manifolds with ${\rm Im}W_0= 0$ is not allowed.}  From now on we assume that ${\rm Im} W_0\ne 0$,
and hence, by equation \eqref{eq:NKcondition}, we must have
\[ \dd N_t = 0~.\]
We can absorb $N_t$ into the definition of $\dd t$ and choose $N_t$ to be constant. Note that as a consequence $h_t$ vanishes.

The variations of the almost complex structure are 
\begin{align*}
\partial_t\Psi &= K_t\, \Psi + \chi_t~,\qquad \dd K_t = 0~,\\
{\rm Re}K_t &= \textstyle{\frac{3}{2}}\, \lambda_t~, \qquad
{\rm Im}K_t = \textstyle{\frac{1}{4}}\, N_t\, (7\, \tau_0 - 12\, {\rm Re}W_0)~,\\
\chi_t &= - 2\, N_t\, \gamma^{(2,1)}~,\quad \omega\lrcorner\gamma = 0~.
\end{align*}
We begin our analysis by considering the variations of equation \eqref{eq:dPsiNK}, that is
\begin{equation}
\dd\partial_t\Psi = (\partial_t W_0 + 2 \lambda_t\, W_0)\, \omega\wedge\omega ~.
\label{eq:PrevardPsi}
\end{equation}
Taking the wedge product with $\omega$ and recalling that $\gamma$ is a primitive form we find a simple equation for the flow of $W_0$
\begin{equation}
\partial_t W_0 + (2\lambda_t - K_t)\, W_0 = 0~.\label{eq:varW0NK}
\end{equation}
Returning to equation \eqref{eq:PrevardPsi},  and using these results we obtain
\begin{equation}
\dd\partial_t\Psi =  K_t\, W_0\, \omega\wedge\omega~,\label{eq:vardPsiOne}
\end{equation}

Consider now the variation equations for the hermitian form $\omega$. The  flow equations become
\begin{equation}
\partial_t\omega = \lambda_t\,\omega~,\qquad \dd\lambda_t = 0~.\label{eq:varomegaNK}
\end{equation}
Compatibility of the variation of equation \eqref{eq:domegaNK}. 
\begin{align*}
\dd\partial_t\omega 
&= - \textstyle{\frac{3}{2}}\, {\rm Im}\left(
(\partial_t W_0 + \bar K_t\,W_0) \, \overline\Psi + W_0\, \overline \chi_t\right)~,
\end{align*}
with the exterior derivative of  \eqref{eq:varomegaNK}
\[\dd\partial_t\omega = \lambda_t\, \dd\omega = - \textstyle{\frac{3}{2}}\,\lambda_t \,{\rm Im}(W_0\,\overline\Psi)~,\]
gives 
\begin{equation*}
{\rm Im}\left(
\left(\partial_t W_0 
+ \big(\textstyle{\frac{1}{2}}\, \lambda_t 
- \textstyle{\frac{i}{4}}\, N_t\, (7\, \tau_0 - 12\, {\rm Re}W_0)\big)\,W_0
\right) \, \overline\Psi 
-2 W_0\, N_t\, \gamma^{(1,2)}
\right)=0~.
\end{equation*}
Separating by type we obtain an equation for the flow of $W_0$ which is the same as equation \eqref{eq:varW0NK}
and the constraint
\begin{equation}
\gamma = 0~.
\end{equation}
Consequently, the variations of the complex structure are given by
\begin{equation} \partial_t\Psi = K_t\Psi~, \qquad K_t =  \frac{3}{2}\, \lambda_t + \frac{i}{4}\, N_t \, (7\tau_0 - 12{\rm Re} W_0)~.
\label{eq:varPsiNK}
\end{equation}
It is not very hard to check that equation \eqref{eq:varW0NK} is enough to guarantee the compatibility of \eqref{eq:varPsiNK} and 
\eqref{eq:vardPsiOne}.

Next, we consider the Bianchi identities.  For the flux in this case we have 
\begin{align*}
S^X &= 
- \textstyle{\frac{1}{2}}\, \left(\textstyle{\frac{49}{30}} \, \tau_0
- 3 {\rm Re}W_0\right) {\rm Re}\Psi
+ N_t^{-1}\, \tau_{1\, t}\, {\rm Im}\Psi~,\\[5pt]
 S_t &= 2\, N_t\, \left(\textstyle{\frac{7}{15}}\, \tau_0 - {\rm Re}W_0\right)\omega~.
\end{align*}

The first Bianchi identity
\begin{equation*}
\dd S^X = 0 = \left(
- \textstyle{\frac{1}{2}}\, \left(\textstyle{\frac{49}{30}} \, \tau_0
- 3 {\rm Re}W_0\right) {\rm Re}W_0
+ N_t^{-1}\, \tau_{1\, t}\, {\rm Im}W_0
\right) \omega\wedge\omega~,
\end{equation*}
gives a relation between the embedding parameters and the torsion class $W_0$
\begin{equation}
- N_t\, \left(\textstyle{\frac{49}{30}} \, \tau_0
- 3 {\rm Re}W_0\right) {\rm Re}W_0
+ (\lambda_t + N_t\, {\rm Im}W_0)\, {\rm Im}W_0 = 0~.\label{eq:relEmbedd}
\end{equation}

The second Bianchi identity
\[ \dd S_t = \partial_t S^X~,\]
gives two further constraints
\begin{align}
0 &= \tau_0\left(\lambda_t + N_t\, {\rm Im}W_0\right)~,\label{eq:tauornot}\\[3pt]
\partial_t\lambda_t &= - \textstyle{\frac{3}{2}}\,\lambda_t^2  - \lambda_t\, N_t\, {\rm Im}W_0 
+ \textstyle{\frac{56}{5}}\,N_t^2\, \tau_0\, {\rm Re}W_0 - 12 (N_t\, {\rm Re}W_0)^2
- \textstyle{\frac{7^3}{120}}\, (N_t\, \tau_0)^2~.\label{varlambda}
\end{align}
We claim that the first equation implies that $\tau_0 = 0$.  Suppose on the contrary that $\tau_0\ne 0$. Then, equation \eqref{eq:tauornot} requires
\[ \lambda_t = - N_t\, {\rm Im} W_0~.\]
Substituting this into the relation \eqref{eq:relEmbedd} implies 
\begin{equation} 
{\rm Re} W_0 = \frac{49}{90}\, \tau_0~\qquad{\rm or}\qquad
{\rm Re} W_0 = 0~. \label{eq:preconstReW0}
\end{equation}
In both cases this means that
$\partial_t{\rm Re} W_0 = 0$. Taking the real part of the variation of $W_0$ in equation \eqref{eq:varW0NK} and setting this to zero, we find
\[ {\rm Re} W_0 = \frac{1}{2}\, \tau_0~,\]
which is not compatible with \eqref{eq:preconstReW0} unless $\tau_0$ vanishes.

It is worth summarising our results thus far.
The equations of the flow are
\begin{align}
\label{eq:NKptomega}
\partial_t\omega &= \lambda_t\,\omega~,\qquad \dd\lambda_t = 0~,\\
\label{eq:NKptPsi}
\partial_t\Psi &= K_t\Psi~, \qquad K_t =  \frac{3}{2}\, \lambda_t - 3\,i\, N_t \, {\rm Re} W_0~,
\end{align}
and we need to solve
\begin{align}
0&= 3 \,  N_t\, ({\rm Re}W_0 )^2
+ (\lambda_t + N_t\, {\rm Im}W_0)\, {\rm Im}W_0~,\label{eq:quad}\\
\partial_t\lambda_t &= - \textstyle{\frac{3}{2}}\,\lambda_t^2  - \lambda_t\, N_t\, {\rm Im}W_0 
- 12 (N_t\, {\rm Re}W_0)^2~,\label{eq:varlambNK}\\
\partial_t W_0 &= \left(- \textstyle{\frac{1}{2}}\, \lambda_t - 3\, i\, N_t \, {\rm Re} W_0\right)\, W_0~.\label{eq:varW0NKtau0}
\end{align}

We now look for the general solutions of \eqref{eq:quad}-\eqref{eq:varW0NKtau0}. 
Solving equation \eqref{eq:quad} for $\lambda_t$ we find
\begin{equation} \lambda_t = - \frac{N_t}{{\rm Im}W_0} \, \left( 3\,  ({\rm Re}W_0 )^2 + ({\rm Im}W_0)^2\right)~,
\label{eq:thelamb}
\end{equation}
and eliminating $\lambda_t$ from equation \eqref{eq:varW0NKtau0} we have
\begin{equation} 
\partial_t W_0 = \textstyle{\frac{N_t}{2\,{\rm Im}W_0}}\,\left(
    3\,  ({\rm Re}W_0 )^2 + ({\rm Im}W_0)^2
- 6\, i\, {\rm Re} W_0\, {\rm Im}W_0\right)\, W_0~.\label{eq:flowW0NK}
\end{equation}
After a somewhat tedious computation, one can prove that equation \eqref{eq:varlambNK}
is superfluous as it gives an identity when substituting $\lambda_t$ in \eqref{eq:thelamb} and using equation \eqref{eq:flowW0NK}.

We can integrate  equation \eqref{eq:flowW0NK} by writing 
\[ W_0 = r \, e^{i\theta}~.\]
Equation \eqref{eq:flowW0NK} gives two coupled first order differential equations for $(r,\theta)$
\begin{align}
\partial_t\, r^{-1}&= - \textstyle{\frac{N_t}{2\, \sin\theta}}\, (1 + 2\, \cos^2\theta)~,\label{eq:diffr}\\[3pt]
\partial_t\, \theta &= -3\, N_t\, r \cos\theta~.\label{eq:difftheta}
\end{align}

Before continuing with the analysis of flow, we would like to ask whether flows for which $\theta$ is independent of the flow parameter $t$ are allowed.  From equation \eqref{eq:difftheta} we see that this is possible only  when ${\rm Re} W_0 = 0$.
This is rather interesting:  apart from the case where one sets ${\rm Re} W_0 = 0$, {\it the only way a nearly K\"ahler manifold can have a consistent $G_2$ flow is by letting $\theta$ change with the flow}.  Note that this change in the phase of $W_0$ corresponds to a change in the phase of $\Psi$, however these changes leave the almost complex structure invariant.

Consider again the flow equations \eqref{eq:NKptomega} and \eqref{eq:NKptPsi}.  It is not very difficult to prove that one can integrate these equations to find $\omega(t)$ and $\Psi(t)$ in terms of $W_0$, or equivalently, in terms of $r$ and $\theta$.  To see this, one shows first that 
\begin{align}
\lambda_t &= - \partial_t\log r^2~,\label{eq:solnlamb}\\
K_t &= \partial_t\log(r^{-3}\, e^{i\theta})~.\label{eq:solnK}
\end{align}
The first equation follows by a computation of the variation of $r^2 = |W_0|^2$ using \eqref{eq:flowW0NK} and then comparing the result with \eqref{eq:thelamb}.   To find the second relation one only needs the first equation, which gives the real part of $K_t$, and for the imaginary part one uses equation \eqref{eq:difftheta}.  Next, the form of equations \eqref{eq:NKptomega} and \eqref{eq:NKptPsi} means that the general solution has the form
\begin{equation*}
\omega(t) = f(t)\, \omega_0~, \qquad \Psi(t) = \tilde f(t)\, \Psi_0~.
\end{equation*}
where $\omega_0 = \omega(0)$ and $\Psi_0= \Psi(0)$, and $f(0) = \tilde f(0) = 1$.  Putting this together with the flow equations and equations \eqref{eq:solnlamb} and \eqref{eq:solnK} we find
\begin{align}
\omega &= A\, r^{-2}\, \omega_0~,\qquad A = r(0)^2~.\label{eq:omNK}\\
\Psi &= B\, r^{-3}\, e^{i\, \theta}\, \Psi_0~, \qquad B = r(0)^3\, e^{- i\, \theta(0)}\label{eq:PsiNK}~.
\end{align}

Consider the metric on $X$. On an $SU(3)$ manifold, the metric is determined by the complex structure and the hermitian form by
\[ g_{mn} = \omega_{mp}\, J_n{}^p~.\]
The complex structure is an invariant of the flow as $\Psi$ and $\Psi_0$ differ only by a scale factor (see equation \eqref{eq:PsiNK}).
Hence
\begin{equation*}
g_{mn}(t) = A\, r^{-2}\, g_{mn}(0)~,
\end{equation*}
and the metric on the seven dimensional manifold $(Y,\varphi)$ is (see equation \eqref{eq:metric})
\begin{equation*}
\dd^2 s_\varphi = N_t^2\, \dd^2 t + \dd^2 s_X= r^{-2}\, ((r\,N_t)^2\, \dd^2 t + A\, \dd^2 s_{X_0})~, 
\end{equation*}
where $\dd^2 s_{X_0}$ is the metric on $X_0$, that is the metric on $X$ at $t=0$.  In what follows it will be useful to define a new coordinate $T$ such that
\begin{equation}
\dd T = \pm\,N_t\, r \, \dd t~.\label{eq:newT}
\end{equation}
The seven dimensional metric now takes the form
\begin{equation*}
\dd^2 s_\varphi = r^{-2}\, (\dd^2\, T + A\, \dd^2 s_{X_0})~, 
\end{equation*}

We still need to solve equations \eqref{eq:diffr} and \eqref{eq:difftheta} to obtain $W_0$ as a function of $t$ (or $T$).
We begin with the latter.   Changing variables using equation \eqref{eq:newT}, we have 
\begin{equation}
\partial_T\theta = \mp\, 3 \cos\theta~.\label{eq:diffthetaT}
\end{equation}
Integrating we find
\[ \mp\, 3 \,T + c = \log\left(\frac{1 + \sin\theta}{\cos\theta}\right)~.\]
where $c$ is a constant which can be set to zero without loss of generality.  Inverting this relation to find $\theta$ as a function of $T$, we find, after some algebra, the equation 
\begin{equation}
\cos\theta \left(1 - \cos\theta\, \cosh(3\, T)\right) = 0~.\label{eq:twosolnstheta}
\end{equation}
There are two solutions.  The first one, when $\cos\theta = 0$, we have ${\rm Re} W_0 = 0$: this is the case mentioned above in which we  have a flow for a nearly K\"ahler manifold with constant $\theta$.  For clarity, we will study the two cases separately: the case where $\theta$ does not vary with $t$ and the case in which it does.

\subsection{Flow with constant $\theta$}

In this case we have
\[ {\rm Re} W_0 = 0~,\]
and the equation for ${\rm Im} W_0$ is 
\begin{equation}
\partial_t{\rm Im}W_0 =  \textstyle{\frac{1}{2}}\, N_t\, ({\rm Im}W_0)^2~.
\label{eq:flowconsttheta}
\end{equation}
Note that the dilaton remains constant along the flow.

Integrating the equation for the flow of ${\rm Im}W_0$ with respect of $t$ we find a one parameter family of solutions with:
\[ {\rm Im} W_0 = - \frac{1}{\textstyle{\frac{1}{2}}\, N_t\, t + a}~,\qquad
\lambda_t = \frac{N_t}{\textstyle{\frac{1}{2}}\, N_t\, t + a}~,\]
where $a$ is a constant.
The equations for  $\omega$ and $\Psi$ are in this case
\begin{align}
\omega(t) &= \frac{1}{a^2}\, \left(\frac{1}{2}\, N_t\, t + a\right)^2\, \omega_0~,\\[5pt]
\Psi(t) &= \frac{1}{a^3}\, \left(\frac{1}{2}\, N_t\, t + a\right)^3\, \Psi_0~.
\end{align}

Note that there is a singularity in the flow at values of $t = t_s$ for which
\[ t_s = - \frac{2a}{N_t}~.\]
Both forms $\Psi$ and $\omega$ vanish at $t = t_s$. 
The manifolds $X_{t_s}$ have a curvature singularity.  In fact, for nearly K\"ahler manifolds the scalar curvature is \cite{Bedulli2007}
\[ {\cal R} = \frac{15}{2}\, |W_0|^2~.\]

This solution seems to flow to a non-compact Calabi--Yau manifold at $t=\infty$.  In this example we already have ${\rm Re} W_0 = 0$ to begin with, and in the limit $t= \infty$ we also have ${\rm Im} W_0 = 0$.  Moreover,  in this limit  $\lambda_t = 0$, and the scalar curvature also vanishes ${\cal R }= 0$.  Yet, both $\omega$ and $\Psi$ increase monotonically to infinity, and hence the volume of $X$ is infinite in this limit.

Examples of this case have been studied in the literature before, see e.g. \cite{Klaput:2011mz}. This flow has also been studied more recently including $\alpha'$ corrections and the vector bundle $V$ over $X$ which comes with every heterotic string compactification \cite{Klaput:2013nla, Haupt:2014ufa}. Interestingly, it was shown in  \cite{Klaput:2013nla} that $X$ can have a finite volume as $t\rightarrow\infty$  by choosing appropriately the bundle $V$.  It should be noted however that for the
$\alpha'$-corrected flow in \cite{Klaput:2013nla}, the dilaton becomes $t$-dependent. 
Moreover, for solutions where the internal radius tends to a constant, i.e. $\lambda_t\rightarrow 0$, the dilaton blows up as $t\rightarrow\infty$. The decompactification limit we found at zeroth order in $\alpha'$ is therefore traded for a finite volume compactification and a dilaton which blows up.

\subsection{Flow with varying $\theta$}

The second solution to equation \eqref{eq:twosolnstheta} is
\begin{equation*}
\cos\theta = {\rm sech}(3 \, T)~.
\end{equation*}
Consider now the equation for $r$ \eqref{eq:diffr}.  Changing variables using \eqref{eq:newT} we find
\begin{equation*}
- \, \partial_T\, \log r^2 = \frac{4}{\sinh(6T)} + \coth(3T)~,
\end{equation*}
where we have used the relation
\begin{equation}
 \sin\theta = \mp\, \tanh(3T)~.\label{eq:sintheta}
\end{equation}
The sign in this relation is chosen by requiring consistency with equation \eqref{eq:diffthetaT}. 
Integrating we now obtain
\begin{equation}
r^6 = a^6\ \frac{\cosh^2(3T)}{\sinh^3(3T)}~.\label{eq:r6}
\end{equation}

In the expression for $r$, we note that $ 3\, T$ must be positive in order for the right hand side to be positive.  As a function of $T$, $r$ is a monotonically decreasing function and $r \rightarrow 0$ as $T\rightarrow \infty$.  The requirement that $T$ is a positive function means that for $t$ positive, we need to choose the positive sign in equation \eqref{eq:newT} and the negative sign in \eqref{eq:sintheta}. Of course, for $t$ negative, we chose the opposite signs in these equations. 

Finally we need to integrate equation \eqref{eq:newT} to find $T$ as a function of $t$.  Integrating the equation
\[3\, \partial_t T = \pm\, 3\, N_t\, r = \pm\, 3 N_t\, a\, \frac{(\cosh 3T)^{1/3}}{(\sinh 3T)^{1/2}}~,\]
we find 
 \begin{align}
\pm\, \textstyle{\frac{1}{2}}\, N_t\, a\, t &=  b +
  v^{-1/12}\, (1 - v)^{3/4}
+ \frac{8}{11}\,  v^{11/12}\, {}_2\, F_1\left(\frac{1}{4}, \frac{11}{12}, \frac{23}{12}; v\right) \nn\\[5pt]
 &=b + v^{-1/12}\, (1 - v)^{3/4}
 + \frac{2}{3}\, B\left(v; \frac{11}{12}, \frac{3}{4}\right)~,\label{eq:tofv}
  \end{align}
 where $v = \cosh(3T)^{-2}$, $b$ is a constant of integration, and $B(z;p,q)$ is the incomplete Beta function. We choose the constant of integration so that when $v = 1$, that is $3T = 0$, we set $t= 0$.  Hence
\[ b  = - \frac{2}{3} B\left(\frac{11}{12}, \frac{3}{4}\right)~.\]
We can always do this as this choice represents a constant shift in the values of $t$.
  In Figure \ref{fig:3Toft}
 we present a plot of $3T$ as a function of $\tilde t = \textstyle{\frac{1}{2}}\, N_t\, a\, t$.    
\begin{figure}
\centering
\includegraphics[width=0.8\textwidth]{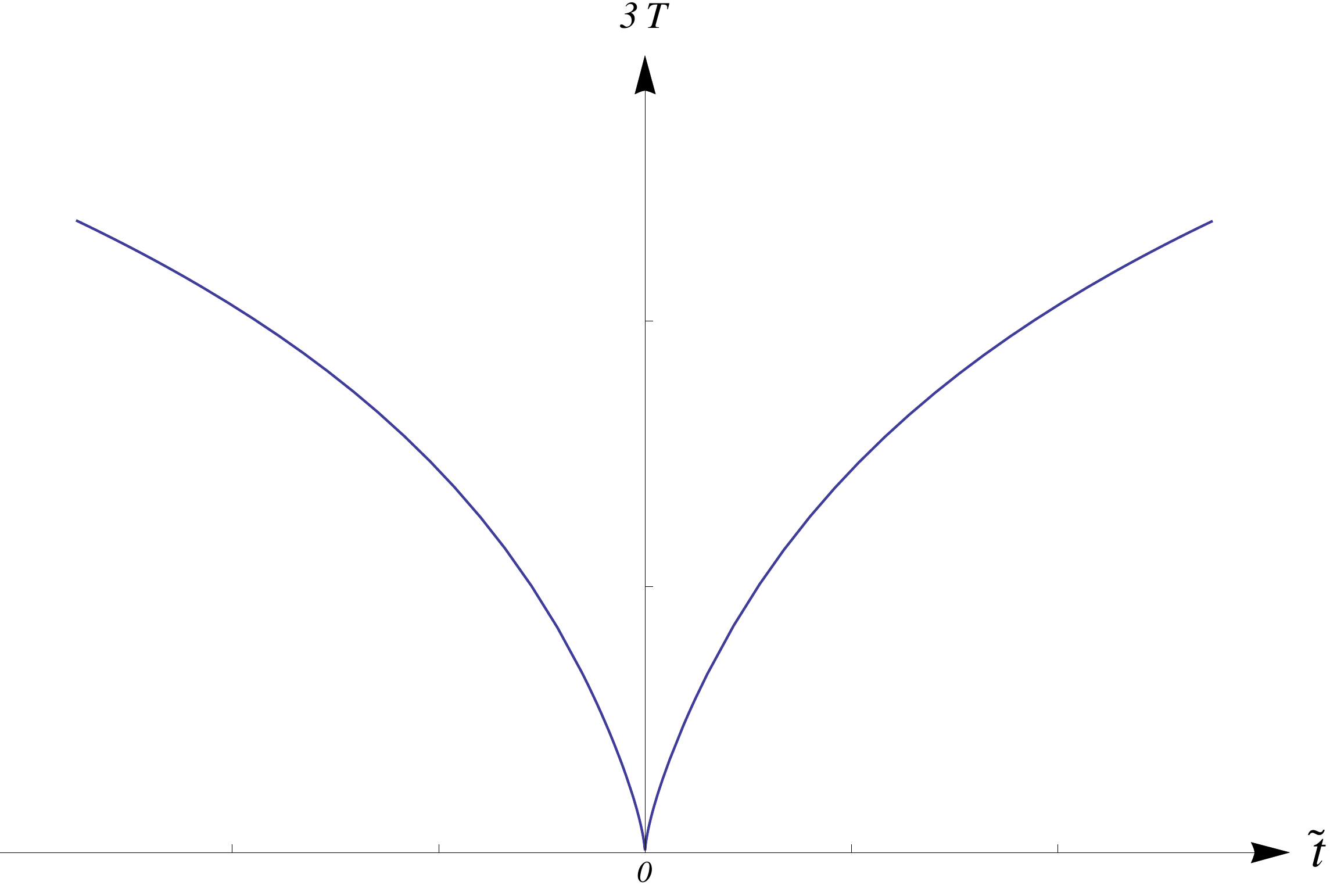}
\caption{Plot of $3T$ as a function of $\tilde t = \textstyle{\frac{1}{2}}\, N_t\, a\, t$.}
\label{fig:3Toft}
\end{figure}

The solution for the torsion class $W_0$ is
\begin{align}
{\rm Re} W_0 &= r \cos\theta = a\, (\sinh(3T))^{-1/2}\, (\cosh(3T))^{-2/3}~,\\
{\rm Im}W_0 &= r \sin\theta = - a\, (\sinh(3T))^{1/2}\, (\cosh(3T))^{-2/3}~.
\end{align}
In Figure \ref{fig:torsion} we show the behaviour of $W_0$ with respect to $\tilde t$.  The torsion class $W_0\rightarrow 0$ as
$\tilde t\rightarrow\infty$, however ${\rm Re}W_0$ falls much faster than ${\rm Im}W_0$.  
In fact, as $t\to\infty$ (or $v\to 0$) we find that  
\[ \cosh(3T)\to \tilde t^{\, 6}~,\]
and hence
\begin{equation}{\rm Re} W_0\to \tilde t^{\, -5}~,\qquad{\rm and} \qquad
{\rm Im} W_0\to - \tilde t^{\, -1}~.\label{eq:asymW0}
\end{equation} 
Finally, note that the point $\tilde t = 0$ corresponds to a manifold with a curvature singularity as $|W_0|^2 = r^2\to a^2\, (3T)^{-1}$ as $\tilde t\to 0$.  

\begin{figure}
\centering
\includegraphics[width=0.8\textwidth]{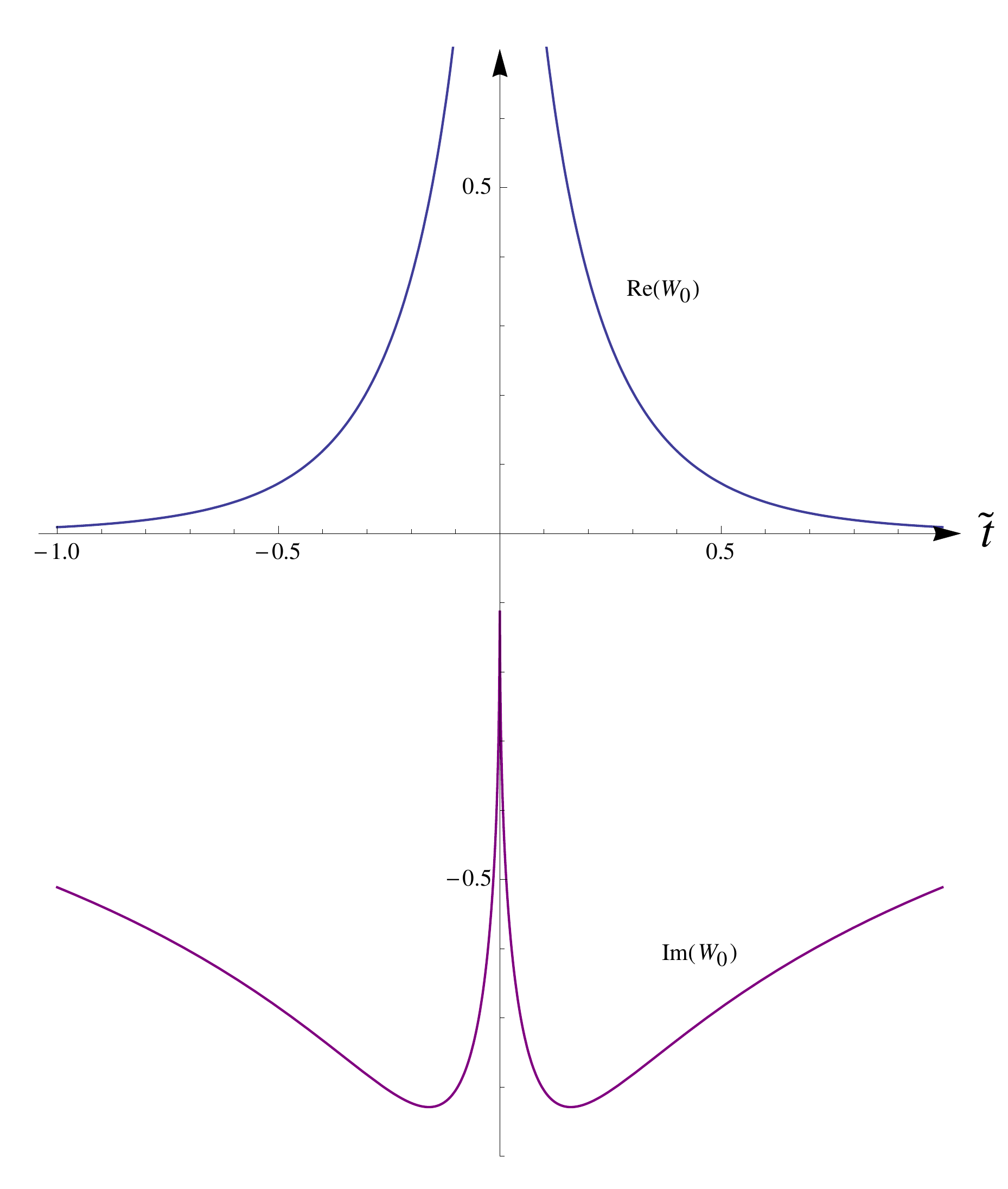}
\caption{Plot of ${\rm Re}W_0$ and ${\rm Im}W_0$ as functions of $\tilde t = \textstyle{\frac{1}{2}}\, N_t\, a\, t$.}
\label{fig:torsion}
\end{figure}

It is interesting also to compute the variation of the dilaton.  Recall that
\[ \partial_t \phi = \lambda_t + N_t\, {\rm Im} W_0~.\]
Using equations \eqref{eq:solnlamb} and \eqref{eq:sintheta}, we find
\[ e^{\phi- \phi_0}\, r^2 =  a^2\, (\cosh(3T))^{-1/3}~,\]
and by equation \eqref{eq:r6}
\[  e^{\phi- \phi_0} = \tanh(3T)~.\]
Thus, the dilaton flows to a constant as $t\to\infty$.

Just as the case in the previous section, this solution flows to a non-compact Calabi--Yau manifold at $\tilde t=\infty$.  In this limit we  have $W_0 = 0$, $\lambda_t = 0$, and the scalar curvature also vanishes ${\cal R }= 0$.  Moreover, both $\omega$ and $\Psi$ increase monotonically to infinity, and hence the volume of $X$ at infinity is infinite.

This flow does not intersect the flow discussed in the previous section.  The two cases only coincide at $t=\infty$ where they both flow into a non-compact Calabi--Yau manifold.  However, the flow with varying $\theta$ approaches asymptotically the flow with constant $\theta$ when $t\to\infty$, as it is easily checked by comparing equation \eqref{eq:asymW0} with \eqref{eq:flowconsttheta}.

\subsection{Appendix: ${\rm Im} W_0 \ne 0$}
\label{app:NKimW0}
In this appendix we prove the claim at the beginning of this section that
${\rm Im} W_0 \ne 0$, or
otherwise we fall back on a Calabi--Yau flow. Assume that 
\[{\rm Im} W_0 = 0~,\]
and consider the first Bianchi identity
\[\dd S^X = 0~,\]
where
\[ S^X = 
- \frac{1}{2}\, \left( \frac{49}{30}\, \tau_0 - 3 {\rm Re} W_0\right)\, {\rm Re}\Psi
+ N_t^{-1}\tau_{1\, t}\, {\rm Im}\Psi + J\gamma~.\]
Taking the wedge product with $\omega$, and noting that $S^X$ is primitive, we find
\begin{align*}
 0 &= \dd S^X\wedge\omega = \dd (S^X\wedge\omega) + S^X\wedge\dd\omega
= \frac{3}{2}\, {\rm Re}W_0\, S^X\wedge{\rm Im}\Psi\\[3pt]
&= - 3\, {\rm Re}W_0\, 
\left( \frac{49}{30}\, \tau_0 - 3 {\rm Re} W_0\right)\, \dd{\rm vol}_X~.
\end{align*} 
For ${\rm Re} W_0 \ne 0$ we must have
\[ {\rm Re} W_0 = \frac{49}{90}\, \tau_0~.\]
Note that this means that 
\begin{equation}
\partial_t W_0 = 0~,\label{eq:noflowW0}
\end{equation}
 as $\tau_0$ is a constant.

Consider now the flow of $\Psi$.  From equation \eqref{eq:eqchit} we have that
\begin{align*}
 \dd\chi_t &= -\dd K_t\wedge\Psi 
+ 2 (\partial_t{\rm Re} W_0 - K_t\, {\rm Re}W_0)\, \rho 
+ 2{\rm Re} W_0\, \partial_t\rho\\
&= -\dd K_t\wedge\Psi + 2\, {\rm Re}W_0\,
((2\, \lambda_t- K_t)\,\rho + h_t\wedge\omega)~.
\end{align*}
where we have used \eqref{eq:noflowW0} and the flow equations for $\omega$. Taking the wedge product with $\omega$ and recalling the $h_t$ and $\gamma$ are primitive, we find 
\[ -  i\, \dd(*\dd N_t) = 3\, {\rm Re}W_0\, (2\lambda_t - K_t)\, \dd{\rm vol}_X~.\]
The real part of this equation gives
\[ \lambda_t = 0~,\]
and, from the imaginary part we have
\[  \dd(*\dd N_t) = 3\, {\rm Re}W_0\, {\rm Im}K_t\, \dd{\rm vol}_X
= \frac{9}{14}\, N_t ({\rm Re} W_0)^2  \dd{\rm vol}_X~,\]
and hence
\[ \dd N_t = 0~,\qquad {\rm Re} W_0 = 0~.\]

\vspace{0.5cm}

\section{Half-flat $SU(3)$ structures}
\label{sec:HF}

In this section we consider the flow of $SU(3)$ structures which are half-flat, that is when the torsion classes
$W_1^\omega$,  $W_1^{\Psi}$ vanish, but $W_0, W_2$ and $W_3$ are non-zero. As we will see below, Hitchin flow \cite{Hitchin:2000jd}, for which the torsion classes of the $G_2$ manifold are all zero, is recovered as a subcase of this flow, with vanishing flux, constant dilaton and constant embedding functions $\alpha, N_t$. We will argue that simplified versions of half-flat flows can allow Calabi--Yau loci, even at finite values of $t$. To conform with the previous sections, we choose $\alpha=1$.

A flow that preserves a half-flat $SU(3)$ structure should for all $t$ obey
 \begin{align}
\dd\omega &= - \textstyle{\frac{3}{2}}\, {\rm Im} (W_0\overline\Psi) + W_3~,\label{eq:IntomHF}\\
\dd\Psi &= 2\, W_0 \,\rho + W_2 \wedge\omega~,\label{eq:InPsiHF}
\end{align}
where 
\[ \rho = *\omega = \frac{1}{2}\, \omega\wedge\omega~,\]
is a closed form. Taking the exterior derivative of equation \eqref{eq:IntomHF} and \eqref{eq:InPsiHF} one finds differential equations for the torsion classes
\be
\begin{split}
\dd W_3 &= \frac{3}{2}\, {\rm Im}\left(W_0\, \overline W_2\wedge\omega
+ \dd W_0 \wedge \overline\Psi\right)~,\\
 \dd^\dagger W_2 &= 2\,J(\dd W_0)~.
 \end{split}
 \ee

The flow equations for $\omega$ in this case are
\begin{align*}
\partial_t\omega &= \lambda_t\, \omega + h_t~,\\
\lambda_t &=2\, \tau_{1\, t} - N_t\, {\rm Im}W_0~,\quad  2\, \tau_{1\, t}= \partial_t\phi ~,\quad \dd\lambda_t = 0~,\\
h_t &= - N_t\, {\rm Im}W_2  -  \dd N_t\lrcorner {\rm Re}\Psi~.
\end{align*}
Since $\lambda_t$ is $\dd$-constant and $W^\omega_1 = \dd\phi = 0$ for all $t$, we find a constraint
\be \dd\big(N_t\, {\rm Im}W_0\big) = 0~.
\label{eq:hfconstr}\ee
This is similar to the nearly K\"ahler case, with the difference that $W_0$ need not be constant. 

The flow equations for $\Psi$ are
\begin{align}
\partial_t\Psi &= K_t\, \Psi + \chi_t~,\nn\\
 {\rm Re} K_t &= \textstyle{\frac{3}{2}}\, \lambda_t~,\qquad 
 {\rm Im} K_t = N_t \left( \textstyle{\frac{7}{4}}\,  \tau_0 - 3\,  {\rm Re} W_0 \right)~,\nn\\[3pt]
{\rm Re}\chi_t &= \dd N_t\wedge\omega + N_t(W_3 - \gamma)~, \quad \omega\lrcorner\gamma = 0~,
\label{eq:ReChi}\\
{\rm Im}\chi_t &= - J(\dd N_t)\wedge\omega - N_t\, J(W_3 - \gamma)~.
\label{eq:ImChi}
\end{align} 

To study the flow of $SU(3)$ structures, it is useful to record the $t$-variations of the torsion classes. Compatibility between the flow equation for $\omega$ and the variation of equation \eqref{eq:IntomHF}, or equivalently,
using equation \eqref{eq:flowW3} with $\alpha=1$ and vanishing Lie forms,  gives
\be
\begin{split}
\partial_t W_3 - \lambda_t\, W_3 &= \dd h_t + {\textstyle\frac{3}{2}}\, {\rm Im}\big((\partial_t W_0 + (\bar K_t - \lambda_t) W_0)\overline\Psi\big)\\[3pt]
&+ {\textstyle\frac{3}{2}}\, N_t \big( ({\rm Re} W_0)\, J(W_3 - \gamma) + ({\rm Im} W_0)\, (W_3 - \gamma)\big)\\[3pt]
&+ {\textstyle\frac{3}{2}}\, \big( ({\rm Re} W_0)\, J(\dd N_t) + ({\rm Im} W_0)\, \dd N_t\big)\wedge\omega \; .
 \label{eq:flowW3HF}
\end{split}
\ee
Note that this equation satisfies
\[ \partial_t(W_3\wedge\omega) = \partial_t W_3\wedge\omega + W_3\wedge h_t = 0~,\]
as required by the primitivity of $W_3$.  It should also satisfy
\[ \partial_t(W_3\wedge\Psi) = \partial_t W_3\wedge \Psi + W_3\wedge\chi_t = 0~,\]
as required by the fact that $W_3$ is a three form of type $(2,1)+(1,2)$.  This constraint gives a flow equation for $W_0$
\begin{equation}
6\, (\partial_t W_0 + (\bar K_t - \lambda_t)\, W_0) = -2\, i\, \Delta_d\, N_t + N_t\, {\rm Im} W_2\lrcorner W_2
 + N_t\, W_3\lrcorner (-J\gamma + i\, (W_3 - \gamma))~.
 \label{eq:flowW0HF}
\end{equation}
Compatibility between the flow equation for $\Psi$ and the variation of equation \eqref{eq:InPsiHF} gives a flow equation for $W_2$ 
\begin{equation}
\begin{split}
(\partial_t W_2 + (-K_t + \lambda_t)\, W_2 )\wedge \omega & =
\dd \chi_t + \dd K_t\wedge\Psi - 2\, (\partial_t W_0 + (-K_t +2\, \lambda_t) W_0)\, \rho\\
&- ( 2\, W_0\, \omega + W_2 )\wedge h_t~.
\end{split}
\label{eq:chiandW2}
\end{equation}
By assumption, the variation of the Lie forms is zero along a half-flat flow.

The first Bianchi identity $\dd S^X = 0$ is a constraint on the exterior derivative of $J\gamma$.  With
\[ S^X = 
- \textstyle{\frac{1}{2}}\, \left(\textstyle{\frac{49}{30}} \, \tau_0
- 3 {\rm Re}W_0\right) {\rm Re}\Psi
+ N_t^{-1}\, \tau_{1\, t}\, {\rm Im}\Psi + J\gamma~.\]
we have
\be
\begin{split}
- \dd J(\gamma) = \dd (N_t^{-1} \tau_{1t}) \w {\rm Im} \Psi 
&+
\left(
N_t^{-1}\, \tau_{1\, t}\, {\rm Im} W_0 
- \textstyle{\frac{1}{2}}\, \left(\textstyle{\frac{49}{30}} \, \tau_0
- 3 {\rm Re}W_0\right) {\rm Re} W_0
\right) \omega \w \omega\\
+ \textstyle{\frac{3}{2}} \dd {\rm Re}W_0 \w {\rm Re}\Psi
&+
\left(
N_t^{-1}\, \tau_{1\, t}\, {\rm Im} W_2 
- \textstyle{\frac{1}{2}}\, \left(\textstyle{\frac{49}{30}} \, \tau_0
- 3 {\rm Re}W_0\right) {\rm Re} W_2
\right) \w \omega \; .
\end{split}
\label{eq:hfbi1}
\ee
Taking the wedge product of this equation with $\omega$, and recalling that $\gamma$ is primitive, we find 
\begin{equation} \frac{1}{3}\, W_3\lrcorner\gamma = 
2\, N_t^{-1}\, \tau_{1\, t}\, {\rm Im} W_0 
- \left(\frac{49}{30}\, \tau_0 - 3\, {\rm Re} W_0\right)\, {\rm Re}W_0~,
\label{eq:W3gamma}
\end{equation}
which will be of use below.

The second Bianchi identity, $\dd S_t = \partial_t S^X$, where
\[ S_t = N_t\left( {\rm Re} W_2 + 2 \left(\textstyle{\frac{7}{15}}\, \tau_0 - {\rm Re} W_0\right)\, \omega\right) ~,\]
determines, among other things, the flow of $J\gamma$.  For our discussion below, we will be particularly interested in the necessary constraints obtained by wedging the second Bianchi identity with ${\rm Re} \Psi$ and $\omega$. From the former, we obtain
\be
\label{eq:hfbi2Psi}
\begin{split}
N_t\, \partial_t(N_t^{-1}\, &\tau_{1\, t}) 
+  3 \, \tau_{1\, t}\, (\tau_{1\, t} - N_t\, {\rm Im}W_0)\\
&=
 - \frac{1}{4}\, N_t^2\, \left(||{\rm Re} W_2||^2 + ||\gamma||^2
 + 21\, \Big( {\rm Re}W_0 - \frac{1}{2}\tau_0\Big)^2
 + \frac{7}{15}\, \tau_0^2 \right)
~,
 \end{split}
\ee
where we have used equation \eqref{eq:W3gamma}. From the latter we get
\be
\label{eq:hfbi2omega}
\begin{split}
N_t\, ({\rm Im} W_2\lrcorner \gamma + 6\, J( \dd{\rm Re} W_0) &- {\rm Re}W_2\lrcorner JW_3)
=
\dd N_t\lrcorner{\rm Re}W_2 - 2 \, N_t^{-1}\, \tau_{1\, t}\, \dd N_t\\[3pt]
 & + \textstyle{\frac{7}{2}}\,\left( \tau_0 - 2\, {\rm Re} W_0\right) J\dd N_t
 - \left((\dd N_t)\lrcorner{\rm Re}\Psi\right)\lrcorner\gamma~.
\end{split}
\ee
Note that \eqref{eq:hfbi2Psi} can be viewed as a differential equation for the dilaton $\phi(t)$, whereas \eqref{eq:hfbi2omega} determines the one-form $\dd N_t$.  Moreover, the right hand side of equation \eqref{eq:hfbi2Psi} is negative definite, hence it gives an inequality
\begin{equation}
N_t\, \partial_t(N_t^{-1}\, \tau_{1\, t}) 
+  3 \, \tau_{1\, t}\, (\tau_{1\, t} - N_t\, {\rm Im}W_0)\le 0~,\label{eq:ineq}
\end{equation}
which is saturated only when $X$ is a Calabi--Yau manifold.

Given the complexity of the flow equations \eqref{eq:flowW3HF}-\eqref{eq:chiandW2} and the constraints from the Bianchi identities \eqref{eq:hfbi1}-\eqref{eq:hfbi2omega}, we will not attempt to solve this system of equations in full generality. Instead, we proceed to study simplified cases, where we make assumptions on the flux and the embedding of the $SU(3)$ structure. 

We begin by an inspection of the first Bianchi identity \eqref{eq:hfbi1}. This is a strong constraint on $\gamma$ that completely determines its non-coclosed components. If we moreover assume that $\gamma$ is coclosed, \eqref{eq:hfbi1} immediately tells us that 
\be \label{eq:hfcoclosed}
\dd J(\gamma) = 0 \implies \dd N_t = 0~, \quad \dd W_0 = 0~, \quad \dd^\dagger W_2 =0~, \quad
\dd W_3 = \frac{3}{2} {\rm Im} (W_0 \overline{W}_2) \w \omega~,
\ee
and furthermore gives relations that determine the flow of the complex phase of $W_0$ and $W_2$. Using this in equation \eqref{eq:ineq} and \eqref{eq:hfbi2omega}, we derive the necessary constraints
\be
\nn
\partial_t \tau_{1t}  \le \frac{3}{2} \tau_{1t} N_t {\rm Im} W_0 \; ,\qquad
0 = {\rm Re} \left(W_2\lrcorner [W_3 -i J(\gamma)]\right) \; ,
\ee
where the first inequality can only be saturated when $X$ is a Calabi--Yau manifold as mentioned above. 

It is interesting that the requirement that $\gamma$ is coclosed has so far-reaching consequences for the $SU(3)$ structure and its  $G_2$ embedding. Recall that in our first order analysis of $G_2$ flows away from a Calabi--Yau manifold $X(t=0)$, we showed that a half-flat
$SU(3)$ structure was obtained when $\gamma(t=0)=0$, but  $\dd N_0 \neq 0$ (cf.~equation \eqref{eq:CYtoHF}). Although the flow of a half-flat $SU(3)$ structure with vanishing $\gamma$ would be a  natural guess for the completion of this first order flow, the constraint we just derived shows that this is not the case. A non-zero $\gamma$ seems necessary in order for a half-flat flow to contain Calabi--Yau loci at finite values of $t$.

Half-flat flows with coclosed $\gamma$ have been studied before \cite{Hitchin:2000jd,chiossi,Gray:2012md,Lukas:2010mf,Klaput:2011mz,Klaput:2012vv,Klaput:2013nla}. A particular case is Hitchin flow, where not only $\gamma$, but also $S^X$ and $S_t$ vanish, and the dilaton and embedding parameter $\alpha$ are both constant. In this case, we can embed the half-flat $SU(3)$ structure in  a $G_2$ holonomy manifold, and the flow equations can be summarised as
\be \label{eq:hitchflow}
\begin{split}
\partial_t \rho = - \dd \mbox{Im} \Psi  \\
\partial_t \mbox{Re} \Psi = \dd \omega \; .
\end{split}
\ee
in the gauge where $\mbox{Re} \Psi$ is taken to be constant.\footnote{Recall that when the two Lie forms vanish, we are free to choose the phase of $\Psi$ without changing the properties of the $SU(3)$ structure. Thus, we can always choose a gauge where the real (imaginary) part of $\Psi$ is closed, and hence $W_0, W_2$ are imaginary (real). Once we have fixed $\alpha=1$, these gauge choices correspond to different ways of embedding the $SU(3)$ structure in a $G_2$ structure.} From our discussion above, it is clear that this flow cannot contain  Calabi--Yau loci. This can also be seen as follows: $\Psi$ and $\omega$ are closed at a Calabi--Yau locus, which means that the right-hand sides of the equations \eqref{eq:hitchflow} vanish at that point of the flow. It is easy to see that the first order derivatives of the torsion classes \eqref{eq:flowW3HF}-\eqref{eq:chiandW2}  then also vanish. Moreover, a straightforward inductive analysis shows that all higher order $t$-derivatives of the torsion classes vanish, and hence $\omega$ and $\Psi$ have to remain closed along the flow. Consequently, Calabi--Yau manifolds are fix points of the Hitchin flow.

In order to connect first order analysis of the flow from a Calabi--Yau threefold in section \ref{sec:o1sum}, we therefore need flows where either the embedding parameter $N_t$ is non-constant and the flux is non-vanishing. In the next subsection, we analyse a simple type of such a flow.

\subsection{Symplectic half-flat}

There is a way to simplify the analysis of half-flat flows, that still allows to connect with the perturbative analysis of the Calabi--Yau flow. In section \ref{sec:o1sum} we found that, at linear order, a Calabi--Yau manifold with constant $N_0$ and non-harmonic $\gamma$ resulted in a symplectic half-flat $SU(3)$ structure with real $W_2$, see equation \eqref{eq:CYtosympHF}. Therefore, let us now consider the flow of such $SU(3)$ structures, that is where
 \begin{align}
\dd\omega &= 0~.\label{eq:omegaSHF}\\
\dd\Psi &= W_2 \wedge\omega~,\label{eq:PsiSHF}
\end{align}
Since $\omega$ is closed, it provides the six-manifold with a symplectic structure. As above, we will look for points along the flow where $W_2$ vanishes.

We begin by proving that 
\be
\dd N_t = 0 ~, \qquad {\rm Im} W_2 = 0~,
\label{eq:sympcostr}
\ee
along this flow. To see this consider the flow of the torsion class $W_0$, \eqref{eq:flowW0HF}, for symplectic half-flat $SU(3)$ structures. The real and imaginary parts of this equation must vanish separately, so with  $W_0 = 0 = W_3$ for all $t$ we have
\be
0 = -2\,  \Delta_d\, N_t 
+ N_t\, ({\rm Im} W_2\lrcorner{\rm Im} W_2 )
~,
\quad \quad \quad 
0 =  N_t\, {\rm Im} W_2\lrcorner{\rm Re} W_2 ~.
\ee
Taking the Hodge dual of the first equation we find
\[ - 2\, \dd(*\dd N_t) = N_t\, {\rm Im} W_2 \wedge {\rm Im} W_2 \wedge \omega 
=  {\textstyle\frac{1}{6}}\, N_t\, ||{\rm Im} W_2||^2 \,\omega\wedge\omega\wedge\omega~.\]
As the left hand side is $\dd$-exact, and the right hand side is either strictly positive or strictly negative, this identity can only be true if both sides vanish.  Hence, equation \eqref{eq:sympcostr} follows. Recall that any $t$-dependence of a $\dd$-constant $N_t$ can be absorbed by a coordinate change. We will therefore take $N_t$ constant below. 

The flow equations for $\omega$ become (note that $h_t = 0$)
\[
\partial_t\omega = \lambda_t\, \omega~,\quad
\lambda_t =2\, \tau_{1\, t} = \partial_t\phi ~,\quad \dd\lambda_t = 0~,
\] 
with solution
\be
\omega(t,x) = e^{\phi(t)} \omega_{0} (x) \; .
\ee
The closure of $\lambda_t$ is automatic, since we assume that $W_1^{\omega} = \dd \phi$ is zero along the flow. Thus, requiring that $\omega$ is closed for all $t$ leads to no further constraints. This is consistent with the fact that the flow equation for $W_3$, \eqref{eq:flowW3HF}, is trivially satisfied along the flow.

The flow equations for $\Psi$ are (note that $\dd N_t = 0$)
\begin{align}
\partial_t\Psi &= (\textstyle{\frac{3}{2}}\, \lambda_t~+ i \textstyle{\frac{7}{4}}\, N_t\, \tau_0) \Psi +  \chi_t~,\nn\\
{\rm Re}\chi_t &=  - N_t\, \gamma~, \quad 
{\rm Im}\chi_t =  N_t\, J\gamma~, \quad \omega\lrcorner\gamma = 0~.
\label{eq:ImChi2}
\end{align} 
For future reference, we record the compatibility conditions coming from $\partial_t \dd \Psi = \dd \partial_t \Psi$:
\be \label{eq:sympcst3}
\begin{split}
N_t \dd (J\gamma) = \dd{\rm Im} \chi_t &= - \textstyle{\frac{7 }{4}}\, \tau_0 N_t \,{\rm Re} W_2 \w \omega \; ,\\
-N_t \dd \gamma = \dd{\rm Re} \chi_t &= (\partial_t {\rm Re}W_2 - \textstyle{\frac{1}{2}} \lambda_t {\rm Re}W_2)\w \omega \; .
\end{split}
\ee

Finally, the supersymmetry equations fix the flux to
\be
\begin{split}
S^X &= 
- \textstyle{\frac{49}{60}} \, \tau_0 \, {\rm Re}\Psi
+ N_t^{-1}\, \tau_{1\, t}\, {\rm Im}\Psi + J\gamma ~,\\
S_t &= \frac{14}{15}\, N_t\, \tau_0\, \omega  + N_t\, {\rm Re} W_2 \; .
\end{split}
\ee
As above, the first Bianchi identity is a constraint on the exterior derivative of $J\gamma$, which can be read off from \eqref{eq:hfbi1} after setting $W_0$ to zero. The second Bianchi identity determines the flow of $J\gamma$ with $t$. The compatibility of these equations  with \eqref{eq:sympcst3} must be checked. First, we have
\be
0 = \dd S^X 
\Leftrightarrow
\dd(J \gamma) = \textstyle{\frac{49}{60}} \, \tau_0
 {\rm Re} W_2 \w \omega \; ,
 \ee
which is compatible with \eqref{eq:sympcst3} if and only if
\be \label{eq:sympcst4}
\tau_0 = 0 \; .
\ee

The second Bianchi identity gives, upon using \eqref{eq:sympcst4}, 
\be \label{eq:sympcst5}
\begin{split}
\partial_t S^X &= \dd S_t~,
\\ \iff 
N_t \dd {\rm Re}W_2 &=  N_t^{-1} \left( \partial_t \tau_{1t} + 3  \tau_{1t}^2 \right){\rm Im}\Psi +\partial_t (J \gamma) + \tau_{1t} \, J \gamma 
\\ \iff 
N_t^2 \dd {\rm Re}W_2
&=
\partial_t^2\, {\rm Im} \Psi 
- 2 \tau_{1t} \partial_t \, {\rm Im} \Psi 
-2 \partial_t \tau_{1t} \, {\rm Im}\Psi   \; .
\end{split}
\ee
Thus, given $W_2$, the second Bianchi identity further constrains the flow of ${\rm Im} \Psi$. This equation is difficult to solve in general, but we note that all terms in the last equation are $\dd$-closed. Thus, the equation is compatible with the constraints on the flow.

Let us summarise the conditions for a symplectic half-flat flow. We have
\be
N_t \mbox{ constant}, \quad \tau_0 = 0, \quad {\rm Im} W_2 = 0 \; ,
\ee
and can rewrite \eqref{eq:sympcst3} and \eqref{eq:sympcst5} as constraints on $\gamma$:
\be \label{eq:sympcst6}
\begin{split}
\dd J \gamma &= 0~,\\
-N_t \dd \gamma &= (\partial_t {\rm Re}W_2 - \tau_{1t} {\rm Re}W_2)\w \omega \; , \\
\partial_t (J \gamma) + \tau_{1t} \, J \gamma &= N_t \dd {\rm Re}W_2 - N_t^{-1} \left( \partial_t \tau_{1t} + 3  \tau_{1t}^2 \right){\rm Im}\Psi 
 \; .
\end{split}
\ee
Consider the wedge product with ${\rm Re}\Psi$ of the last equation (see equation \eqref{eq:hfbi2Psi})
\[ - N_t^2\, (\gamma\lrcorner\gamma  + {\rm Re} W_2\lrcorner {\rm Re} W_2) =
4 \left( \partial_t \tau_{1\,t} + 3  \tau_{1\,t}^2 \right)~.\]
As the left hand side is  negative definite on a symplectic half-flat manifold, it must be the case that
\begin{equation}
 \partial_t \tau_{1\,t} + 3  \tau_{1\,t}^2  < 0~.\label{eq:negfunct}
 \end{equation}
Consequently, flows with constant dilaton $\phi$ are thus excluded for symplectic half-flat SU(3) structures: the inequality in \eqref{eq:negfunct} can only be saturated if $\gamma$ and $W_2$ both vanish. This takes us back to a Calabi--Yau flow as discussed earlier. 

The flow specified by the equations \eqref{eq:sympcst6} depends on whether $\gamma$ vanishes or not. We therefore have two different classes of solutions.

\subsubsection{$\gamma=0$}

In the case that $\gamma=0$ for all $t$, which is in particular true if ${\rm dim}(H^{(2,1)}(X)\oplus H^{(1,2)}(X))^{\rm prim}$ is zero, we can solve \eqref{eq:sympcst6} explicitly. First, note that $\gamma=0$ implies that $\partial_t \Psi = 3 \tau_{1t} \Psi$ and $\partial_t W_2 = \tau_{1t} W_2$. The flow equations for $\omega, \Psi$ and $W_2$ then integrate to
\be
\label{eq:scaleshf}
\begin{split}
\omega(t,x) &= e^{\phi(t)} \omega_0(x)~, \\
\Psi(t,x) &= e^{\frac{3}{2}\phi(t)} \Psi_0(x)~, \\
W_2(t,x) &= e^{\frac{1}{2}\phi(t)} W_{2,0}(x) 
 \; ,
\end{split}
\ee
where a zero denotes the value of the form at some point $t=0$ along the flow, and $\phi(0)=0$. Clearly, there is no non-singular Calabi--Yau locus in this flow: if $W_2$ goes to zero, so do $\omega$ and $\Psi$, leading to a manifold of vanishing volume. As we discuss further below, this is consistent with the first order analysis of flows from Calabi--Yau manifolds.

The third equation of \eqref{eq:sympcst6} can be decomposed into
\be \label{eq:sympcst7}
\begin{split}
\dd ({\rm Re}W_{2,0}) &=  -\frac{1}{2} N_t^{-2} C_0 \, {\rm Im}\Psi_0~, \\
-C_0 &= e^{\phi} \left( \partial_t^2 \phi + \frac{3}{2}  (\partial_t \phi)^2 \right) 
 \; ,
\end{split}
\ee
where $C_0$ must be positive by \eqref{eq:negfunct}. The first equation is a constraint on the $SU(3)$ structure of $X(t=0)$. The second equation is an non-linear differential equation for the flow of $\phi$:
\begin{equation*}
\phi'' + \frac{3}{2}\, (\phi')^2 + a^2\, e^{-\phi} = 0~,
\end{equation*}
where $a^2 =  2\, C_0$ is a positive constant. To solve this, we proceed as in the analysis of nearly K\"ahler flows (cf.~section \ref{sec:NK}): let
\begin{equation}
 \dd T = \pm N_t\, e^{-\phi/2}\, \dd t~.\label{eq:Toft}
 \end{equation}
In terms of $T$ the differential equation becomes
\[ \frac{\dd^2\phi}{\dd T^2} + \left( \frac{\dd\phi}{\dd T}\right)^2 + a^2 = 0~.\]
Substituting $v = \frac{\dd\phi}{\dd T}$ and $w = v^2$ we find a first order differential equation for $w$. Substituting back, we can solve for $\phi(T)$.
We have
\begin{equation*}
\pm \, (T-T_0) = \int\frac{1}{\sqrt{- a^2 + C_1\, e^{-2\phi}}}\, \dd\phi
= - \frac{1}{a}\, {\rm Arctan}\left(\frac{1}{a}\, \sqrt{- a^2 + C_1\, e^{-2\phi}}\right)~.
\end{equation*}
Note that the right hand side of always negative irrespective of the sign one chooses for $a = \pm\, \sqrt{2\, C_0}$~.  Inverting this relation we obtain
\begin{equation}  e^{2\phi} =\frac{C_1}{ a^2}\, \cos^2\left(a\, (T-T_0)\right)~,
\label{eq:phiofT}
\end{equation}
where $C_1$ must be positive. Moreover, when taking the square root of this relation, one must take the absolute value of the cosine so that $e^\phi$ is always positive. 

At infinitely many points $T_n$, where  $a\, (T-T_0)= (n+1)\pi/2$ and $n$ is an integer, we see that 
\be
e^{2\phi}|_{T_n} = 0 \; .
\ee
Recalling the relations \eqref{eq:scaleshf}, we thus see that the volume of the $SU(3)$ structure manifold $X(t)$ vanishes at regular intervals. However, this is no curvature singularity, since the scalar curvature is proportional to $||W_2||^2$ and hence goes to zero at these points. To follow the flow through these possibly singular points\footnote{It is possible that the $G_2$ manifold $Y$ compactifies and stays regular when the $SU(3)$ structure fibre shrinks to zero size, in analogy with how a one-circle fibered over an interval forms a two-sphere by shrinking to zero size at the boundaries of the interval.}
requires an analysis that goes beyond the supergravity approximation used in this paper. Although $g_s$ corrections should still be small at $T_n$, $\alpha'$ corrections might still be important. We hope to come back to this discussion in the future. For the present analysis, we will circumvent this discussion by focusing on the flow of $\phi$ in the domain  $-\pi/2< a(T-T_0)< \pi/2$.

Finally, to find $\phi$ as a function of $t$, we solve equation \eqref{eq:Toft}.  Substituting \eqref{eq:phiofT} into \eqref{eq:Toft}, we have
\begin{equation*}
\pm\, N_t\, (t- t_0) = \int e^{\phi/2}\, \dd T =
\left(\frac{C_1}{ a^2}\right)^{1/4}\,  \int \left(\cos^2\left(a\, (T-T_0)\right)\right)^{1/4}\, \dd T~.
\end{equation*}
For values of $T$ in the domain $-\pi/2< a(T-T_0)< \pi/2$, the integral gives
\begin{equation*}
\pm\, N_t\, (t- t_0) = 2\,  \left(\frac{C_1}{ a^2}\right)^{1/4}\, E\left(\frac{a(T-T_0)}{2}, 2\right)~,
\end{equation*}
where $E(x,m)$ is the elliptic function of the second kind. The range of $t$ is finite, and bounded by $t_{\pm} = t_0 \pm 2 N_t^{-1} \left(\frac{C_1}{ a^2}\right)^{1/4}\, E\left(\frac{\pi}{4}, 2\right)$. By inverting this relation to get $T(t)$, we can plot the evolution of $\phi$ in the interval $t_- \le t \le t_+$. The result is given in figure \ref{fig:shfu}, where we have chosen $t_0 = 0$ and normalised $\phi$ so that $\phi(0) = 0$.

\begin{figure}
\centering
\includegraphics[width=0.6\textwidth]{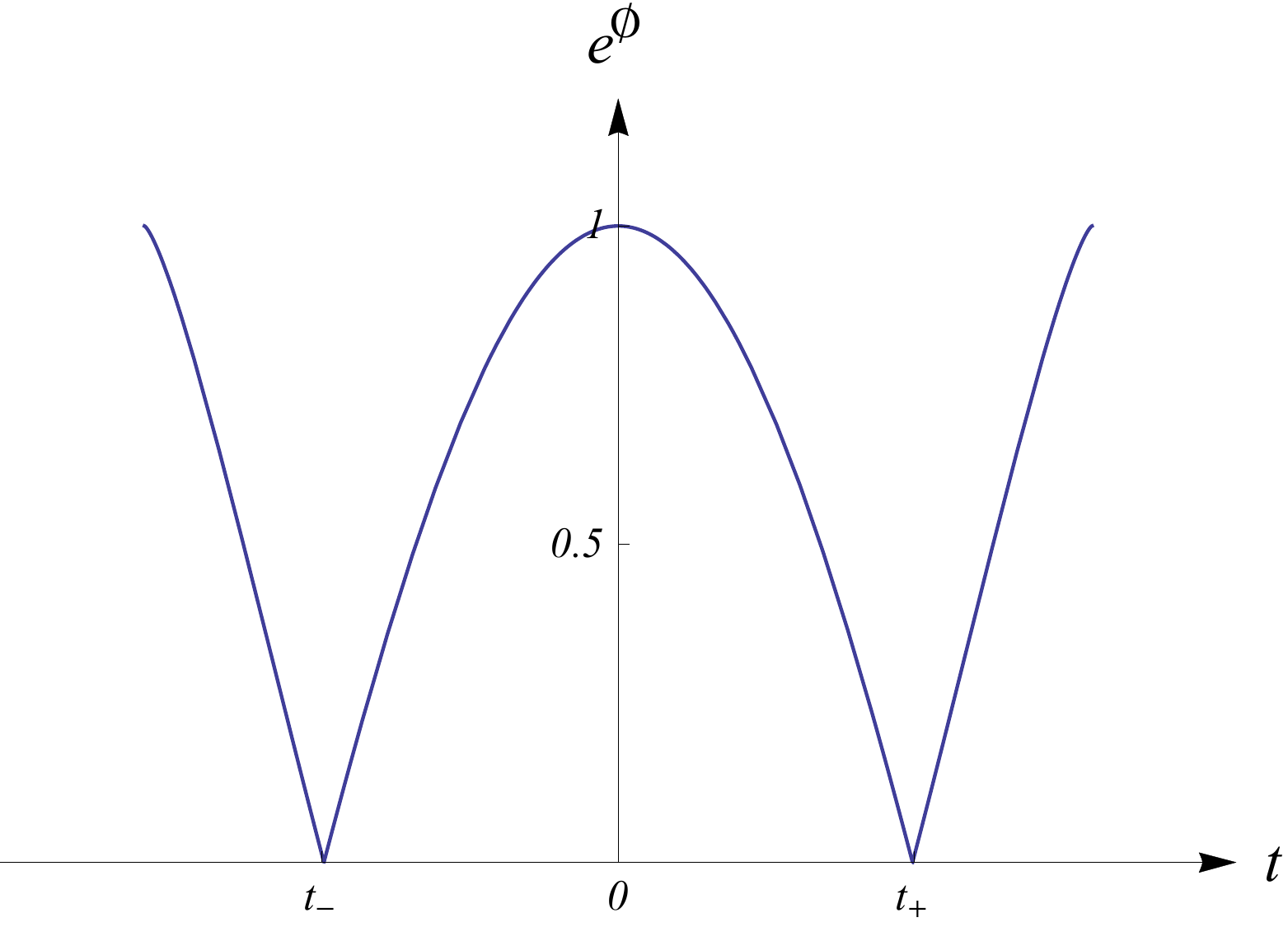}
\caption{$e^{\phi}$ as a function of $t$ for a symplectic half-flat $SU(3)$ structure manifold $X(t)$ with vanishing $\gamma$. At the points $t_{\pm}$, $X(t)$ has vanishing volume and curvature.}
\label{fig:shfu}
\end{figure}

\subsubsection{$\gamma \neq 0$}

A non-zero $\gamma$ makes the equations \eqref{eq:sympcst6} more intricate. Indeed, without recourse to the analysis of explicit examples, which goes beyond the scope of this paper, only a few qualitative observations can be made. First, the second equation in \eqref{eq:sympcst6} clarifies that it is only when $\gamma$ is non-closed, that ${\rm Re}W_2$ can vanish at some point $t=t_*$ along the flow, without $\omega$ also going to zero. This is in agreement with the first order analysis of the flow away from Calabi--Yau, where we found that a non-closed $\gamma$ was required to generate $\delta_1 W_2$. In contrast to the nearly K\"ahler case discussed in section \ref{sec:NK}, there are no direct obstructions to $t_*$ being finite.\footnote{ However, to answer this question in ernest requires the analysis of example manifolds with  ${\rm dim}(H^{(2,1)}\oplus H^{(1,2)})^{\rm prim} \neq 0$. On such spaces, $W_2$ can be explicitely computed, and various ans\"atze for $\gamma$ can be tested. We hope to come back to such an analysis in the future.}

Second, we recall \eqref{eq:negfunct} which shows that symplectic half-flat flows require a non-constant $\phi(t)$ at least for some range of $t$, in order not to fall back into a Calabi--Yau flow for all $t$. This constraint requires that $e^{\frac{3}{2} \phi}$ is a concave function of $t$. As such, it will cross the $t$-axis at at least one point, and at these point the $SU(3)$ structure of $X(t)$ is singular, since its volume goes to zero.

Finally, it should be mentioned that it is still allowed that $e^{\frac{3}{2} \phi}$ asymptotes to a linear, or even a constant function for some $t$. Concretely, suppose that the strict inequality \eqref{eq:negfunct} holds for $t < t_*$, but is saturated after $t_*$: the symplectic half-flat manifold would then flow to a Calabi--Yau manifold without flux, with volume proportional to $e^{\phi}$. However, to keep the dilaton evolution smooth, it is likely that the transition will be asymptotic, and the point $t_*$ will be at infinity. We thus conclude that non-singular Calabi--Yau loci at finite $t$ require the presence of flux, and in particular a non-vanishing $\gamma$.
\vspace{0.5cm}

\section{Conclusions}
\label{sec:conclusion}
In this paper, we have been concerned with heterotic string compactifications to a four-dimensional non-compact space that is is not maximally symmetric. To be more precise, we have focused on four-dimensional half-BPS domain wall compactifications. We saw that by allowing for a non-maximally symmetric spacetime, the internal six-dimensional space is in general torsional, even at zeroth order in $\alpha'$. The compact geometry is, however, always a manifold with $SU(3)$ structure.

This paper is a continuation of an ongoing program that strives to get a better understanding of heterotic string compactifications to non-maximally symmetric spacetimes \cite{Lukas:2010mf, Klaput:2011mz, Gray:2012md, Klaput:2012vv, Klaput:2013nla, Larfors:2013zva, Maxfield:2014wea, Gurrieri:2004dt,Micu:2004tz,deCarlos:2005kh,Gurrieri:2007jg,Micu:2009ci,Haupt:2014ufa,Held:2010az,Held:2011uz,Chatzistavrakidis:2012qb}. The physical motivation for these studies is that such compactifications might, after inclusion of non-perturbative effects, complete to phenomenologically relevant four-dimensional models. Mathematically, one goal of this program is to get a better understanding of the moduli space of the $SU(3)$ structures that arise in flux compactifications. This is a very hard question in general. For instance, the usual cohomological tools available for the analysis of the moduli space of Calabi--Yau compactifications no longer apply, or at least require modifications, due to the torsional nature of the compact space.

By focusing on domain wall compactifications, the  $SU(3)$ structure can be embedded into a non-compact seven-dimensional $G_2$ structure. Such an embedding corresponds to a flow along a path in the $SU(3)$ structure moduli space, thus facilitating the analysis of some of the features of this space. In this paper, we investigated how restrictive the embedding is when selecting these paths. For example, is it possible to flow from one $SU(3)$ structure with a certain type of torsion classes, to another with different torsion classes turned on? By considering both $SU(3)$ structure preserving and changing flows, we showed by examples that the answer to this question is yes. In particular, we derived the constraints for a $G_2$ embedding to preserve $SU(3)$ structures of Calabi--Yau and nearly K\"ahler type. In both cases, we found that restrictions only apply to a function $N_t$, which encodes warping in the $G_2$ metric, and a primitive (2,1)-form $\gamma$, that is part of the $H$-flux in the configuration. Through a perturbative analysis of the Calabi--Yau flow to linear order, we showed  that if the constraints on $N_t$ and $\gamma$ are violated, torsion classes are switched on by the flow. 

Additionally, we have analysed the flows of nearly K\"ahler and half-flat $SU(3)$ structures, and provided evidence that both contain Calabi--Yau loci. For nearly K\"ahler manifolds, where we solved the flow equations explicitly, these loci are necessarily in the limit where $t$ goes to infinity. For half-flat manifolds, we found support for the existence of Calabi--Yau loci also at finite distance, but due to the complexity of the equations in this case, we cannot present a definite answer to this question. Nevertheless, this does not render the analysis void, as we have discovered attributes to the flow, necessary for consistent non-Calabi--Yau solutions. For example, it seems that at least one singularity or ``domain wall'' is required along the flow. It would be interesting to find a physical interpretation of these walls, possibly in terms of NS5-branes and Kaluza--Klein monopoles.

From a physics perspective, the relaxation of constraints on the compact geometry can be understood since there is less supersymmetry, and hence more freedom, in the compactification compared with the maximally symmetric case. Still, the $G_2$ structure embedding is under enough control to provide a fertile ground for the study of string compactifications and the moduli space of $SU(3)$ structures in general. Our study has uncovered some features of these spaces, but many intriguing questions remain for future studies.

These include further investigations of the integrability of flows that connect Calabi--Yau and non-Calabi--Yau manifolds, for example by extending our perturbative analysis to higher orders. It would also be interesting to study the flow of specific examples of $SU(3)$ structure manifolds with known torsion classes. Our results indicate that symplectic half-flat manifolds are particularly relevant, as they are simple but might still flow to Calabi--Yau manifolds at finite distance. A study of the flow of smooth compact toric varieties with $SU(3)$ structure, that have been constructed in \cite{Larfors:2010wb,Larfors:2011zz,Larfors:2013zva}, might also be rewarding.

Another area to explore is whether the torsion-changing flows we find also change the topology of the six-dimensional manifold. From our analysis, there are no indications of this: what seems to happen is that different $SU(3)$ structures on the same topological space flow into each other. However, many flows contain singularities where the volume of the compact space goes to zero, and where the scalar curvature either goes to zero or infinity. In this paper, we have refrained from studying the flow through such points, but we cannot avoid noting the similarities with conifold points in Calabi--Yau moduli spaces, and how they smoothly connect different moduli spaces \cite{Candelas:1989js,Strominger:1995cz,Chen:2010ssa}. Topology-changing Calabi--Yau flows have recently been shown to be relevant in the study of two-dimensional supersymmetric theories  \cite{Gadde:2013sca}. The flows studied here might find similar applications.

In the present paper, we have analysed heterotic compactifications at zeroth order in $\alpha'$. It is known that effects at linear order in $\alpha'$ can change the flow quite drastically for certain $SU(3)$ structure manifolds, see e.g. \cite{Klaput:2012vv,Haupt:2014ufa}. Indeed, such effects couple the gauge and gravity sectors of heterotic supergravity, leading to a more intricate mathematical problem. It would be interesting to see how these effects impact on the findings of the present paper. A relatively straightforward study would be to extend the analysis of \cite{Klaput:2012vv,Haupt:2014ufa} to the phase-changing flow of nearly K\"ahler manifolds that we discussed in section \ref{sec:NK}.

Furthermore, in order to properly include the gauge sector, a modified  description of the heterotic configuration might be required. In \cite{delaOssa:2014cia, Anderson:2014xha,delaOssa:2014msa} it was shown that maximally symmetric heterotic compactifications to six dimensions can be encoded in a holomorphic structure on a certain extension bundle. This structure also relates to the generalised geometric description of the heterotic string \cite{Garcia-Fernandez:2013gja, Baraglia:2013wua, Bedoya:2014pma, Coimbra:2014qaa}. Moreover, by rephrasing the whole system in terms of a holomorphic structure, it was possible to compute the first order moduli space in the maximally symmetric case. This is done by means of cohomologies of the holomorphic structures involved. 
It may be that similar structures are required when considering the $\alpha'$ corrected domain wall compactifications. However, since in this case the base space of the bundle is non-complex in general, these structures will no longer be holomorphic. The framework would therefore have to be generalised to apply to the domain wall case. Hints for such a generalisation might be found in recent studies of cohomologies for flux compactifications of type II supergravities \cite{Tseng:2011gv}, and it would be interesting to study this in more detail in the future.

\section*{Acknowledgements}
The authors thank Thomas Bruun Madsen, Philip Candelas, Diego Conti, Tudor Dimofte, I\~naki Garcia-Extebarria, James Gray, Spiro Karigiannis and Simon Salamon for interesting discussions over the course of this project. XD would like to acknowledge the hospitality of Perimeter Institute where part of this project was undertaken. ML would like to thank the University of Oxford for hospitality during the completion of this work. XD's research was supported in part by the EPSRC grant BKRWDM00. The research of ML was supported by the Swedish Research Council (VR) under the contract 623-2011-7205.  ES was supported by the Clarendon Scholarship of OUP, and a Balliol College Dervorguilla Scholarship. Part of ES's work was also made in the ILP LABEX (under reference ANR-10-LABX-63), and was supported by French state funds managed by the ANR within the Investissements d'Avenir programme under reference ANR-11-IDEX-0004-02.

\newpage
\begin{appendix}

\section{Conventions and identities}
\label{ap:conv}

In this appendix we collect our conventions and some useful identities. Our conventions conform with those of \cite{huybrechts}.

\vskip10pt
\noindent\underline{Index conventions}: we use latin indices $m,n,...$ for real indices, and greek indices $\mu, \nu,...$ for holomorphic indices. 

\noindent\underline{Torsion}: 
\be 
T(\nabla) \cdot (X\wedge Y) = \nabla_X Y - \nabla_Y X - [X,Y]\quad\hbox{or, equivalently,}\quad
T(\nabla)_{mn}{}^p = 2\, \Gamma_{[mn]}{}^p~. \nn
\ee

\subsection{Differential forms and wedge products}
A $p$-form $\alpha$ is defined by
\be \alpha = \frac{1}{p!} \alpha_{m_1...m_p} \dd x^{m_1} \w ...\w \dd x^{m_p} \; . \ee
The wedge product between a $p$-form $\alpha$ and a $q$-form $\beta$ is 
\be \alpha \w \beta = \frac{1}{p!\, q!} \alpha_{m_1...m_p} \beta_{n_1...n_q}\dd x^{m_1} \w ...\w \dd x^{m_p} \w \dd x^{n_1} \w ...\w \dd x^{n_q}\; . \ee

\subsection{Hodge operator and the adjoint operator \texorpdfstring{$\dd^\dagger$}\ }

Let $\alpha_p$ be a $p$-form.  The Hodge $*$ operator on $\alpha_p$ for a $d$-dimensional manifold is defined by
\be
**\alpha_p = (-1)^{p(d-p)}\, \alpha_p~.\ee
Then if $d$ is even,
\[
**\alpha_p = (-1)^p\, \alpha_p~,\]
and if $d$ is odd
\[
**\alpha_p =  \alpha_p~.\]
In general, 
\be *\alpha_p = \frac{|\det g|^{1/2}}{(d - p)!\, p!}\ 
\alpha_{p \ m_1\ldots m_p} \epsilon^{m_1\ldots m_p}{}_{m_{p+1}\ldots m_{d}}\
{\rm d}x^{m_{p+1}}\wedge\cdots\wedge{\rm d}x^{m_{d}}.\ee 
The adjoint operator $\dd^\dagger$ on $\alpha_p$ is defined by
\be \dd^\dagger\alpha_p = (-1)^{dp + d + 1}\, *\dd *\alpha_p~.\ee
If $d$ is even, then
\[ \dd^\dagger\alpha_p = - *\dd *\alpha_p~,\]
and if $d$ is odd
\[\dd^\dagger\alpha_p = (-1)^p\, *\dd *\alpha_p~.\]

\subsection{Contraction operator \texorpdfstring{$\lrcorner$}\ }

The operator $\lrcorner$ denotes the contraction of forms, and is defined by
\be \alpha\lrcorner\beta = 
\frac{1}{k!\, p!}\ \alpha^{m_1\cdots m_k}\ \beta_{m_1\cdots m_k n_1\cdots n_p}
\dd x^{n_1}\cdots \dd x^{n_p}~,  \ee
where $\alpha$ is any $k$-form and $\beta$ is any $p+k$-form. It is easy to deduce the identity
\be \alpha\lrcorner \beta = (-1)^{p(d-p-k)}\, *(\alpha\wedge*\beta)~.\ee
\vskip10pt

\subsection{Conventions on almost complex manifolds}

\underline{Real and complex coordinates}: We define complex coordinates 
\[\zeta^\mu = x^{2\mu-1} + i\, x^{2\mu} \; ,
\mbox{where} \; \mu=1,...n , \]
and $2n =d$ is the real dimension of the manifold.

\vskip10pt
\noindent\underline{Hermitian form}:  
$$\omega(X,Y) = g(JX, Y)\quad\hbox{ or, equivalently,}\quad \omega_{mn} = J_m{}^p \ g_{pn}.$$
In complex coordinates $\omega_{\mu\bar\nu} = i\, g_{\mu\bar\nu}$. We have the identity
\be
{\rm d}^{\dagger}\omega = - |\det g|^{-1/2}\, \partial_p(|\det g|^{1/2} \omega^{pq})\, g _{qm}\, {\rm d} x^m= - \nabla^{LC}_n J_m{}^n \, {\rm d}x^m \; .\ee

\vskip10pt
\noindent\underline{Almost complex structure and projection operators}
\vskip10pt

\noindent Let $\alpha$ be a $(p,q)$-form.  The almost complex structure $J$ acts on a $(p,q)$ form $\alpha$ as follows
\be J(\alpha) = J_{m_1}{}^{n_1} ... J_{m_{p+q}}{}^{n_{p+q}} \alpha_{n_1 ... n_{p+q}} =  i^{p-q}\, \alpha~.\ee
The projection operators
\be P = \frac{1}{2}\, (1 - i J)~,\qquad Q = \frac{1}{2}\, (1 + i J)~.\ee
project a 1-form onto its holomorphic and antiholomorphic components, respectively.

\vskip10pt
\noindent\underline{Volume form}
\vskip -5pt
\begin{align}
\det g &= \det g_{mn} = - 2^{2n}\, (\det g_{\mu\bar\nu})^2~,\nn\\[3pt]
\dd{\rm vol}_X 
&= \sqrt{|\det g|}\ \dd x^1\wedge\cdots\wedge\dd x^d\nn\\
&= \frac{1}{d!}\  \sqrt{|\det g|}\ \epsilon_{m_1\cdots m_d}\, 
\dd x^{m_1}\wedge \cdots\wedge \dd x^{m_d}\nn\\[5pt]
& = \frac{i}{2^n}\, \sqrt{|\det g|}\ \dd \zeta^1\wedge\cdots\wedge\dd\zeta^n\wedge
\dd\zeta^{\bar 1}\wedge\cdots\wedge\dd\zeta^{\bar n}~, \nn \\
\dd{\rm vol}_X&= \frac{1}{n!}\, \underbrace{\omega\wedge\omega\wedge ... \w\omega}_\text{$n$ times} ~.\nn
\end{align}

\vskip10pt
\noindent\underline{Hodge $*$ operator}

Let $\alpha_k$ be a primitive $k$-form.  Then, with $j \le n+k$,
\be *(\omega^j\wedge\alpha_k) = (-1)^{k(k+1)/2}\, \frac{j!}{(n-k-j)!}\,  
\omega^{n-k-j}\wedge J(\alpha_k)~.\ee

Let $\alpha$ be a one form.  Then
\begin{equation}
\omega\lrcorner \dd\alpha = \dd^\dagger(\alpha\lrcorner\omega) + \alpha\lrcorner \dd^\dagger\omega
= - \dd^\dagger(J(\alpha)) + \alpha\lrcorner \dd^\dagger\omega~.\label{idone}
\end{equation}
This identity is a ``non-K\"ahler identity'' and reduces to one of the  K\"ahler identities when $\omega$ is $\dd$-closed.

\subsubsection*{Identitites in 6 dimensions}

\vskip10pt
\noindent\underline{Holomorphic volume form} 

\noindent Any almost complex 6-manifold has a nowhere-vanishing holomorphic top form, defined by
\bea \nn
\Psi &= \frac{1}{3!}\, \Psi_{\mu\nu\rho}\, \dd\zeta^\mu\wedge\dd\zeta^\nu\wedge\dd\zeta^\rho = 
\frac{1}{3!}\, f\, \epsilon_{\mu\nu\rho}\, \dd\zeta^\mu\wedge\dd\zeta^\nu\wedge\dd\zeta^\rho\\
&=
f \,\dd\zeta^1\wedge\dd\zeta^2\wedge\dd\zeta^3~.\nn
\eea
The following identities hold
\begin{align}
*\Psi &= -i\, \Psi~. \\
i\, \Psi\wedge\overline\Psi &= ||\Psi||^2\, \dd{\rm vol}_6 ~, \qquad
||\Psi||^2 = \frac{1}{3!}\, \overline\Psi^{\mu\nu\rho}\, \Psi_{\mu\nu\rho}
=  8 |f|^2 \, (\det g)^{-1/2}~.\nn\\[5pt]
\Psi_{mnp}\, \overline\Psi^{mnq} &= 2\, ||\Psi||^2\, P_p{}^q~, \qquad
\Psi_{mpq}\, \overline\Psi^{mrs} = 2\, ||\Psi||^2\, P_{[p}{}^r\, P_{q]}{}^s~.\nn
\end{align}
Note that $\omega \w \Psi = 0$ ; together $\omega$ and $\Psi$ determine an $SU(3)$ structure. We also have
 \[ *{\rm Re}(\alpha\Psi) = {\rm Im}(\alpha\Psi)~,\qquad
*{\rm Im}(\alpha\Psi) = - {\rm Re}(\alpha\Psi)~.\]

\vskip10pt
\noindent\underline{Useful identies} 

Let $\alpha$ be a $k$-form
\begin{equation}
\begin{split}
\omega\lrcorner(\omega\wedge\alpha) &= (3-k)\, \alpha + \omega\wedge(\omega\lrcorner\alpha)~.\\
(\omega\wedge\omega)\lrcorner (\alpha\wedge\omega) &= 
2(4-k)\, \omega\lrcorner\alpha +  \omega\wedge ((\omega\wedge\omega)\lrcorner\alpha)~,\qquad
k\ge 2~.
\end{split}
\end{equation}

\vskip10pt
Let $\beta$ be a 1-form
\begin{equation}
\begin{split}
\omega\lrcorner (\beta\wedge\Psi) &= - J(\beta)\lrcorner\Psi~.\\
\overline{\Psi}\lrcorner (\beta\wedge\Psi) &= - (\beta\lrcorner\Psi)\lrcorner\overline\Psi = - ||\Psi||^2\, \beta_m\, Q_n{}^m\, \dd x^n
= - ||\Psi||^2\, \beta^{(0,1)}~.
\end{split}
\end{equation}

\vskip10pt
Let $\beta$ be a 2-form
\begin{equation}
\begin{split}
\omega\lrcorner(\beta\wedge\Psi) &=
- (\omega\lrcorner\beta)\, \Psi
-\frac{1}{2}\, \omega^{mn}\, \beta_{mp}\, \Psi_{qrn}\, \dd x^p\wedge\dd x^q\wedge\dd x^r~.\\
\Psi\lrcorner(\beta\wedge\omega) &= -i\, \beta\lrcorner\Psi~. \\
\overline\Psi\lrcorner(\beta\wedge\Psi) 
&= \frac{1}{2}\, ||\Psi||^2\, \beta_{mn}\, Q_p{}^m\, Q_q{}^n\, \dd x^p\wedge\dd x^q
= ||\Psi||^2\, \beta^{(0,2)}~.
\end{split}
\end{equation}

\vskip10pt
Let $\beta$ be a 3-form
\begin{equation}
\begin{split}
\overline\Psi\lrcorner(\beta\wedge\Psi) &=   
- \frac{1}{3!}\, ||\Psi||^2\, \beta_{mnp}\, Q_r{}^m\, Q_s{}^n\,Q_q{}^
p\, \dd x^r\dd x^s\wedge\dd x^q = - ||\Psi||^2\, \beta^{(0,3)}~,\\
\overline\Psi\lrcorner (\beta\wedge\omega) &= (\overline\Psi\lrcorner\beta)\, \omega 
+ \textstyle{\frac{i}{2}}\, g_{rp}\, \overline\Psi^{\, mnr}\, \beta_{mnq}\, \dd x^p\wedge\dd x^q~.
\end{split}
\end{equation}

\vskip10pt
\noindent\underline{Identities involving a closed holomorphic $(3,0)$-form $\Psi$}

\vskip10pt
Let $f$ be a function
\begin{equation}
\dd f \lrcorner\Psi = *(\dd f\wedge *\Psi) = *\dd(f*\Psi) = - \dd^\dagger(f\Psi)~,
\label{eq:idtwo}
\end{equation}

\vskip10pt
Let $\beta$ be a two-form
\begin{equation}
\Psi\lrcorner\dd\beta = - \dd^\dagger(\beta\lrcorner\Psi)~,
\label{eq:idthree}
\end{equation}

\vskip10pt
\noindent\underline{Calabi--Yau identities:} 
On a Calabi--Yau manifold, we have
\be
\dd \omega = 0~, \qquad \dd \Psi = 0~.
\ee

Let $f$ be a function. Then
\be
\dd (\dd f \lrcorner \Psi) \w \omega =
\dd (\bar{\partial} f \lrcorner \Psi \w \omega) \sim
\dd (\Psi \w \bar{\partial} f) = - \Psi \w \dd \bar{\partial} f = 0~.
\ee
\vskip10pt

\noindent\underline{The almost complex structure $J$ in terms of $\Psi$:}
\vskip10pt
The (real part of the) complex 3-form $\Psi$ determines a unique almost complex structure $J$ such that $\Psi$ is a $(3,0)$-form 
with respect to $J$,
\begin{equation}
\label{eq:complexstr}
{J_m}^n=\frac{{I_m}^n}{\sqrt{-\frac{1}{6}{\rm tr} I^2}}~,\qquad
{I_m}^n=(\textrm{Re}\Psi)_{mpq}(\textrm{Re}\Psi)_{rst}\,\epsilon^{npqrst}~,
\qquad J^2=-{\bf 1}~.
\end{equation}

The variations of $J$ in terms of the variations of $\Psi$ are given by
\begin{equation}
\partial_t J = 2\, i\, \Delta_t = 2\, i\, \Delta_t{}_{\, n}{}^m\, \dd x^n\otimes \partial_m~,\qquad
\Delta_t{}_n{}^m = \frac{1}{2 \, ||\Psi||^2} \, \chi_{t\, npq}\, \overline\Psi^{mpq}~.
\label{eq:DeltafromPsi}
\end{equation}

\vskip10pt
\subsection{Conventions on 7-dimensional $G_2$ manifolds}

Let $\varphi$ be the 3-form that determines the $G_2$ structure. Then
\begin{equation}
||\varphi ||^2 = \varphi\lrcorner\,\varphi = \frac{1}{3!}\, \varphi^{a_1a_2a_3}\,\varphi_{a_1a_2a_3} = 7~.
\end{equation}

Let $\beta$ be a $k$-form with
$$\beta = \dd t \wedge \beta_t + \beta^X~,$$
then
\be *_7\,\beta = (-1)^k \, N\wedge *\beta^X + N_t^{-1}\, *\beta_t~.\ee

\end{appendix}


\newpage
\bibliographystyle{JHEP}

\providecommand{\href}[2]{#2}\begingroup\raggedright\endgroup

\end{document}